\newtheorem{proposition}{Proposition}
\begin{document}
\title{SecBeam: Securing mmWave Beam Alignment  against Beam-Stealing Attacks} 

\author{
\IEEEauthorblockN{Jingcheng Li, Loukas Lazos, Ming Li}
\IEEEauthorblockA{\textit{University of Arizona}\\
\{jli2972, llazos, lim\}@arizona.edu}
}

\maketitle
\begin{abstract}
Millimeter wave (mmWave) communications employ narrow-beam directional communications to compensate for the high path loss at mmWave frequencies. Compared to their omnidirectional counterparts, an additional step of aligning the transmitter's and receiver's antennas is required. In current standards such as 802.11ad, this beam alignment process is implemented via an exhaustive search through the horizontal plane known as beam sweeping. 
However, the beam sweeping process is unauthenticated. As a result, an adversary, Mallory, can launch an active beam-stealing attack by injecting forged beacons of high power, forcing the legitimate devices to beamform towards her direction. Mallory is now in control of the communication link between the two devices, thus breaking the false sense of security given by the directionality of mmWave transmissions.

Prior works have added integrity protection to beam alignment messages to prevent forgeries. In this paper, we demonstrate a new beam-stealing attack that does not require message forging. We show that Mallory can amplify and relay a beam sweeping frame from her direction without altering its contents. Intuitively, cryptographic primitives cannot verify physical properties such as the SNR used in beam selection. We propose a new beam sweeping protocol called SecBeam that utilizes power/sector randomization and coarse angle-of-arrival information to detect amplify-and-relay attacks. We demonstrate the security and performance of SecBeam using an experimental mmWave platform and via ray-tracing simulations.

\end{abstract}

\maketitle

\section{Introduction}
Millimeter Wave (mmWave) communications are a key enabling technology for the Next-Generation of wireless networks \cite{7959169,7999294}. The abundance of available bandwidth offers unprecedented opportunities for  high data rates, ultra-low latency, and massive connectivity. However, the short wavelength poses new challenges at the physical layer due to the substantial signal attenuation (which is proportional to the center frequency) and susceptibility to blockage \cite{7999294}. To compensate for this high attenuation, mmWave devices employ high-gain directional transmissions which are possible due to the small antenna factor necessary at small wavelengths. New highly-directional antenna systems pack many antenna elements into a small area which can be electronically steered to form high-gain pencil like beams at any desired direction.

Compared to their omnidirectional counterparts, narrow-beam directional systems require continuous beam management \cite{wang2022beam}. The beam management process consists of the initial phase of beam alignment between the Tx and the Rx and the subsequent phase of beam tracking if any of the Tx or Rx are mobile, or the physical environment changes. Recent standards such as the IEEE 802.11ad \cite{IEEE:802.11ad} and its enhanced version IEEE 802.11ay \cite{ghasempour2017ieee}, describe beam alignment as an exhaustive search process called {\em sector-level sweep} (SLS) where all possible directions are scanned before the optimal Tx and Rx antenna orientations are decided. Specifically, one device acts as {\em the initiator $I$}, whereas the other acts as {\em the responder $R$.} The initiator sequentially sends sector sweep frames (beacons) in every candidate direction using narrow-beam transmissions while the responder receives in quasi-omni mode. 
\begin{figure}[t]
\centering
\includegraphics[width=0.80\linewidth]{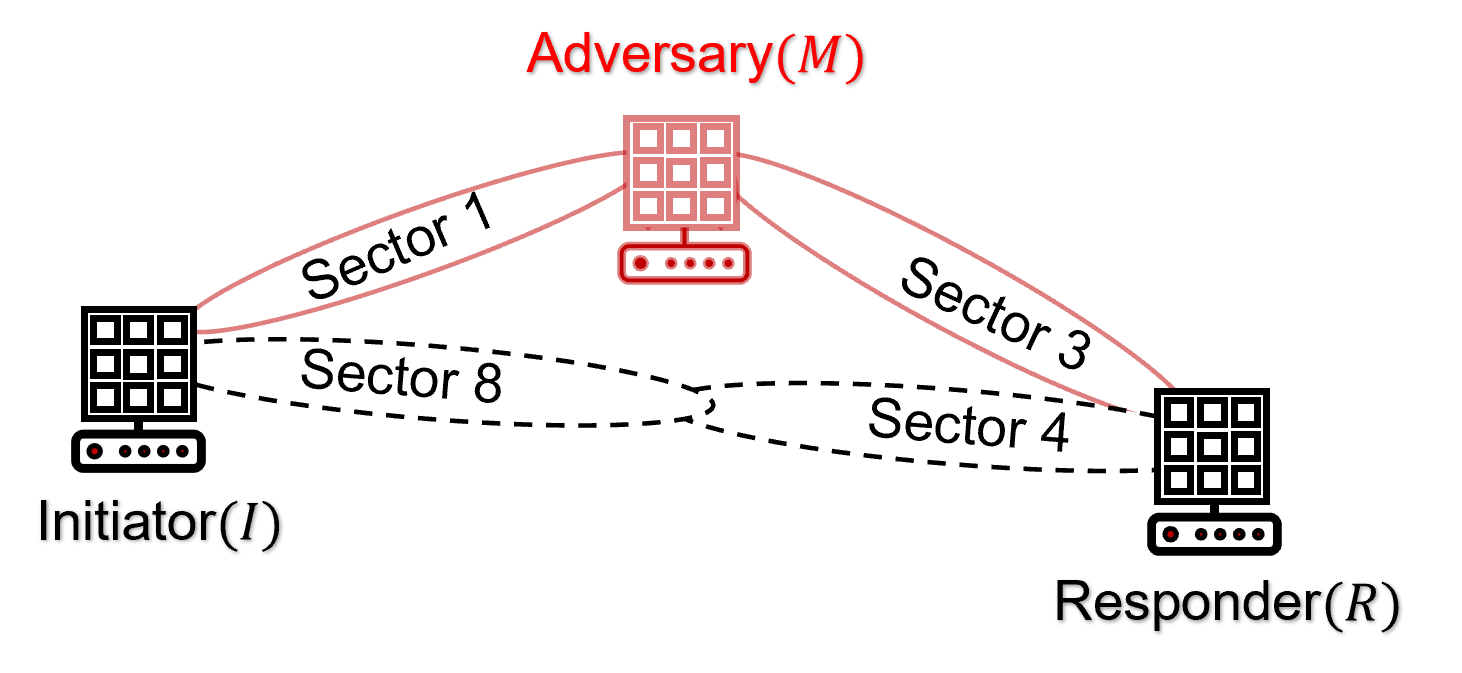}
\vspace{-0.1in}
\caption{Beam-stealing attack against the SLS protocol. $I$ and $R$ beamform towards the adversary instead of the LoS path.} 
\label{beam steal}
\vspace{-0.20in}
\end{figure}
The responder measures the received power from each receivable beacon and determines the initiator's direction (sector) with the highest signal-to-noise ratio (SNR). The process is repeated in the opposite direction to define the responder's sector with the highest SNR. Upon completion, the initiator's and the responder's beams are aligned to the direction that maximizes the cumulative antenna gain which can be a line-of-sight (LoS) path or a reflection path. 

The beacons in the SLS protocol, however, are unauthenticated and can be easily forged. An adversary, Mallory, can launch an active beam-stealing attack by injecting forged beacons of high power, forcing the legitimate Tx and Rx to beamform towards her direction \cite{steinmetzer2018beam}. This attack is shown in Figure~\ref{beam 
 steal} where the initiator selects sector 1 instead of sector 8 and the responder selects sector 3 instead of 4. Mallory is now in control of the communication link between $I$ and $R$, thus breaking the false sense of security given by the directionality of mmWave transmissions \cite{kim2017analysis, zhu2017secure}. She can eavesdrop on the link, apply traffic analysis to infer information from encrypted communications \cite{papadogiannaki2021survey},  inject and modify messages, or selectively drop them.

A few methods have been proposed to prevent beam-stealing attacks \cite{steinmetzer2018authenticating, yang2020man, tian2019secure, balakrishnan2019physical, wang2021exploiting, wang2020machine}. Steinmetzer $et\ al.$ introduced an authenticated beam sweep protocol \cite{steinmetzer2018authenticating}.  The main idea is to prevent the forging of beacons by guaranteeing their authenticity and freshness via cryptographic means (e.g., encrypt and sign beacon transmissions). However, {\em we postulate that a beam-stealing attack is still feasible against an authenticated beam sweep protocol.} Specifically, we demonstrate a new amplify-and-relay attack that does not require access to cryptographic primitives on behalf of Mallory.   

{\bf Amplify-and-relay beam-stealing attacks}. To achieve beam-stealing when beacons are authenticated, Mallory can launch a relay attack as shown in Figure~\ref{beam steal}. When $I$ transmits a beacon toward Mallory (sector 8), Mallory swiftly amplifies and  relays the beacon toward the responder. Mallory does not need to decode the beacon or modify its contents, as the beacon only needs to be received at higher power. Because the beacon is generated by $I$ and contains the correct authenticator, it passes authentication at $R$. Due to the signal amplification, $R$ is led to believe that sector 8 yields the highest SNR and provides this feedback to $I$. The relay attack is repeated in the $R$ to $I$ direction, leading $I$ to believe that the third sector of $R$ yields the highest SNR. 

\medskip

{\em The amplify-and-relay attack is feasible because the added cryptographic protections validate the contents of the beacons, but fail to authenticate the physical property used for sector selection which is the SNR. Practically, the attack distorts the wireless environment and cannot be dealt with by cryptographic means alone.}   
\medskip

Existing relay attacks that aim at breaking proximity constraints (e.g., \cite{francillon2011relay,francis2011practical}) are different from amplify-and-relay beam-stealing attacks in terms of goals and assumptions. 
Other known attacks of similar nature are signal cancellation attacks where a carefully placed adversary relays a transmitted signal with the purpose of causing destructive interference at the intended receiver \cite{popper2011investigation,moser2019digital}, which are harder to realize. An intuitive defense against relay attacks  is to differentiate between the legitimate transmitter and the relay. Fingerprinting techniques such as those in  \cite{balakrishnan2019physical, wang2021exploiting, wang2020machine} extract unique fingerprints from the devices such as antenna patterns, frequency offset and DC offset characteristics, etc. However, these methods require training, may utilize multiple access points and are not applicable when devices meet for the first time. Others, such as the angle-of-arrival-based detection method  \cite{yang2020man} and SNR fingerprinting methods \cite{wang2021exploiting} assume that the locations of the devices are fixed and known.

In this paper, we address the problem of securing the SLS protocol of 802.11ad against  amplify-and-relay beam-stealing attacks. {\bf Our main contributions are as follows:}
\begin{itemize}
    \item We demonstrate a new amplify-and-relay attack against the beam alignment process of mmWave communications. This attack defeats cryptographic defenses that rely on beacon authentication to detect beacon forging. 
    \item We develop an attack detection method that exploits the power-delay profile (PDP) of the mmWave channel. Assuming that the LoS path is available, any amplified relay path will have a longer delay and a higher SNR compared to the LoS path. However, a signal traveling through the LoS path should incur the least attenuation due to its free space propagation over a shorter distance. We use the discrepancies in the PDP to detect amplify-and-relay attacks over paths longer than the LoS path. 
    \item The PDP detection method assumes the presence of the LoS path, which may not be true. To generalize attack detection to all environments, we propose the SecBeam protocol that derives security from a combination of power randomization and coarse AoA detection. The focal idea is to hide the sector identity and randomize the beacon transmit power to prevent Mallory from fine-tuning her amplification power. Without knowledge of the sector ID and transmit power, Mallory has to amplify the beacons from all sectors she can hear. This leads to a disproportionate number of beacons arriving from the same direction, violating the geometrical mmWave channel model \cite{jameel2018propagation}. SecBeam requires minimal modifications to the original SLS protocol of 802.11ad.
    \item We analyze the security of SecBeam and show that it can detect an amplify-and-forward attacker that relays incoming beacons at fixed transmit power or at fixed amplification. We experimentally evaluate the performance and security of SecBeam on a 28GHz mmWave testbed in typical indoor environments. Our experiments demonstrate that SecBeam allows legitimate devices to optimally align their beams while detecting amplify-and-relay attacks with high probability.   

\end{itemize}

\section{The SLS protocol in  IEEE 802.11ad and attack}
\subsection{The SLS protocol in  IEEE 802.11ad}
\label{sector sweep}
The beam alignment process in the IEEE 802.11ad, also known as ``sector-level sweep (SLS)", aims at maximizing the combined Tx-Rx antenna gain \cite{IEEE:802.11ad}. This is done by sweeping the plane using fine-beam and quasi-beam antenna configurations and identifying the directions (sectors) of optimal antenna alignment. Sweeping is achieved electronically by controlling the individual antenna elements \cite{gao2021experimental}.  An example of a steerable antenna pattern is shown in Figure~\ref{fig:quasi}. The antenna can be set to quasi-omni mode for wider coverage, or to a fine-beam mode for the highest gain. 

\begin{figure}[t]
\centering
\includegraphics[width=0.8\linewidth]{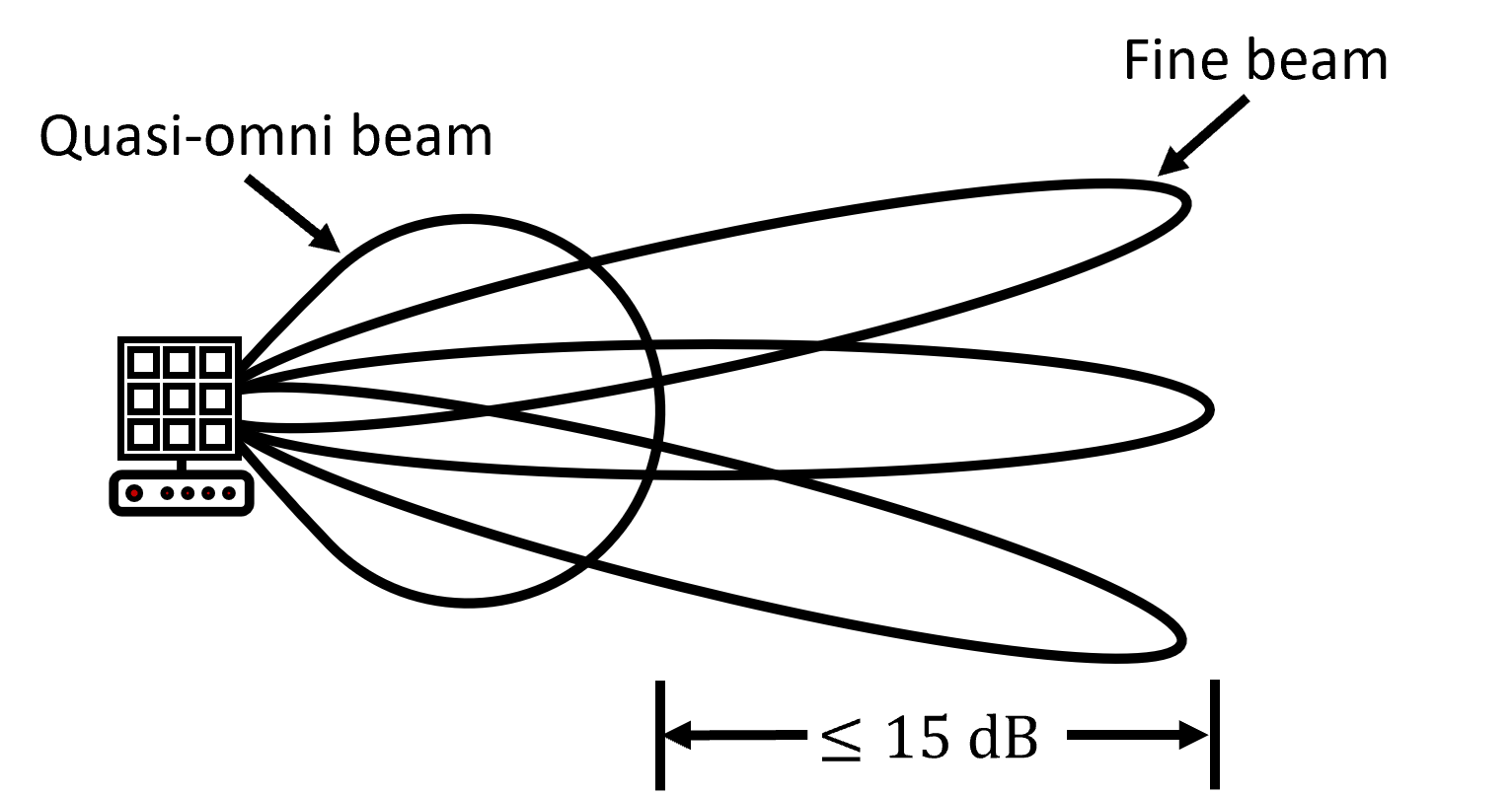}
\vspace{-0.1in}
\caption{A steerable mmWave antenna that can operate in fine-beam mode or quasi-omni mode.}
\label{fig:quasi}
\vspace{-0.20in}
\end{figure}

The main idea of SLS is shown in Figure~\ref{System model}. One device assumes the role of the initiator ($I$) whereas the other assumes the role of the responder ($R$).  The initiator sets its antenna in fine-beam mode and sweeps through the plane transmitting sector sweep frames (SSF). The receiver operates in quasi-omni mode, recording the received SNR from each sector as shown in Figure~\ref{System model}(a). In the next phase, the two devices switch roles. The responder transmits in fine-beam mode while the initiator is in quasi-omni mode. Additionally, the responder provides feedback about the optimal initiator's sector (highest SNR), as recorded from the previous phase. Finally, the initiator provides feedback to the responder about the optimal responder's sector and the two antennas are aligned as shown in Figure~\ref{System model}(b).  In detail, the SLS protocol consists of the three phases shown in Figure~\ref{802.11 and attack}(a). 

\begin{figure}[t]%
\centering
\setlength{\tabcolsep}{-2pt}
\begin{tabular}{c}
  \includegraphics[width=0.8\columnwidth]{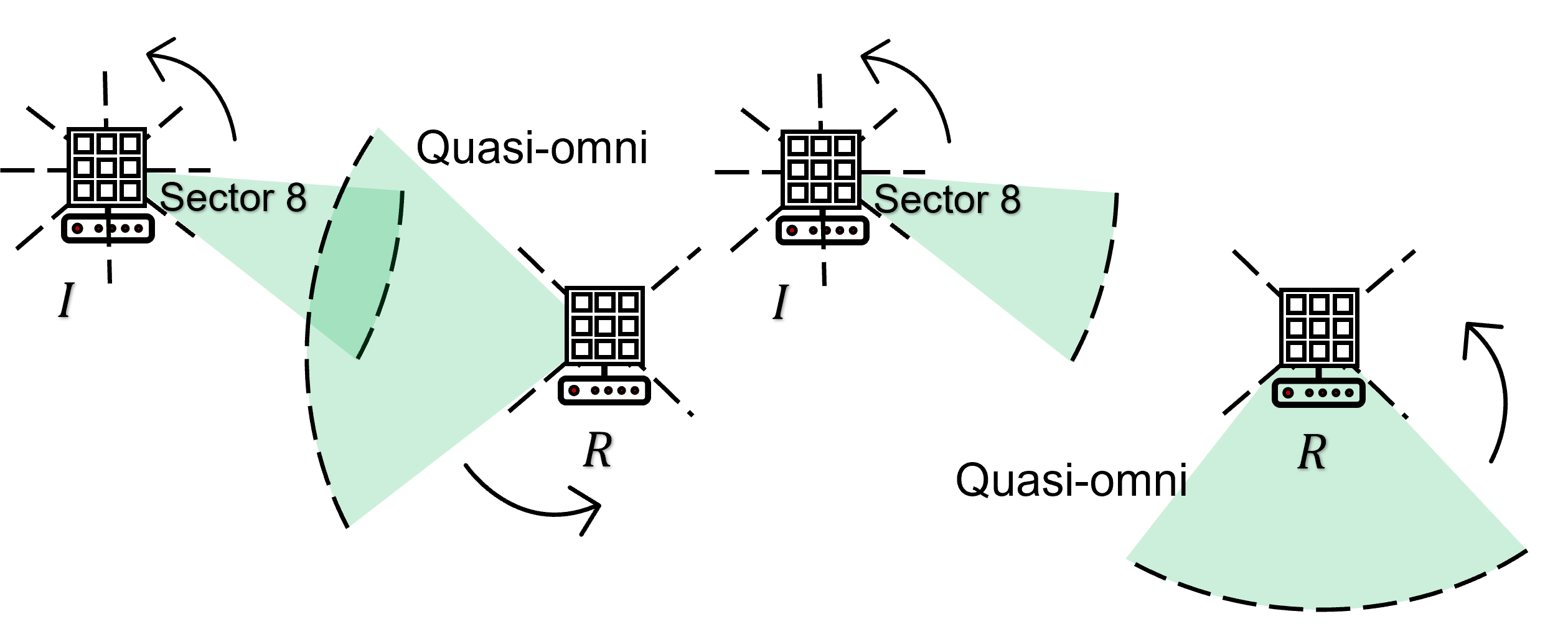} \\
  (a) \hspace{90pt}  (b)\\
  \includegraphics[width=0.8\columnwidth]{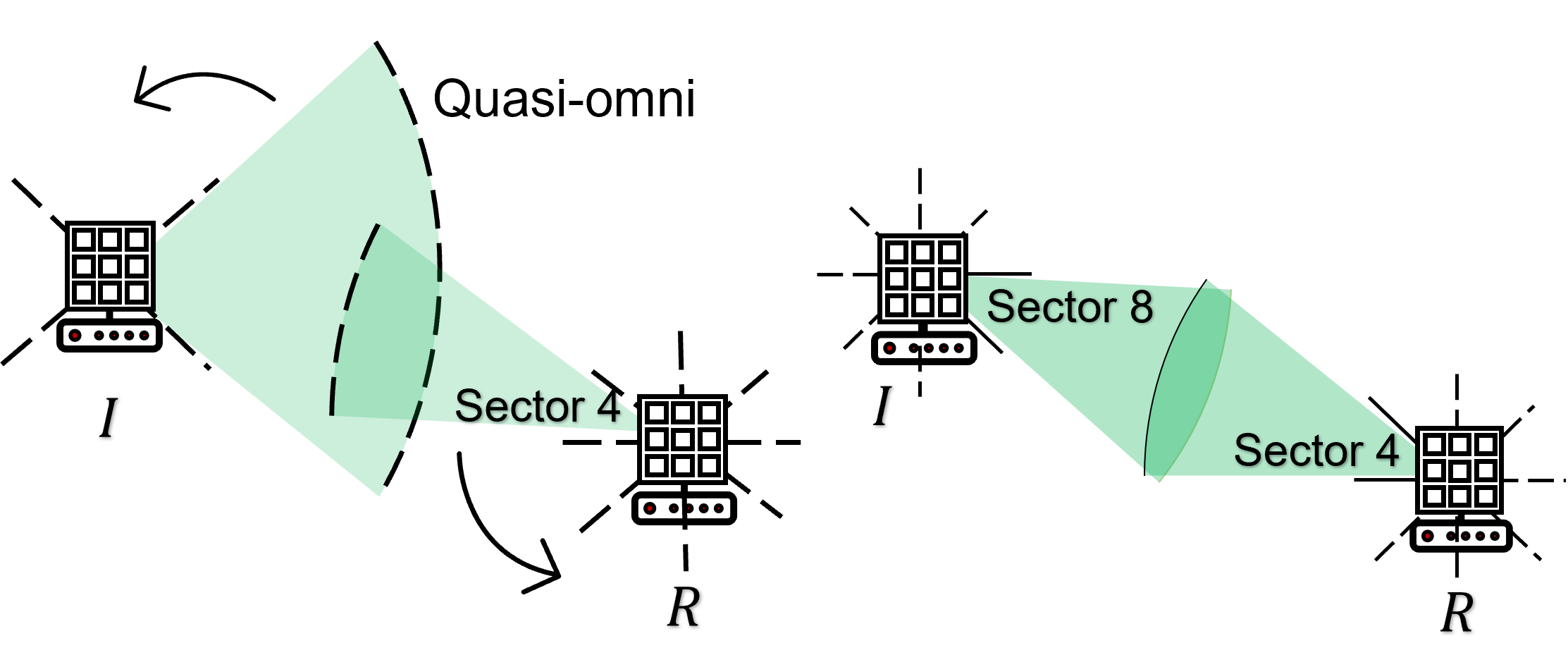} \\ 
  (c) \hspace{90pt} (d)
\end{tabular}
\vspace{-0.1in}
\caption{The SLS process of 802.11ad: (a) $I$ sweeps through the plane in fine-beam mode while $R$ receives in quasi-omni mode, (b) $I$ repeats the sector sweep for every quasi-omni pattern of $R$, (c) $R$ sweeps through the plane in fine-beam mode  while $I$ receives in quasi-omni mode, (d) The antennas of $I$ and $R$ are optimally aligned.}
\label{System model}
\vspace{-0.1in}
\end{figure}


\noindent
\textbf{\textit{Initiator sector sweep phase.}} 
\begin{enumerate}
\item{The plane is divided into $N$ sectors $S_1, S_2,\ldots S_N.$ The initiator $I$ sweeps through each $S_i$ and transmits sector sweep frame (SSF) $SSF_I(i)$, using the fine-beam mode.} 


\item{The responder $R$ measures the SNR for each received SSF, while in quasi-omni mode.}
\item{Steps 1 and 2 are repeated for every quasi-omni antenna orientation in $R$ to cover $R$'s plane. The responder records the sector $S^{\ast}_I$ with the highest SNR.}
\end{enumerate}
\noindent
\textbf{\textit{Responder sector sweep phase.}} 
\begin{enumerate}
\setcounter{enumi}{3}
\item{The responder switches to fine-beam mode whereas the initiator switches to quasi-omni mode. The responder $R$ sweeps through each sector $S_i$ and transmits $SSF_R(i)$, using the fine-beam mode. Further, the responder populates the sector sweep (SSW) feedback field of the SSF with the $I$'s optimal transmitting secctor  $S^{\ast}_I,$ as identified in the previous phase.}  
\item{The initiator records the sector $S^{\ast}_R$ with the highest received SNR and also learns $S^{\ast}_I$ from $R$'s feedback.}
\end{enumerate}
\noindent
\textbf{\textit{Acknowledgement phase.}} 
\begin{enumerate}
\setcounter{enumi}{5}
\item{The initiator sends a SSW-Feedback frame to $R$ indicating $R$'s optimal transmitting sector $S^{\ast}_R$. The feedback is transmitted in fine-beam mode using sector $S^{\ast}_I$, whereas the responder is in quasi-mode.}
\item{At the last step, the responder transmits a sector sweep ACK to the initiator using sector $S^{\ast}_R$.} 
\item{The two antennas are aligned at the $S^{\ast}_I$-$S^{\ast}_R$ direction.} 
\end{enumerate}

\begin{figure}[t]%
\centering
\setlength{\tabcolsep}{-3pt}
\begin{tabular}{cc}
  \includegraphics[width=0.5\columnwidth]{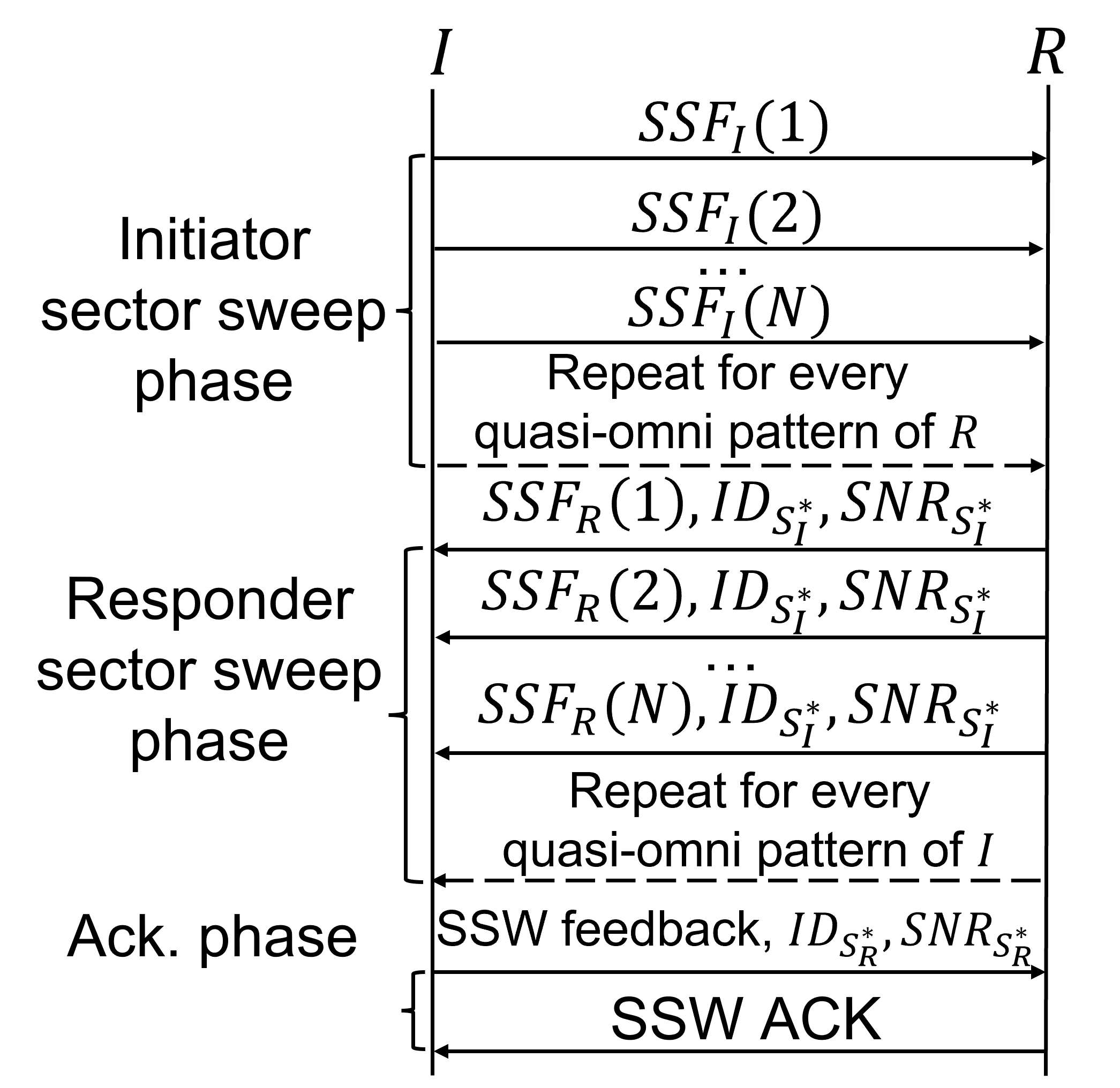} &
  \includegraphics[width=0.5\columnwidth]{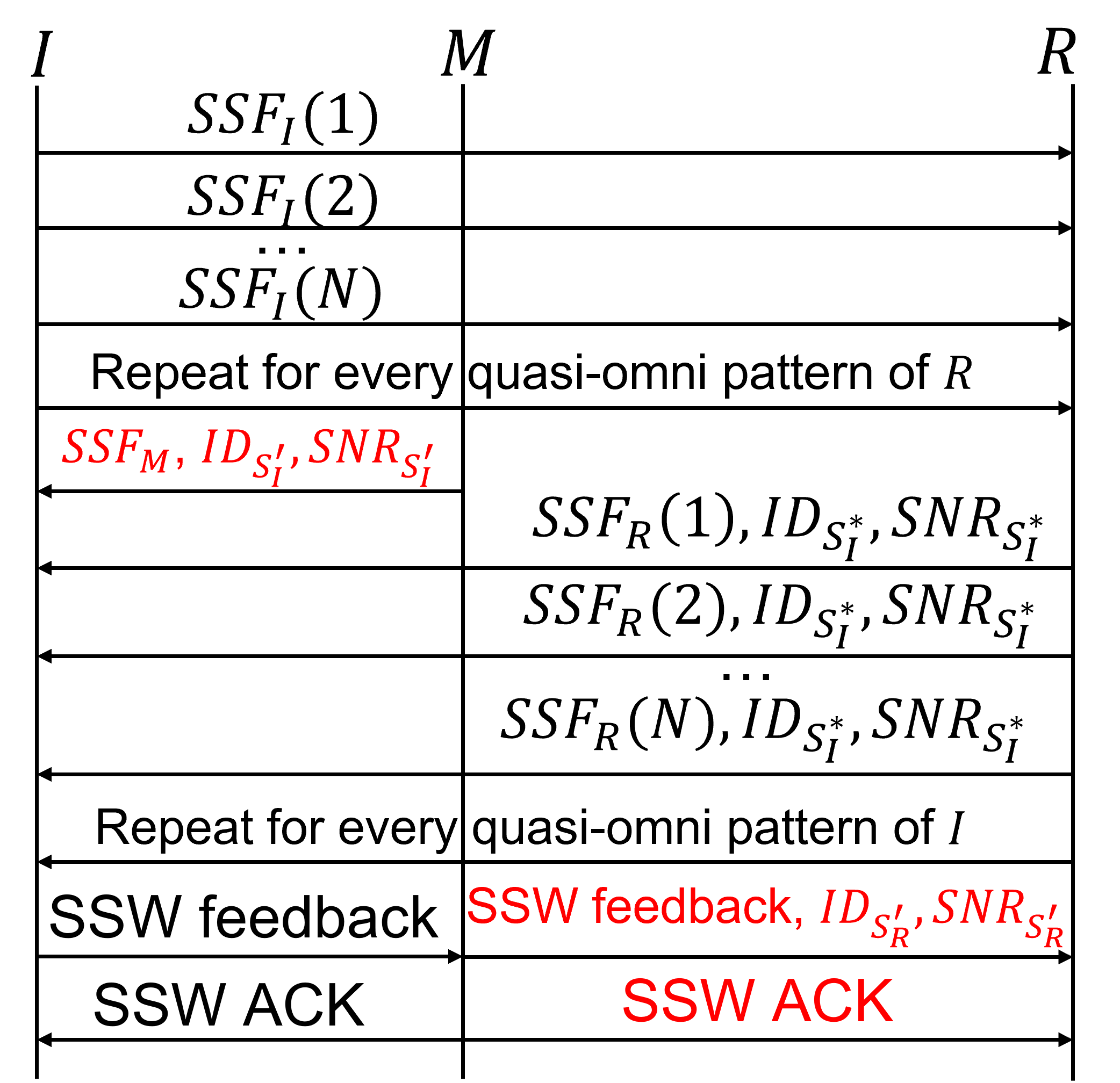} \\ 
  (a) & (b) 
\end{tabular}
\vspace{-0.1in}
\caption{(a) SLS protocol in IEEE 802.11ad standard, (b) beam-stealing attack against SLS protocol}
\label{802.11 and attack}
\vspace{-0.1in}
\end{figure}


In the example of Fig. \ref{System model}, the responder identifies $S_I^{\ast}  = S_8$ whereas the initiator identifies $S_R^{\ast}  = S_4$.  The two devices use the $S_8-S_4$ antenna orientation for communications.

\subsection{Beam-Stealing Attack Against the SLS Protocol}
\label{beam steal SLS}
In this section, we describe the beam-stealing attack against SLS protocol in IEEE 802.11ad presented by Steinmetzer et al. \cite{steinmetzer2018beam}. The attack is launched by $\mathcal{M}$ who forges the SSF frames. The timeline of the attack is shown in Fig.~\ref{802.11 and attack}(b). 

Mallory positions herself within range of both $I$ and $R$ (though the attack can be also launched by two colluding devices). During the initiator sector sweep phase, Mallory measures the SNR from all sectors of $I$ and chooses the one that yields the highest SNR, denoted by $S^\prime_I$. Because the communication protocol lacks an authentication mechanism, Mallory impersonates $R$ by forging $R$'s SSFs and indicating $S^\prime_I$ as the best sector. The same process is repeated during the responder sector sweep phase, where $\mathcal{M}$ chooses the best sector $S^\prime_R$ and informs $R$ through forged SSF frames on behalf of $I$. Ultimately, $I$ and $R$ choose sectors $S^\prime_I$ and $S^\prime_R$ that beamform towards $\mathcal{M}$, as illustrated in Figure~\ref{beam steal}. $I$ and $R$ choose sectors $S_1$ and $S_3$ instead of $S_8$ and $S_4$. Mallory gains full control over the communication between the victim devices, enabling a range of potential attacks such as eavesdropping, traffic analysis, data modification and injection.

\subsection{Authenticated Beam Sweeping} 

It is evident that the beam stealing attack is possible due to the  lack of authentication and integrity protection on the SSF frames. To counter this vulnerability, Steinmetzer et al. proposed an authenticated version of SLS where $I$ and $R$ are assumed to possess public/private key pairs \cite{steinmetzer2018authenticating} which are used to securely establish a session key $s$. The session key is used for mutual authentication and freshness to protect against message forgeries and replay attacks. Specifically, during the initiator sector sweep phase, $I$ generates a unique cryptographic nonce $v_I$ that is sent with every SSF frame. R receives the SSF frames, determines the best sector $S^{\ast}_I$, and computes an authenticator $\alpha_R$ of length $l_\alpha$ as 
\[
\alpha_R = \text{auth}_{l_{\alpha}}(S^{\ast}_I, v_I,s)=\text{trunc}(h(S^{\ast}_I,v_I,s),l_\alpha),
\]
where $\text{auth}_{l_{\alpha}}(S^{\ast}_I,v_I,s)$ computes an authenticator of size $l_\alpha$ as a truncated hash function of the concatenation of $S^{\ast}_I, v_I$ and $s$. In the responder sweep phase, $R$ generates a nonce $v_R$. The initiator computes its authenticator as $\alpha_I = \text{auth}_{l_{\alpha}}(S^{\ast}_R, v_R,s)$ for the optimally selected sector $S^{\ast}_R$. After both devices complete their sweeps, $I$ sends  $S^{\ast}_R$ and $\alpha_I$ in a feedback frame and $R$ acknowledges with $S^{\ast}_I$ and $\alpha_R$. Finally, both devices verify the received authenticators. If both verifications are successful, $I$ and $R$ set their sectors accordingly. Without access to the shared secret $s$, $M$ cannot create a forgery. Moreover, the nonces $v_I$ and $v_R$ guarantee freshness. 

\begin{figure}[t]
\centering
\includegraphics[width=0.7\linewidth]{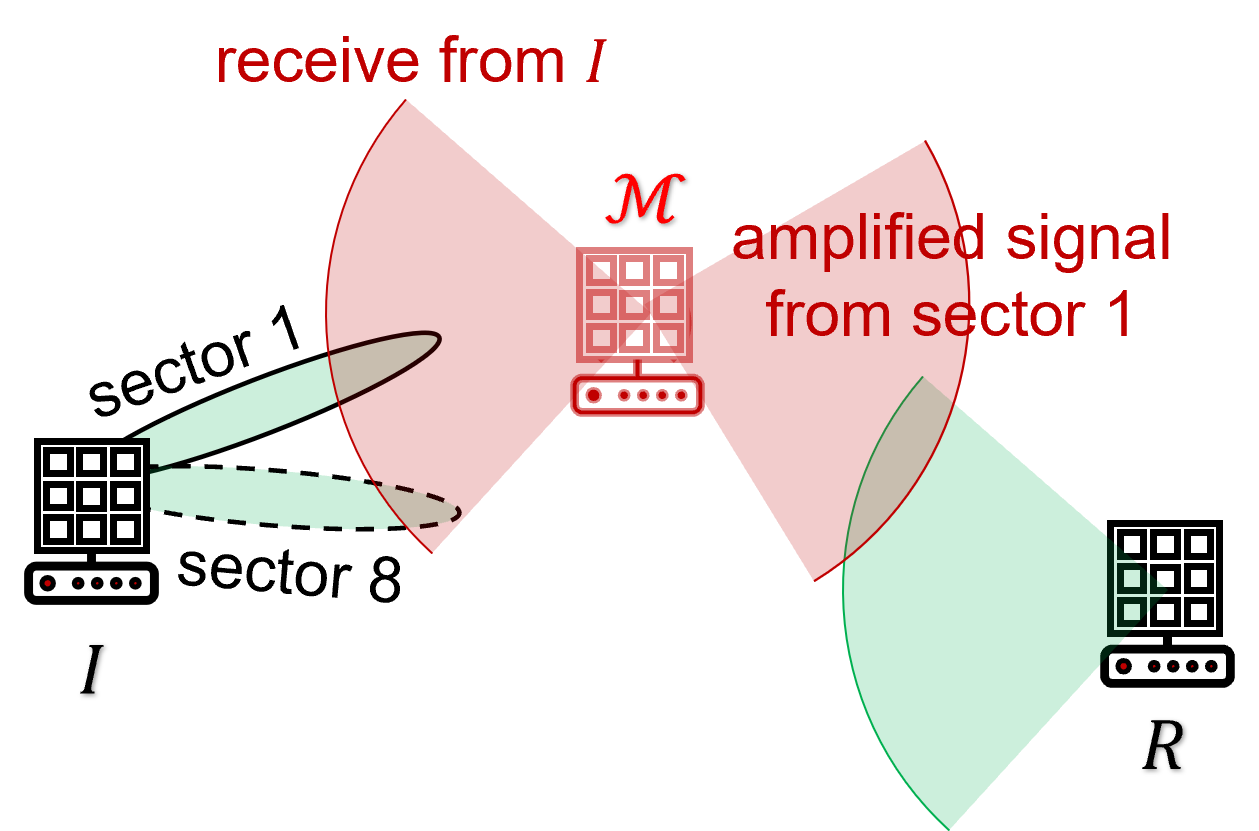}
\vspace{-0.1in}
\caption{Amplify-and-relay attack. $\mathcal{M}$ receives an authentic SSF while $I$ is scanning sector 1. $\mathcal{M}$  amplifies and relays the SSF without decoding. The responder authenticates the SSF and perceives sector 1 to yield the highest SNR. } 
\label{Beam steal attack}
\vspace{-0.20in}
\end{figure}

\section{Amplify-and-relay Beam-Stealing Attack}
\label{Amp-and-relay beam stealing}
The authenticated beam sweep protocol proposed in \cite{steinmetzer2018authenticating} is still vulnerable to beam stealing attacks. This is because the cryptographic protections cannot prevent Mallory from {\em forging the wireless environment} by simply amplifying authentic SSFs.  We present a new type of attack that we call {\em amplify-and-relay beam stealing} that allows Mallory to attract $I$'s and $R$'s beams without access to the shared secret $s.$

The attack operates as follows. Mallory positions herself at some location between $I$ and $M$ away from the LoS (we later examine Mallory's candidate attack locations). Mallory is equipped with two directional antennas (or a patch antenna that can operate two beams), which point towards the general direction of $I$ and $R$, respectively.  These antennas operate in quasi-omni mode to allow reception of $I$ and $R$'s transmissions, as Mallory may not know the exact position of the two devices and the wireless environment (e.g., blockage). During the initiator sweeping phase, Mallory receives an  authenticated SSF using the antenna pointing to $I$, amplifies it, and relays it intact with the antenna pointing to $R$.    The responder authenticates the SSF and records the highest SNR from $M$'s direction due to the applied amplification. The same amplification and relay is applied in the opposite direction during the responder sweep frame phase, leading $I$ to indicate $M$'s direction as the beast one for beam alignment.

We emphasize that although timely relay of the SSFs is helpful, it is not necessary. First, the SLS protocol does not specify stringent timing constraints between the transmission of successive SSFs at different sectors. Second, some SSFs are not ever received by the intended receiver ($I$ or $R$) due to antenna misalignment. Therefore, even if the relayed message arrives late, it is not a replay of the direct transmission and freshness is not violated.

Mallory can follow two amplification strategies when launching the beam stealing attack. The first is the \textit{fixed power (FP)} strategy in which $\mathcal{M}$ relays received signal(s) at a fixed transmit power, regardless of the received power. The closer distance of $\mathcal{M}$ to $I$ and $R$ (if $\mathcal{M}$ is in between) or a higher transmit power than the legitimate devices guarantees that $\mathcal{M}$'s relayed SSFs will be received  with the highest SNR. However, if Mallory transmits at fixed power for all incoming SSFs, the receiver will measure the same SNR for several transmit sectors leading to easy attack detection. To succeed in her attack, Mallory may relay at high power only one SSF that she received.  An alternate strategy is to apply \textit{fixed amplification (FA)} to any incoming SSF leading at varying received power at the receiver. The main benefit of the $FA$ strategy is that the received SNR is maximized when the transmitter ($I$ or $R$) is best aligned with $\mathcal{M}$ (due to the highest received power at $\mathcal{M}$) leading to a natural alignment of $I$ and $R$ with $\mathcal{M}.$

\begin{figure}[t]%
\centering
\setlength{\tabcolsep}{-3pt}
\begin{tabular}{c}
  \includegraphics[width=0.9\columnwidth]{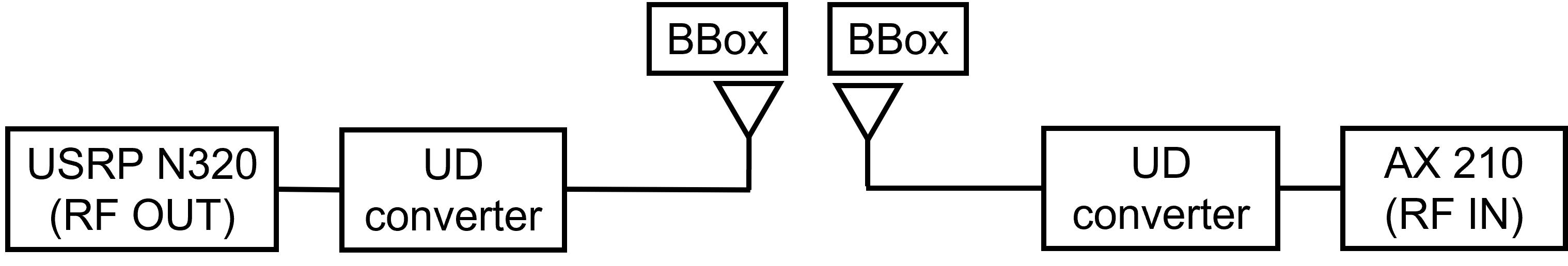} \\
  (a) RF chain of $I$ and $R$ \\
  \includegraphics[width=0.7\columnwidth]{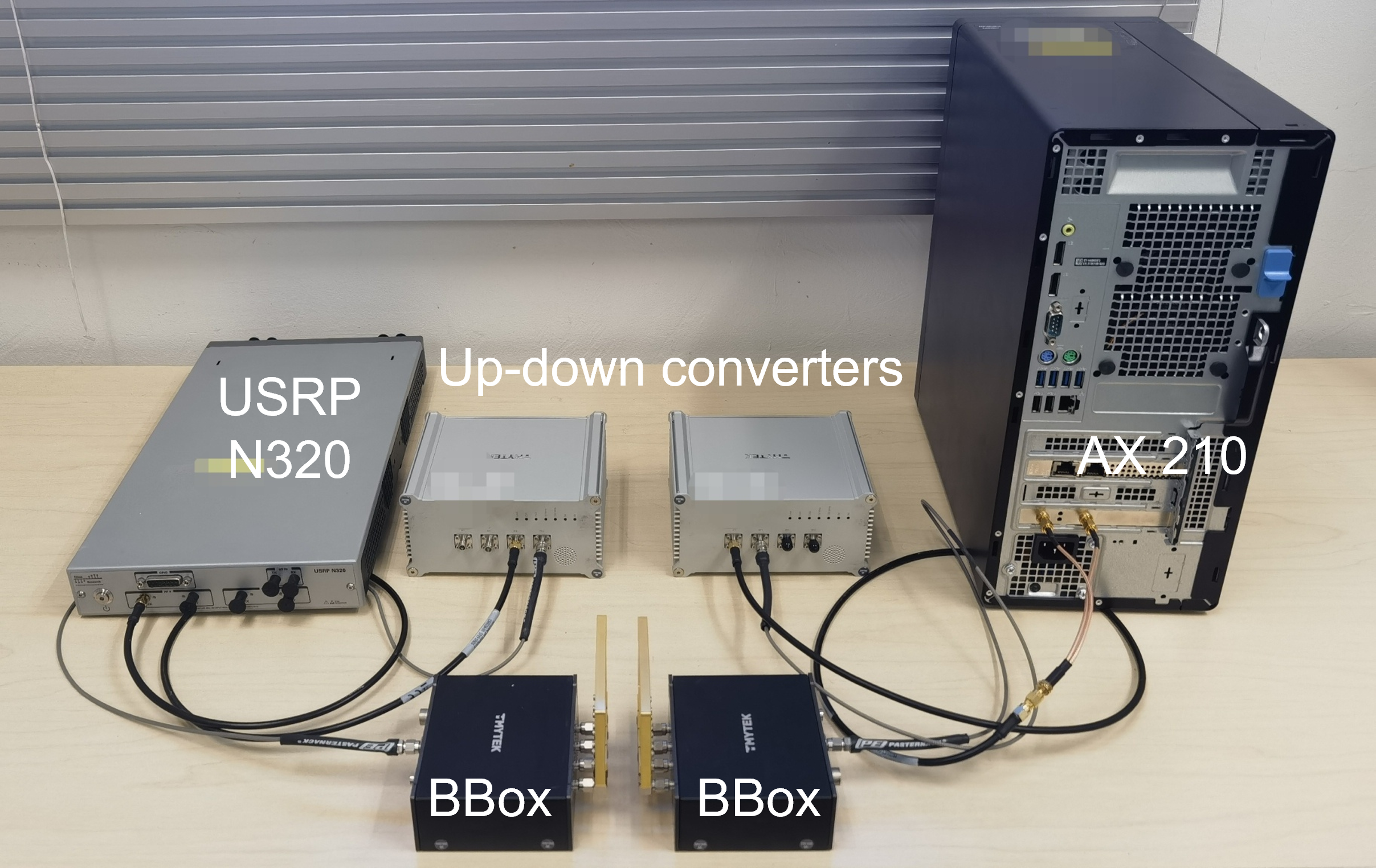} \\ 
  (b) Hardware used in the platform \\
  \includegraphics[width=0.7\columnwidth]{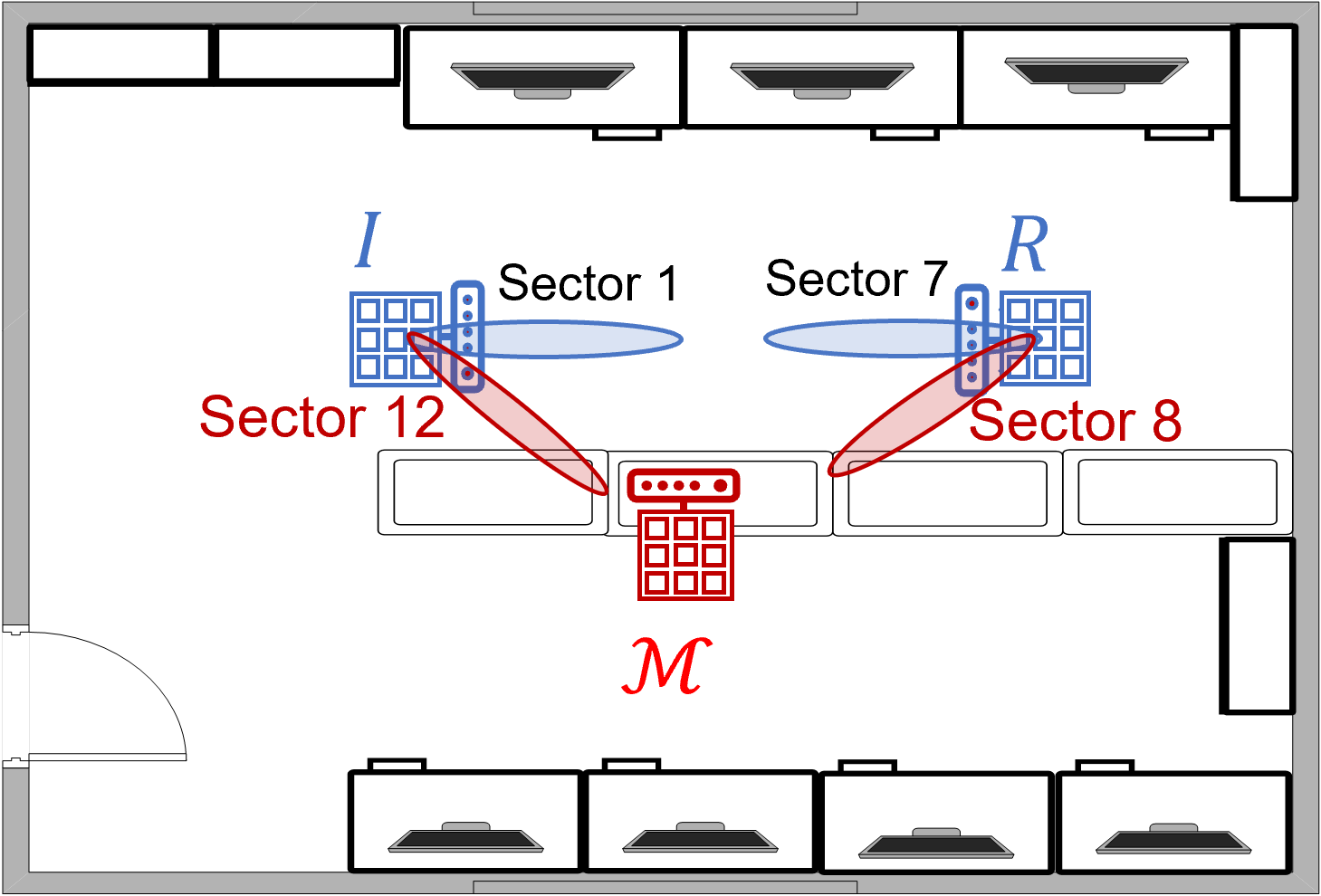} \\
  (c) Experimental topology
\end{tabular}
\caption{The mmWave experimental platform.}
\label{fig:platform}
\vspace{-0.20in}
\end{figure}

\begin{figure*}%
\centering
\setlength{\tabcolsep}{-6pt}
\begin{tabular}{cccc}
  \includegraphics[width=0.57\columnwidth]{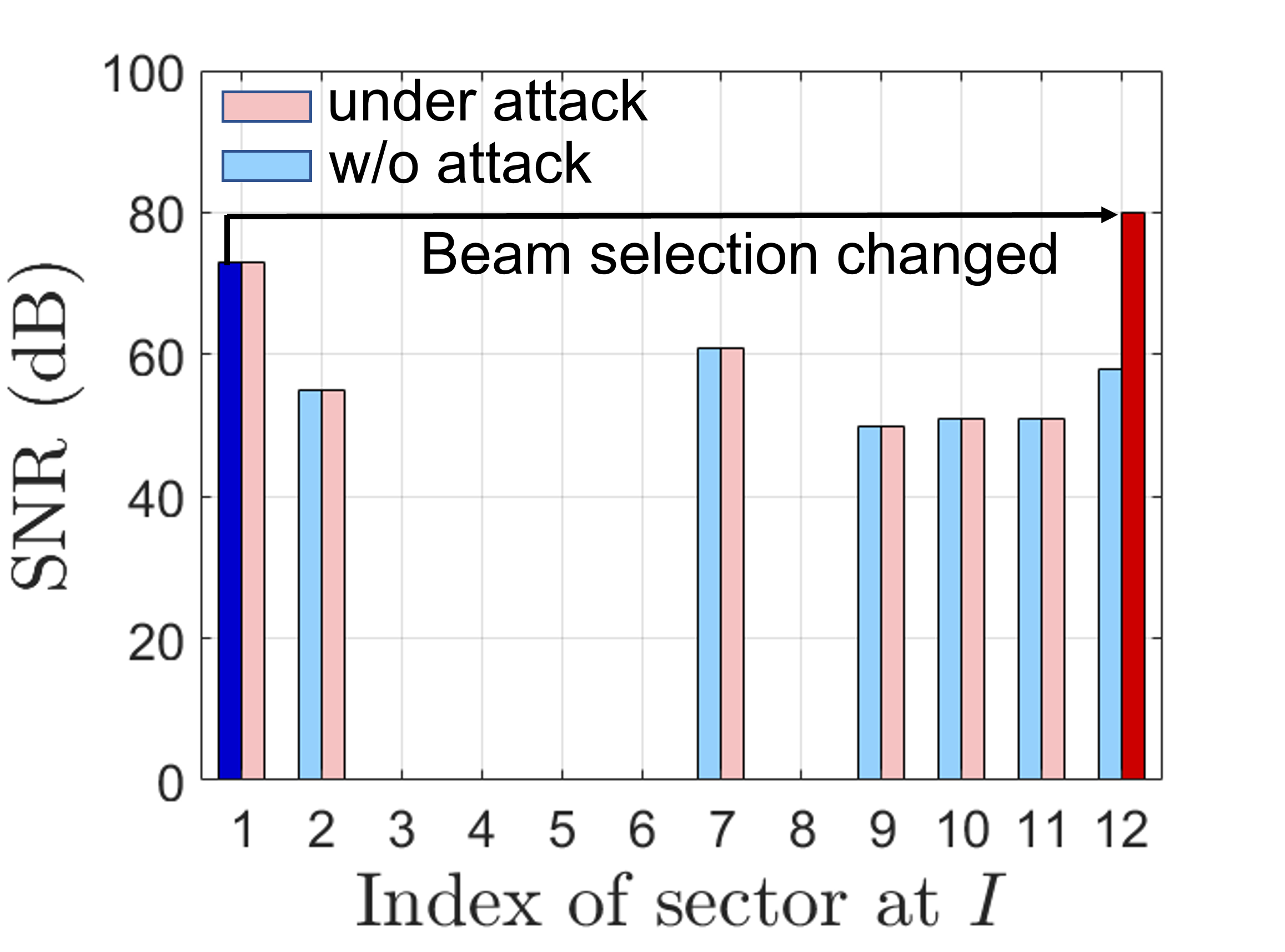} &
  \includegraphics[width=0.57\columnwidth]{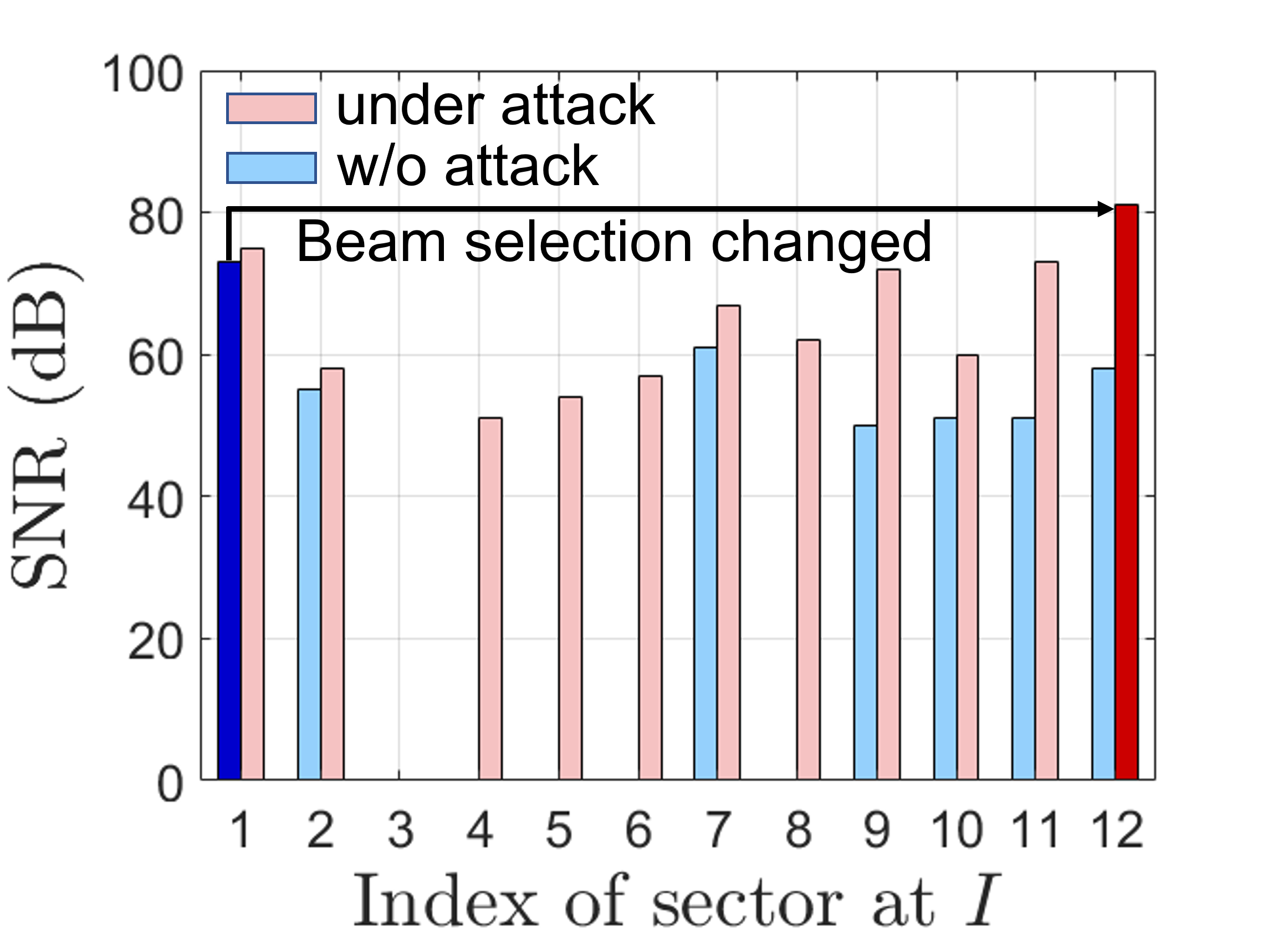} & 
  \includegraphics[width=0.57\columnwidth]{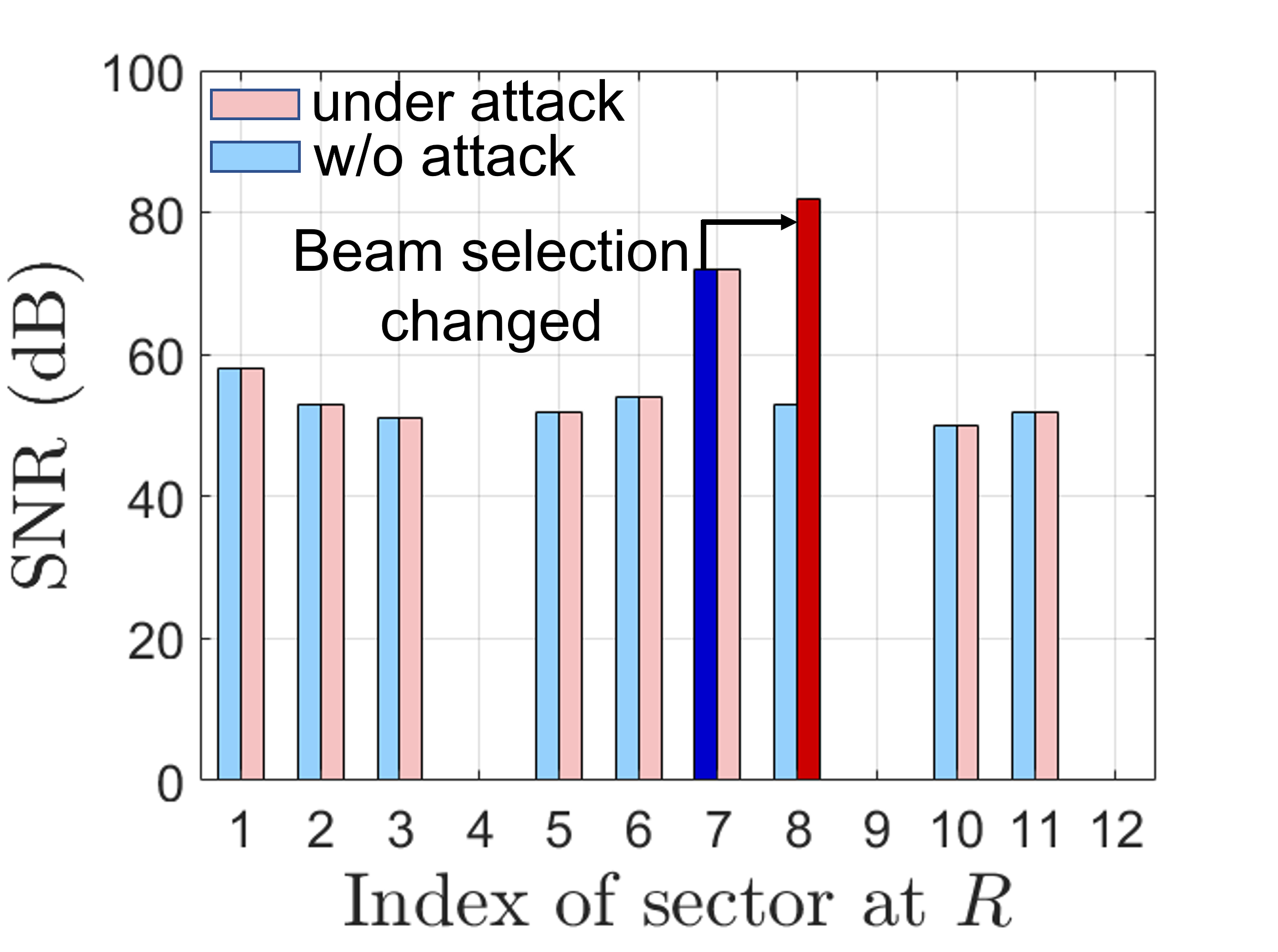} & 
  \includegraphics[width=0.57\columnwidth]{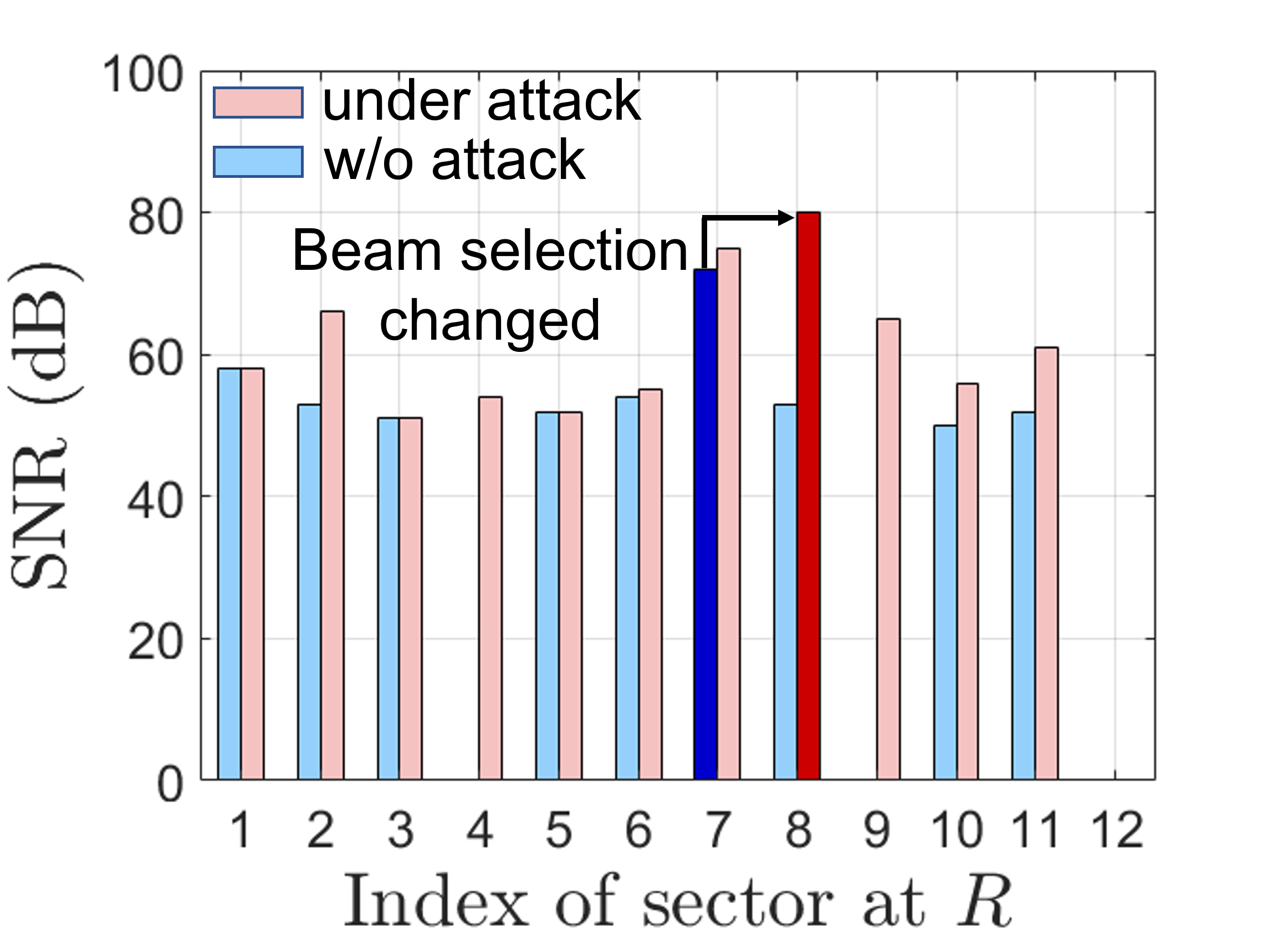} \\
  (a)  & (b) & (c) & (d)
\end{tabular}
\caption{(a)Amplify-and-relay beam-stealing attack on the initiator sector sweep using $FP$ strategy, (b) attack on the initiator sector sweep using $FA$ strategy, (c) attack on the responder sector sweep using $FP$ strategy, (d) attack on the responder sector sweep using $FA$ strategy.}
\label{fig:setup2 and result}
\vspace{-0.2in}
\end{figure*}

\subsection{Demonstration of the Amplify-and-Relay Attack}
\label{feasibility amplify attack}
In this section, we experimentally demonstrate the amplify-and-relay attack. 

{\bf Experimental Setup.} Our experimental setup consisted of USRPs and Wi-Fi cards connected to the TMYTEK mmWave platform. Specifically, we implemented the baseband functionality of the initiator and Mallory on two Ettus USRP N320s \cite{N320}. Each USRP was connected to a TMYTEK up-down (UD) converter \cite{UDconverter} that upscaled the signal from the sub-6GHz to a center frequency of 28GHz with 160MHz bandwidth. Transmission took place via a directional BBox lite patch antenna with 16 elements. The antenna supports electronically steerable sectors with half-power beamwidth of $30^\circ$, covering the plane with 12 non-overlapping sectors. The maximum transmit power of the USRPs was 30dBm.  The platform configuration is shown in Fig.~\ref{fig:platform}.

The responder was implemented with an AX 210 Wi-Fi card to allow for the extraction of CSI measurements by the PicoScence CSI tool \cite{jiang2021eliminating}. The signal was received using the BBox lite antenna and downconverted to the sub-6 GHz band using the UD converter. The card extract CSI measurements and computed the RSS. Moreover, we used the  Wi-Fi card to measure the noise floor, which remained consistent at -128dBm throughout all the experiments. All hardware was controlled by a desktop computer and synchronized using the Ettus OctoClock-G module.  

The experimental topology is shown in Fig.~\ref{fig:setup2 and result}(a). We positioned all three devices in a laboratory of rectangular shape with typical office furniture (bookcases, desks with workstations, whiteboards, etc.) $I$ and $R$ were placed 4.3m apart whereas $\mathcal{M}$ was placed in between (but not in the LoS) at distances 2.4m and 2.7m from $I$ and $R$, respectively. To mimic the SLS protocol in IEEE 802.11ad as we described in Section~\ref{sector sweep}, we fixed the $R$'s beam and employed the TMXLAB Kit to electronically steer $I$'s beam. The transmitter conducted a $360^\circ$ sweep of the entire plane. Upon completion of one sweep, $R$ switched to the next fixed receive direction while $I$ repeated the sweep. This sector sweep process was repeated in the other direction (with $I$ and $R$ switching roles) to determine the optimal beam of $R$. 

During a fixed power attack, the legitimate devices and Mallory transmitted at 10dBm, whereas during a fixed amplification attack, $\mathcal{M}$ amplified its received signal by 70dB. Mallory was synchronized with the initiator and the responder via the the Ettus OctoClock-G module and transmitted the same SSF frame at the same time as the initiator or the responder (depending on the round).

{\bf Attack Evaluation.} In Fig.~\ref{fig:setup2 and result}, we show the highest measured SNR at the responder during the initiator sweep phase. Figure~\ref{fig:setup2 and result}(a) corresponds to a fixed power attack. Without an attack, the responder measures the highest SNR when the initiator transmits at sector 1. During the attack, Mallory  only amplified beam sector 12 that is best aligned towards her. Therefore, the responder measures the same SNR in all other sectors except sector 12, where the SNR is about 10dB higher than that of sector 1. This results in informing $I$ that its best transmitting sector is sector 12, thus beamforming towards $\mathcal{M}.$ Figure~\ref{fig:setup2 and result}(b) shows the fixed amplification attack case that causes the same optimal sector change as the $FP$ attack. One main difference is that during an $FA$ attack, all sectors are amplified at the same gain thus causing an elevated SNR at $R$ compared to the absence of the attack. 

Figures~\ref{fig:setup2 and result}(c) and \ref{fig:setup2 and result}(d) show the $FP$ and $FA$ attacks launched during the responder sector sweep phase. In both attacks, the beam selection shifts from sector 7 to sector 8 pointing to $\mathcal{M}$ instead of the LoS. {\em We note that when $I$ and $R$ use sectors 12 and 8 respectively, they can no longer directly communicate.} The only way to communicate is via $\mathcal{M}.$ To verify the generality of the attack with respect to the adversary positioning, we repeated the experiments with $\mathcal{M}$ placed behind the initiator. The results of the second setup verify the attack feasibility and are presented in Appendix \ref{setup1}.

\section{A baseline Defense Method}

We first propose a baseline method for detecting amplify-and-relay attacks that exploits the invariable physical properties of propagation time. The main idea is to use the longer signal propagation and processing times of the relay path over the legitimate path. This method is specifically applicable to the mmWave environment where multipath is limited.  We use power delay profile (PDP) analysis to measure the timing of various paths and correlate it to the received power. As the mmWave signal attenuates rapidly with distance, it is expected that the LoS path is the strongest and takes the least delay to propagate to the receiver. Detecting high SNR at a path with higher delay is indicative of an amplify-and-relay attack, provided that the LoS path (or any other shorter path) is unobstructed.  

The PDP can be derived from the receiver's channel impulse response (CIR) \cite{rappaport2010wireless}. Let $X(f, t)$ and $Y(f, t)$ be the transmitted and received signals in the frequency domain and $H(f, t)$ be the CSI. Then $H(f, t)$ can be expressed as \[  
    H(f, t)= \frac{Y(f, t)}{X(f, t)} 
    = e^{-j2 \pi \Delta ft} \sum_{k=1}^{N}a_k(f, t)e^{-j2 \pi f \tau_k(t)},
\]
where $e^{-j2 \pi \Delta ft}$ is the phase shift caused by the carrier frequency difference $\Delta f$ between $I$ and $R$, $N$ is the number of paths, $e^{-j2 \pi f \tau_k(t)}$ is the the phase shift caused by the delay $\tau_k(t)$ of the $k^{th}$ multipath, and $a_k(f, t)$ is the amplitude of the $k^{th}$ multipath. Then the CIR, $h(t, \tau)$ can be calculated by applying an IFFT to  the CSI,\[  
    h(t, \tau)=\sum_{k=1}^{N}a_k(t, \tau)e^{-j \theta_k(t, \tau)}\delta[\tau-\tau_k(t)]
    \label{CIR_eq}, \]
where $e^{-j \theta_k(t, \tau)}$ is the phase shift and $\delta [\cdot]$ is a delta function. If choosing a time period equal or smaller than the channel's coherence time, the CIR can be considered to be constant. Then the amplitude of the CIR, $|h(\tau)|$, can be extracted as,\[
    |h(\tau)|=\sum_{k=1}^{N}a_k(\tau)\delta(\tau-\tau_k),
    \label{CIR amplitude}
\]
and the PDP is $|h(\tau)|^2$. Instead of using the PDP of each transmit sector directly, we apply an additional step to calculate the {\em collective power delay profile (CPDP)}. Specifically,  we extract the amplitude and time delay of {\em the highest amplitude path}  from the PDP of each SSF transmission and combine them in a single CPDP. This combination is possible only if time can be measured at high accuracy (which is possible at mmWave due to the available bandwidth) and if SSFs are transmitted periodically at precisely known times. 
In benign scenarios, the first peak of the CPDP corresponds to the LoS path between legitimate users. The signal strength received from this path is expected to be the strongest among all possible sectors. If a path that appears later has a stronger RSS, an attack is detected. Note that any reflections in mmWave not only travel longer distances, but also experience high attenuation due to the lower reflection coefficients as the small wavelength is comparable to surface imperfections.

\begin{figure}[t]
\centering
\includegraphics[width=0.5\linewidth]{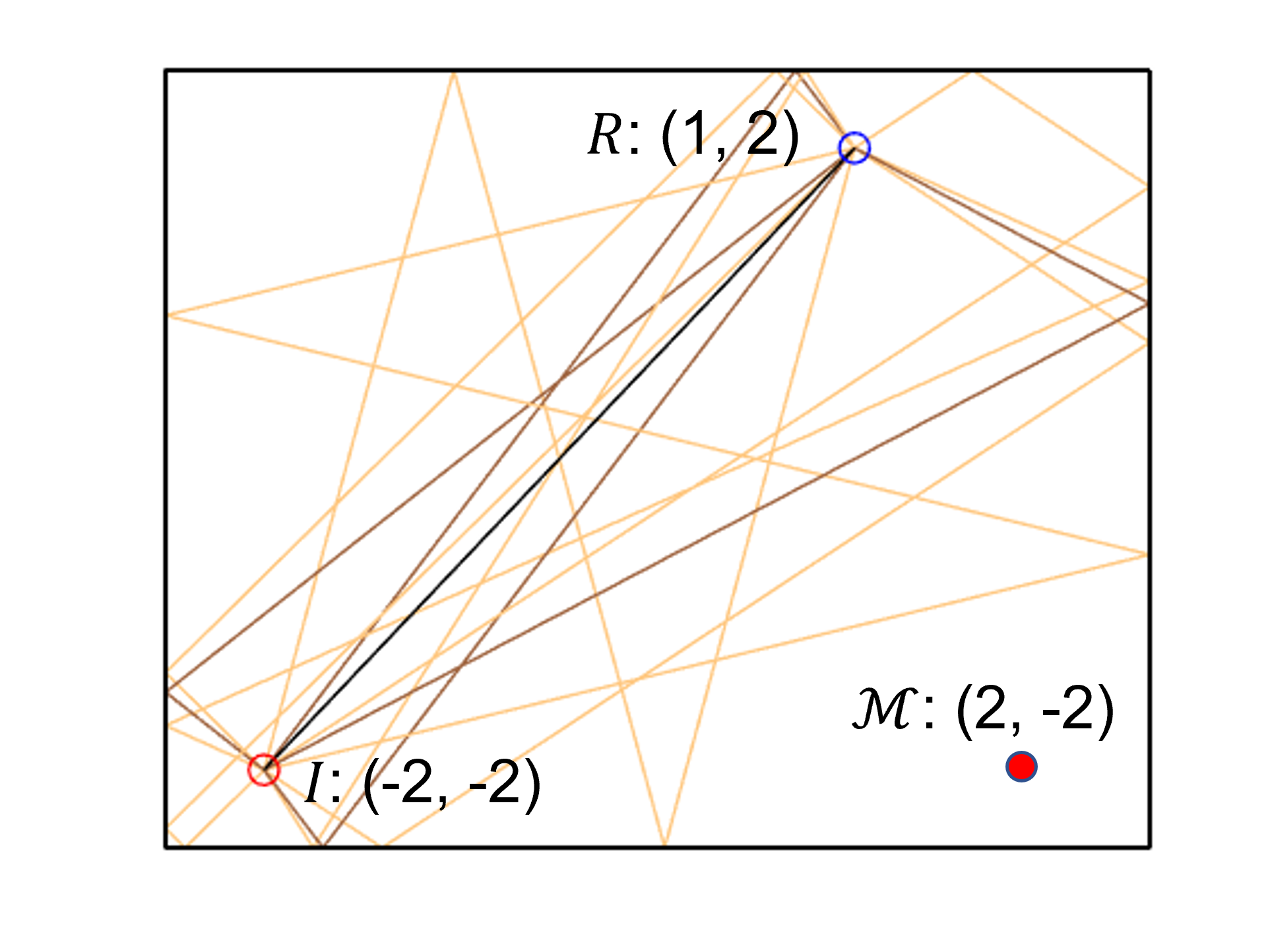}
\vspace{-0.1in}
\caption{Topology.} 
\label{cpdp layout}
\vspace{-0.20in}
\end{figure}

\begin{figure}[t]%
\centering
\setlength{\tabcolsep}{-4.5pt}
\begin{tabular}{cc}
  \includegraphics[width=0.55\columnwidth]{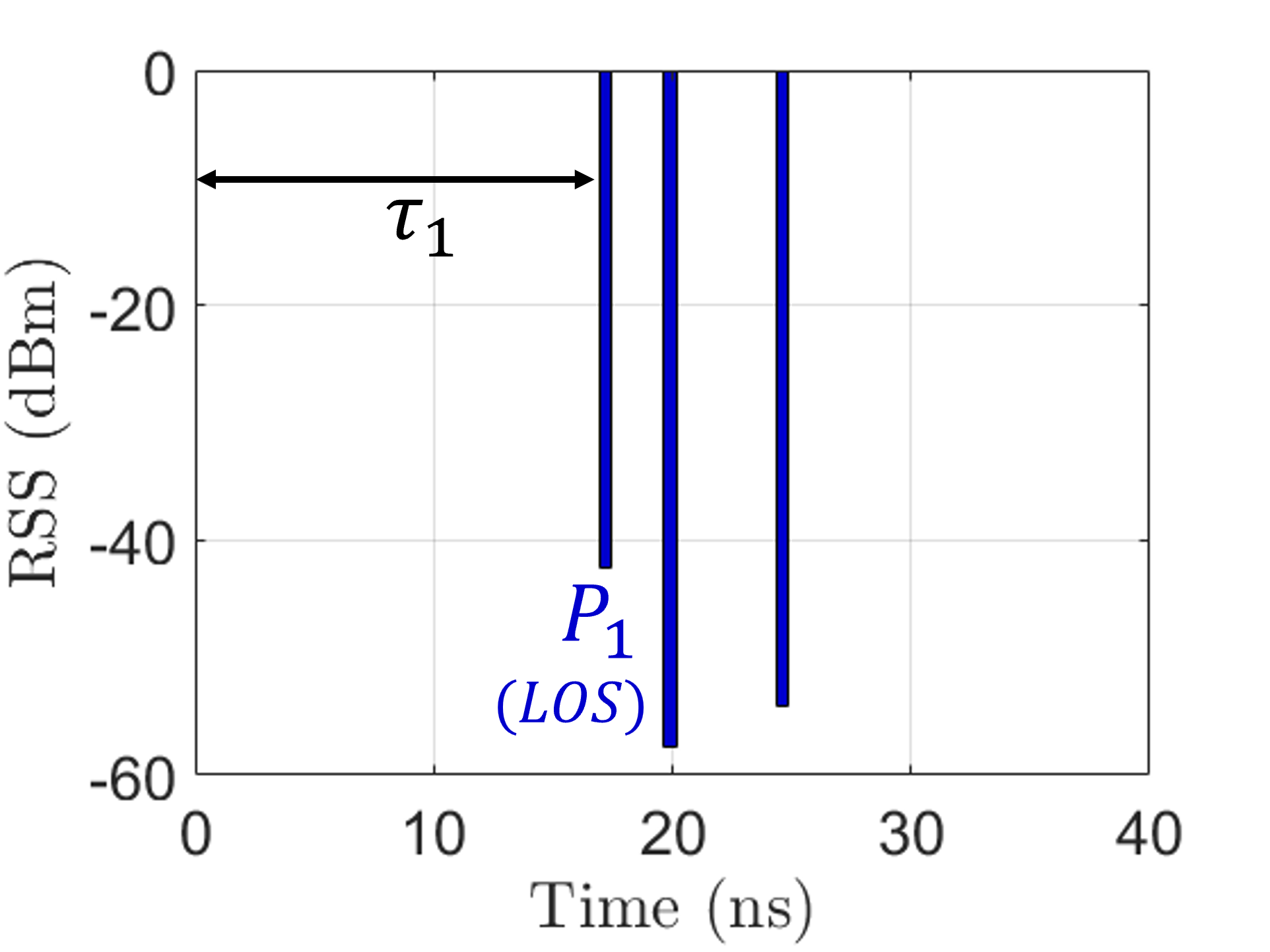} &
  \includegraphics[width=0.55\columnwidth]{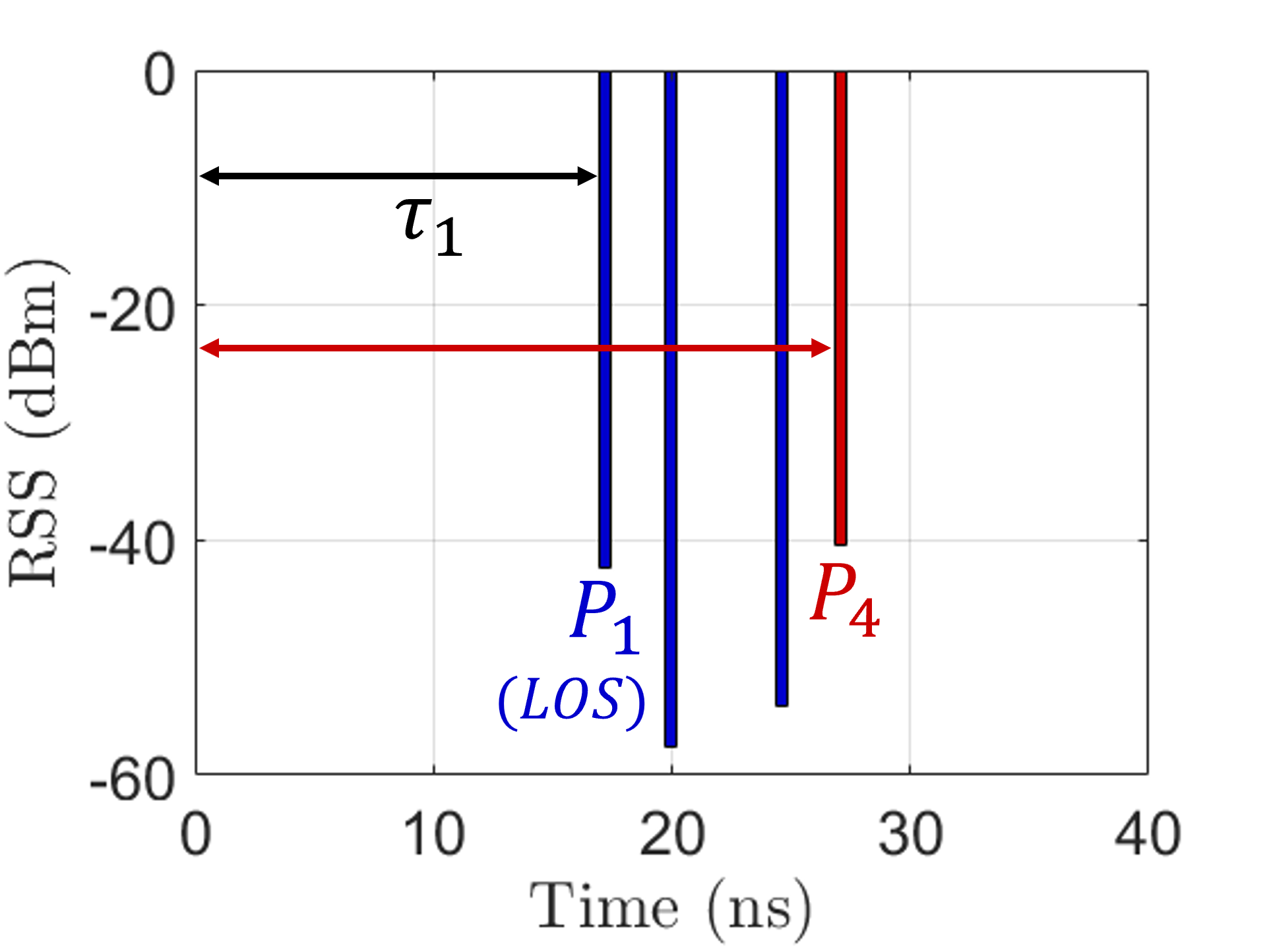} \\ 
  (a) & (b) 
\end{tabular}
\caption{CPDP: (a) without attack, the first peak is the strongest. (b) with a beam-stealing attack, the fourth peak is the strongest.}
\label{pdp w and w/ attack}
\vspace{-0.20in}
\end{figure}

{\bf Numerical Examples.} We utilized the ray-tracing-based simulator mmTrace \cite{steinmetzer2016mmtrace} to provide an illustration of how to employ CPDP to detect relay attacks in the initiator sector sweep. The topology  under consideration is depicted in Fig.~\ref{cpdp layout}. The distances between devices are $d_{IR}=5m$, $d_{MR}=4.1m$ and $d_{IM}=4m$. The initiator ($I$) is configured with $N=32$ sectors with a HPBW of $12^\circ$, while the responder ($R$) is equipped with six non-overlapping quasi-omni patterns with an HPBW of $60^\circ$. $I$ conducts an SLS sweep for each quasi-omni pattern, while $R$ measures the PDP for each transmit sector and extracts the highest amplitude path from each PDPs to create CPDP.  If $R$ observes multiple peaks with the same delay in different quasi-omni patterns, he chooses the one with the highest RSS. 

The results of CPDP without attack is illustrated in Fig. \ref{pdp w and w/ attack}(a). It is observed that the signals arriving later are weaker than the first peak as they travel along longer paths. However, in the presence of a beam-stealing attack (Fig. \ref{pdp w and w/ attack}(b)), the fourth peak is stronger than earlier peaks, even though it comes from a longer path. This is due to the adversary amplifying and relaying the signal on that path. In such a scenario, $R$ can detect the attack and reject the signal from $I$. Note that our results do not include any processing delays due to relaying. 

It should be noted that Mallory could avoid detection if it lies in the LoS path. However, the beam-stealing attack is not meaningful because the beams of $I$ and $R$ will lie in the optimal path and $I$ and $R$ can directly communicate unless $\mathcal{M}$ introduces physical blockage. Moreover, this method does assume the existence of a faster path (e.g. LoS) than the adversary's relay. Finally, the stringent synchronization requirement for computing the CPDP may require custom hardware \cite{samimi201628}. To address these limitations, we propose an alternative protocol based on path loss and AoA.

\begin{figure}[t]%
\centering
\begin{tabular}{c}
\includegraphics[width=0.9\columnwidth]{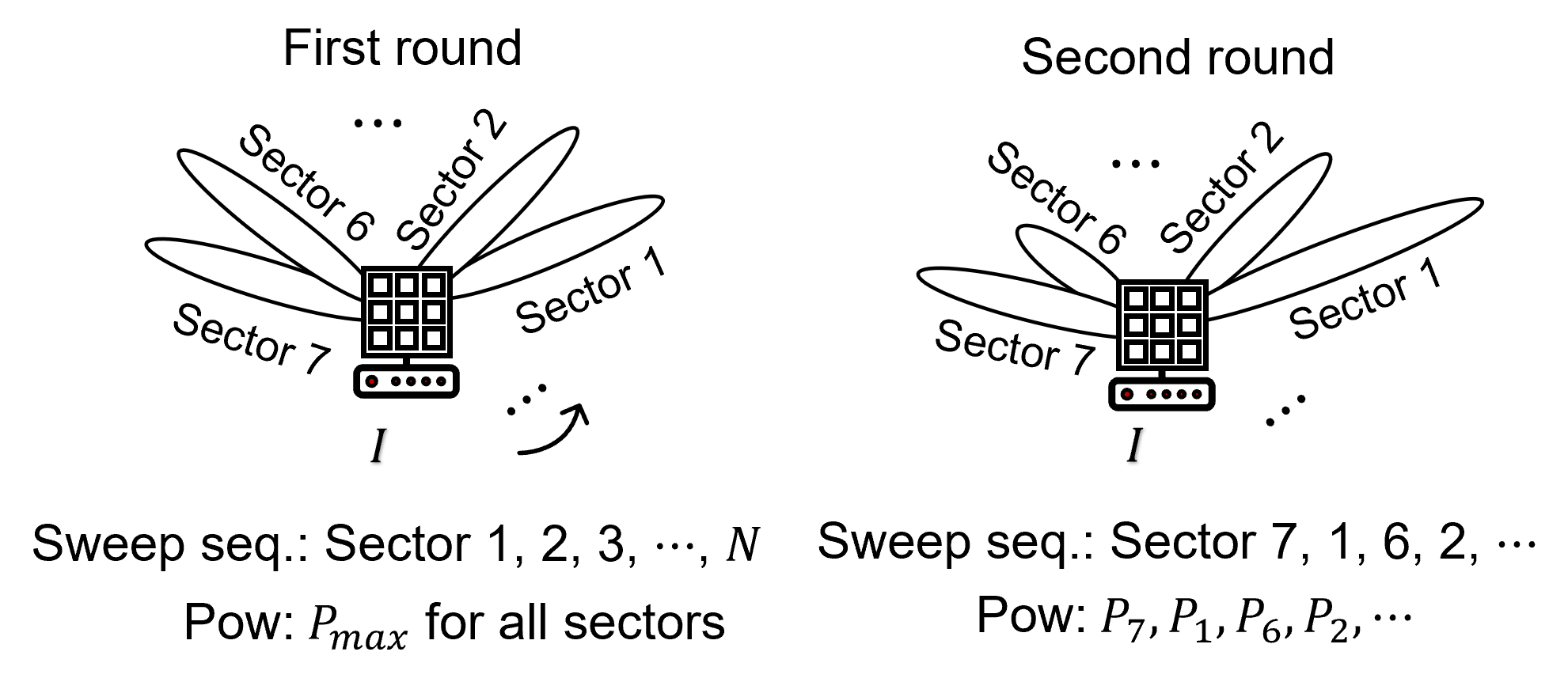} \\
(a) Physical layer commitment\\
\includegraphics[width=0.9\columnwidth]{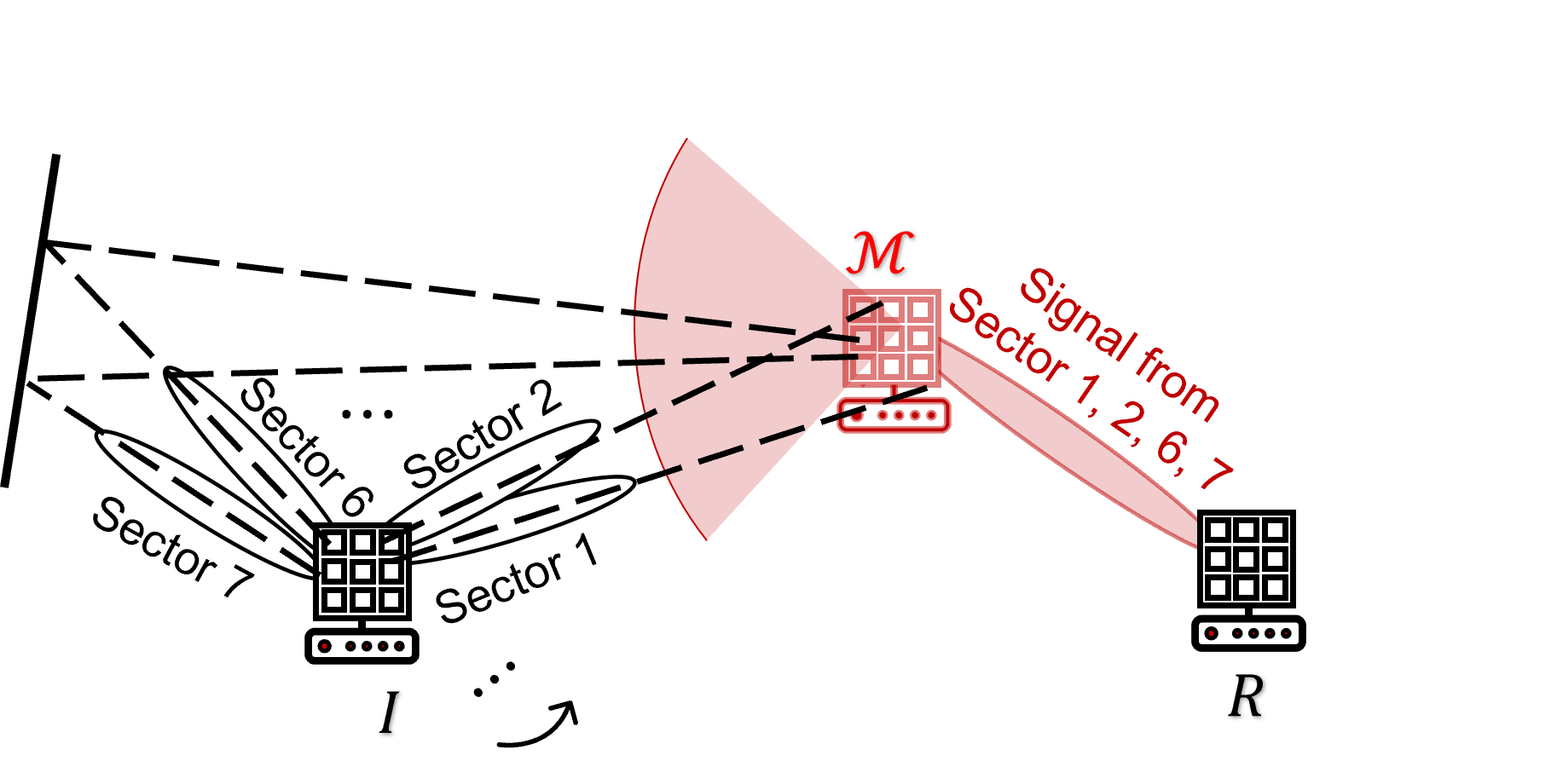} \\ 
(b) Coarse AoA detection
\end{tabular}
\caption{Overview of the SecBeam protocol: (a) the two sector sweep rounds of the SecBeam protocol and (b) when $\mathcal{M}$ relays all sectors it hears, $R$ receives multiple SSFs over the same receiving sector, indicating the same coarse AoA.}
\vspace{-0.2in}
\label{protocol demo}
\end{figure}

\section{The SecBeam Protocol}

We propose SecBeam, a secure beam sweeping protocol that detects amplify-and-relay attacks. To thwart fixed power amplification attacks, SecBeam implements a {\em physical-layer commitment scheme} were a transmitter commits to use the same power level over successive transmission rounds. Moreover, SecBeam incorporates coarse angle-of-arrival (AoA) information to detect fixed amplification attacks. To motivate the SecBeam design, we first provide an overview of the focal ideas and follow with a detailed protocol description.

\subsection{Overview}

The main challenges in detecting an amplify-and-relay attack are that (a) cryptographic operators are not violated during a signal relay and (b) $I$ and $R$ do not have any prior knowledge of their relative positions and of the wireless environment.  Without this information, a receiver cannot detect if a physical property such as SNR has been altered. 

{\bf Physical layer commitments.} To counter fixed power attacks, we construct a physical-layer security primitive that resembles a commitment scheme. We extend the SLS protocol to operate in two rounds. During the first round, $I$ sweeps across each beam sector $S_i$ sequentially using maximum power $P_{\max}$. The responder $R$ calculates the path loss at round 1 as $PL_i(1)= P^T_i(1)-P^R_{i}(1)$ by subtracting the received power from the transmit power (in dB) using $P^T_i(1)= P_{\max}, \forall i.$ This round serves as {\em a physical commitment} to the wireless environment even if this environment is distorted by Mallory. 

During the second round,$I$ repeats the sector sweep with two randomizations in place. First, $I$ randomizes the sector sweeping sequence so it is no longer sequential. Second, $I$ randomizes the transmit power on every sector. The sector ID $S_i$ and the transmit power $P^T_i(2)$ are included in the SSF (which is encrypted). The responder calculates the path loss at round 2  and verify that $PL_i(1) \approx PL_i(2),\forall i.$ This round serves as {\em a physical decommitment},  where the transmit power $P^T_i(2)$ and sector ID $S_i$ are used to physically verify the commitment of the first round. An example of the two-round sector sweep is shown in Fig.~\ref{protocol demo}(a). 

The power and sweeping sequence randomization have two effects on Mallory's attack. If Mallory amplifies both rounds at fixed power, the path loss computed by $R$ would be different between the two rounds, thus revealing the attack. Mallory could attempt to detect the potentially lower received power from $I$. However, the randomization of the sweeping sequence  and encryption of the sector ID does not allow Mallory to infer which sector it is receiving from at each round.  Since Mallory does not know which sector in the second round corresponds to the one she relayed in the first round she could not fine tune its transmit power to make the path loss consistent across  both rounds.  The only strategy that is viable is {\bf to amplify all sectors} overheard at $\mathcal{M}$ with fixed amplification, that is, to launch a fixed amplification attack. 

{\bf Coarse AoA detection.} To counter fixed amplification attacks, we exploit another physical layer phenomenon that arises in the presence of a relay.  Because Mallory is forced to apply a fixed amplification to all sectors (SSFs) it hears, an unusually high number of sectors will appear to arrive from the same direction. That is, although the initiator is sweeping the plane, the signal arrives to the responder from the same direction, thus indicating the presence of a relay. In the example of Fig.~\ref{protocol demo}(b), sectors 1, 2, 6, and 7 all appear to be received at the same sector of $R.$ To extract AoA information without requiring the application of complex AoA estimators such as MUSIC \cite{gupta2015music} during the sweeping process, $R$ uses the quasi-omni mode to roughly estimate the incoming direction of a received signal. As the quasi modes are switched, $R$ can calculate for how many transmitting sectors each quasi-omni receiving sector appears to be the optimal. It can then use those statistics for detecting relays.

\subsection{Protocol details}

We now present the detailed steps of the SecBeam protocol. The protocol is facilitated by a shared secret $s$ established between $I$ and $R$ using public keys, as described in \cite{steinmetzer2018authenticating}.

\medskip

\noindent
\textbf{\textit{First round initiator sector sweep.}} 

\begin{enumerate}

    \item The initiator and the responder derive keys $k_1$ and $k_2$ from the shared common secret $s.$ Key $k_1$ is used for integrity protection and $k_2$ is used for encryption. 

    \item The initiator sweeps through sectors $S_1, S_2,\ldots S_N$  in this order and transmits sector sweep frame $SSF_i^{I}(1)$ at each $S_i$, using the fine-beam mode. Each $SSF_i^{I}(1)$ contains an authenticator  $\alpha_i(1)=\text{MAC}(S_i, P_i^T(1), v_i(1), k_1)$, where MAC is a message authentication code function, $P_i^T(1)$ is the transmission power of $SSF_i^{I}(1)$ and $v_i(1)$ is a nonce. The initiator attaches $\alpha_i(1),P_i^T(1),v_i(1)$ to $SSF_i^{I}(1)$, encrypts it with $k_2$ and transmits it at power $P_i^T(1) = P_{\max}.$  
 
    \item The responder decrypts $SSF_i^{I}(1)$ using key $k_2$ and verifies the authenticator $\alpha_i(1)$ using $k_1$. The responder records the path loss $PL_i(1)=P_{\max} - P_i^R(1)$, where $P_i^R(1)$ is the received power from sector $S_i.$
    
    \item The responder finds the sector with the smallest path loss (highest SNR) as $S^{\ast}_I(1)=\arg\min_{i} PL_i(1)$.

\end{enumerate}

\textbf{\textit{First round responder sector sweep.}} 
\begin{enumerate}
    \setcounter{enumi}{4}
    
    \item The initiator and the responder switch roles and repeat steps 1) to 4). Each sweep frame $SSF_i^{R}(1)$ includes the best sector $S^{\ast}_I$ and the path loss vector $PL(1) =\{ PL_1(1), PL_2(1)\ldots Pl_N(1)\}$ in the SSW feedback field.
    
\end{enumerate}

\textbf{\textit{Second round initiator sector sweep.}} 
\begin{enumerate}
    \setcounter{enumi}{5}

    \item The initiator calculates the transmission power range for each sector $S_i$ as $r_i: [P_{i,\min},P_{\max}-\epsilon].$ Here, $P_{i,\min}$ is the minimum transmission power that allows decoding at $R$ and is calculated as $P_{i,\min} = P_f + PL_i(1),$ where $P_f$ is the receiver's sensitivity. Parameter $\epsilon$ is the expected path loss variation over the same path. 

    \item The initiator perturbs the sector sweep sequence $S =\{S_1,S_2,$ $\ldots,S_N\}$ using a pseudorandom permutation $\Pi.$ The initiator sweeps through the $N$ sectors using sequence $\Pi(S).$ 
    
    \item For each sector $S_i$, $I$ randomly selects the transmission power $P_i^{T}(2) \in r_i$, calculates authenticator $\alpha_i(2)$, attaches $\alpha_i(2), P_i^T(2), v_i(2),$ and $S_i$ to $SSF_i^{I}(2)$, encrypts it with $k_2$ and transmits it with power $P_i^{T}(2)$. 
    
    \item {\bf Path loss test:} The responder decrypts the $SSF_i^{I}(2)$ using key $k_2$ and verifies the authenticator $\alpha_i(1)$ using $k_1$. The responder records the path loss $PL_i(2)=P_i^T(2) - P_i^R(2)$, where $P_i^R(2)$ is the received power from sector $S_i$ during the second round. If 
    $|PL_i(1) - PL_i(2)|>\epsilon$, $R$ detects an attack and aborts the protocol. Parameter $\epsilon$ is the same expected path loss variation over the same path as in Step 6.

    \item {\bf Coarse AoA test:} The responder records the quasi-omni receive sector that experiences the least path loss from each transmit sector $S_i.$ It maintains the fraction of sectors heard at each receive sector as $F=\{f_1, f_2, \dots, f_L\}$, where $L$ is the total number of quasi-omni sectors that cover the plane.  If $f_j > \beta \cdot N, j=1, 2, \cdots, L$, $R$ detects an attack and aborts the protocol.

\end{enumerate}

\textbf{\textit{Second round responder sector sweep.}} 
\begin{enumerate}
    \setcounter{enumi}{10}
    
    \item The responder and the initiator switch roles and repeat steps (6) to (10). 
    
\end{enumerate}

\subsection{Security Analysis of SecBeam}
\label{security analysis}

{\bf Forgery attacks.} The original SLS protocol is susceptible to forgery attacks where Mallory can forge SSFs and change the feedback field on the optimal transmitting beam \cite{steinmetzer2018beam}. Such attacks are prevented in SecBeam due to the addition of integrity protection on the SSFs. Without access to the shared key $s,$ Mallory cannot recover $k_1$ to generate a valid MAC (under standard security assumptions for the MAC function). This security property is similar to the one provided by the authenticated SLS in \cite{steinmetzer2018authenticating}.

{\bf Detection of Fixed Power Attacks.} Fixed power attacks are countered by the power randomization method that is employed in the second round of SecBeam. This is because when relaying SSFs at fixed power, the path loss test of Step 9 is violated. We demonstrate this in Proposition 1. 

\begin{proposition}
\label{Prop1}
 A fixed power attack on sector $S_i$ is detected by the path loss test if the transmission power $P_i^T(2)$  in the second round satisfies 
\[
P_i^T(2) < P_i^{T}(1) - \epsilon,
\]
where $\epsilon$ is a noise-determined parameter. 
\end{proposition}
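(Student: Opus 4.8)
The plan is to derive, under an idealized (noiseless) propagation model, an exact expression for the gap $PL_i(1)-PL_i(2)$ that the responder computes in the path-loss test, and then show that a fixed-power relay forces this gap to equal the difference of the two announced transmit powers. First I would fix notation for the attack: under the FP strategy, $\mathcal{M}$ relays every overheard SSF at a single fixed transmit power $P_{\mathcal{M}}$, and I would let $PL_{\mathcal{M}R}$ denote the path loss on the $\mathcal{M}$-to-$R$ link. The key observation is that the power $R$ actually receives on the relayed path does not depend on what $I$ transmits: in both rounds the signal reaching $R$ leaves $\mathcal{M}$'s amplifier at the same power $P_{\mathcal{M}}$ and traverses the same $\mathcal{M}$-to-$R$ channel, so $P_i^R(1)=P_i^R(2)=P_{\mathcal{M}}-PL_{\mathcal{M}R}$.

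Next I would substitute this common received power into the two path-loss estimates that $R$ forms from the announced (decrypted) transmit powers. By Step 3, $R$ computes $PL_i(1)=P_i^T(1)-P_i^R(1)$ with $P_i^T(1)=P_{\max}$, and by Step 9, $PL_i(2)=P_i^T(2)-P_i^R(2)$ with $P_i^T(2)$ read from the authenticated SSF. Because the received-power terms are identical, they cancel upon subtraction, yielding the clean identity
\[
PL_i(1)-PL_i(2)=P_i^T(1)-P_i^T(2).
\]
Since the round-two power is drawn from $r_i\subseteq[P_{i,\min},P_{\max}-\epsilon]$ while $P_i^T(1)=P_{\max}$, this difference is nonnegative, so $|PL_i(1)-PL_i(2)|=P_i^T(1)-P_i^T(2)$. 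The detection rule of Step 9 fires exactly when this quantity exceeds $\epsilon$, i.e. precisely when $P_i^T(2)<P_i^T(1)-\epsilon$, which is the claimed condition.

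The subtle part, which I would make explicit rather than grind through, is justifying the invariance $P_i^R(1)=P_i^R(2)$. This rests on (i) the FP assumption that $\mathcal{M}$'s relay power is truly constant across rounds regardless of what she receives, and (ii) the $\mathcal{M}$-to-$R$ channel being stable over the two sweeps, i.e. both sweeps completing within the channel coherence time so that $PL_{\mathcal{M}R}$ is the same in each round. I would also flag where the earlier randomizations enter: the permutation $\Pi$ and the encryption of the sector ID $S_i$ keep $\mathcal{M}$ from learning which round-two sector matches a given round-one relay, so she cannot pre-compensate her power per sector and is genuinely confined to a single $P_{\mathcal{M}}$. Finally I would handle the noise idealization by treating $\epsilon$ as the tolerance that absorbs measurement error and small-scale fading on the common path; in the noiseless model the identity above is exact, and $\epsilon$ is precisely the margin separating a benign power fluctuation from an attack.
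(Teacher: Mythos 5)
Your proof is correct and follows essentially the same route as the paper's: both arguments hinge on the observation that a fixed-power relay makes $P_i^R(1)=P_i^R(2)$, so the path-loss difference collapses to $P_i^T(1)-P_i^T(2)$, which exceeds $\epsilon$ exactly under the stated condition. Your version is slightly more explicit about why the received power is round-invariant (constant relay power plus a stable $\mathcal{M}$-to-$R$ channel) and is in fact cleaner than the paper's chain of inequalities, which contains a typographical slip in its penultimate line, but the underlying argument is identical.
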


\begin{proof}
The proof is provided in Appendix B.
\end{proof}

The intuition behind Proposition 1 is that by transmitting at a power lower than $P_i^{T}(1) - \epsilon$ in the second round, the path loss difference between the two rounds will violate the path loss test (will exceed $\epsilon$) because the adversary relays at the same power, regardless of the transmission power of $I$. Parameter $\epsilon$ must be selected in such a way to account for the natural path loss variations between rounds. Such variations are expected to be primarily caused by noise, as the geometric properties of the mmWave channel lead to long coherence times in most environments of interest. We study the choice of $\epsilon$ in the evaluation via simulations and experiments.

{\bf Detection of Fixed Amplification Attacks.} In a fixed amplification attack, Mallory applies a fixed gain to each sector she relays as opposed to transmitting at fixed power.  Let Mallory receive from $N_M \leq N$ sectors during the first round initiator sector sweep, when $P_i^{T}=P_{\max},\forall i$. Let also $K \leq N_M$ denote the sectors that, when amplified, achieve an SNR higher than the optimal sector without attack. The probability of passing the path loss test in the second round initiator sector sweep is given by Proposition 2.

\begin{figure}[t]
\centering
\includegraphics[width=0.6\linewidth]{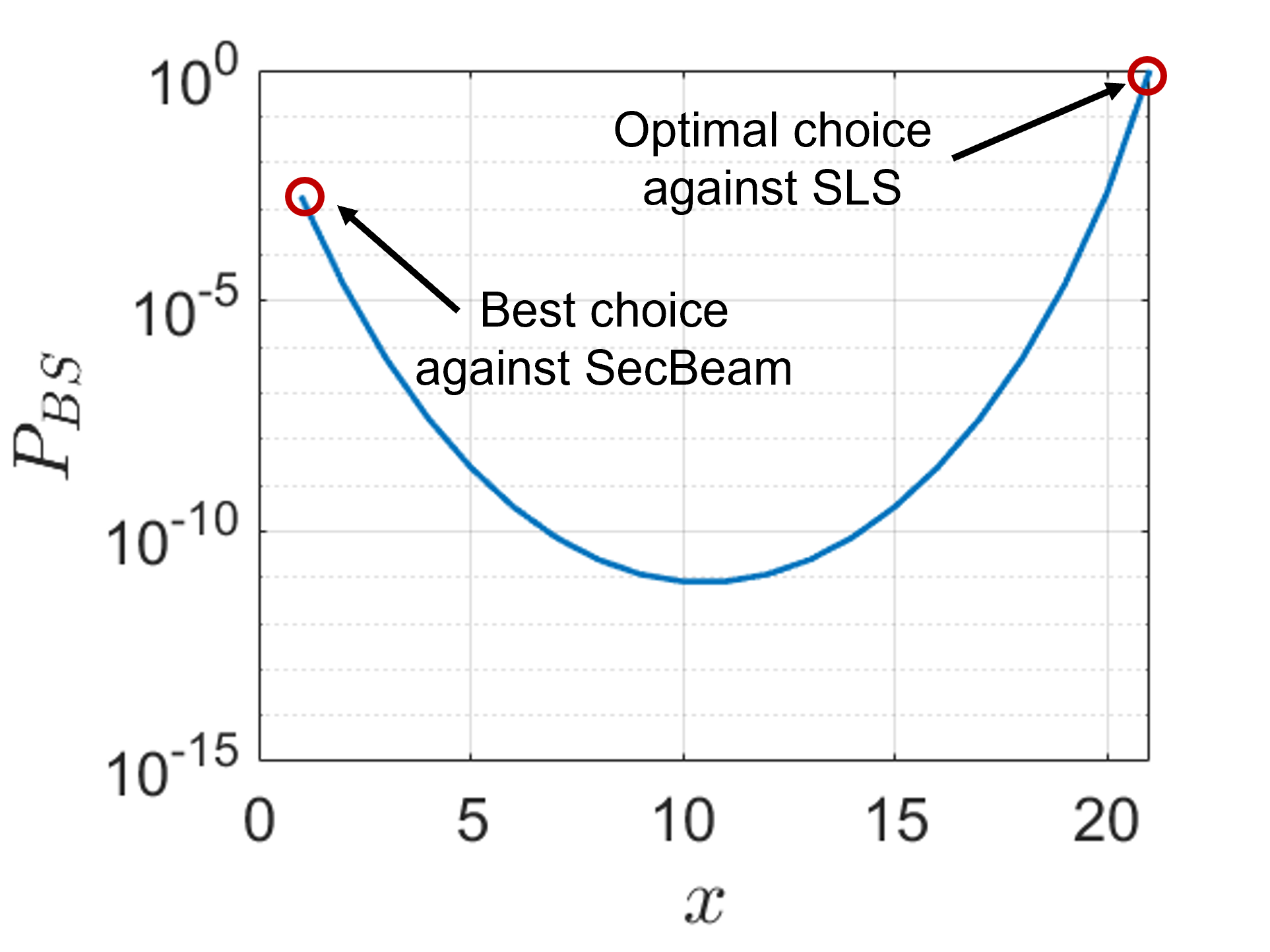}
\vspace{-0.1in}
\caption{ $P_{BS}$ when $N=32$, $N_M=21$ and $K=19$.}
\label{passing rate of M}
\vspace{-0.20in}
\end{figure}

\begin{proposition}
 Given the number of sectors $N_M$ heard at $\mathcal{M}$ and the number of sectors $K$  in $N_M$ with path loss smaller than the best path without attack under a fixed amplification strategy, the probability $P_{BS}$ of a successful beam stealing attack is given by 
 \[
    Pr_{BS}=\frac{1}{\binom{N_M}{x}^4} \cdot \left (\sum_{i=1}^{\min(x, K)}\binom{K}{i}\binom{N_M-K}{x-i}\right)^2.
\]
\end{proposition}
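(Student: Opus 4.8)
The plan is to reduce the claimed probability to a short combinatorial count over the subsets of sectors that $\mathcal{M}$ relays, after first using the two SecBeam tests to pin down exactly what a fixed-amplification adversary is forced to do. First I would analyze the path loss test (Step 9) under fixed amplification. Writing $PL_{IM}$ and $PL_{MR}$ for the power-independent path losses on the two relay hops and $G$ for $\mathcal{M}$'s fixed gain, the power $R$ measures for a relayed sector $S_i$ in round $\ell$ is $P_i^R(\ell)=P_i^T(\ell)-PL_{IM}+G-PL_{MR}$, so the measured path loss $PL_i(\ell)=P_i^T(\ell)-P_i^R(\ell)=PL_{IM}+PL_{MR}-G$ is identical in both rounds and independent of the randomized transmit power. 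Hence a sector relayed in both rounds always passes the test, while a sector relayed in exactly one round exhibits a round-to-round gap near $PL_{IR}-(PL_{IM}+PL_{MR}-G)$, which exceeds $\epsilon$ and is detected. The first key step is therefore to conclude that, to survive the path loss test, $\mathcal{M}$ must relay the \emph{identical} set of sectors in both rounds of each sweep.

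Next I would invoke the coarse AoA test (Step 10): since every relayed SSF arrives at $R$ from $\mathcal{M}$'s physical direction, all relayed sectors pile up at a single quasi-omni receive sector, so $\mathcal{M}$ may relay at most $x$ sectors per sweep (the budget fixed by $\beta N$) without tripping the test. I would then model $\mathcal{M}$'s choices: because $\mathcal{M}$ cannot determine the legitimate best SNR or the exact $\mathcal{M}$--$R$ geometry, she cannot identify which audible sectors are among the $K$ winning ones, so her relay set is effectively a uniform $x$-subset of the $N_M$ audible sectors; and because the sweep permutation $\Pi$ together with the encryption of the sector ID and transmit power hides which round-2 reception matches which round-1 relay, her round-2 subset is uniform and statistically independent of her round-1 subset. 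Combining with the first step, a single sweep is stolen without detection exactly when the round-1 and round-2 relay sets coincide \emph{and} that common set contains at least one winning sector.

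I would then carry out the count. Conditioning on the matching requirement, the per-sweep success probability is $\sum_{A}\Pr[A_1=A]\Pr[A_2=A]$ over $x$-subsets $A$ meeting a winning sector, which equals the number of such $A$ divided by $\binom{N_M}{x}^2$. Counting the favorable subsets by choosing $i\ge 1$ winners and $x-i$ non-winners gives $\sum_{i=1}^{\min(x,K)}\binom{K}{i}\binom{N_M-K}{x-i}$ (equivalently $\binom{N_M}{x}-\binom{N_M-K}{x}$ by Vandermonde), so one sweep is stolen with probability $\bigl(\sum_{i=1}^{\min(x,K)}\binom{K}{i}\binom{N_M-K}{x-i}\bigr)/\binom{N_M}{x}^2$. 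Finally, full beam stealing requires success on both the initiator sweep and the responder sweep, whose permutations, power draws, and relay-set guesses are all independent; multiplying the two per-sweep probabilities yields the squared expression in the claim, with the $\binom{N_M}{x}^4$ in the denominator accounting for the four independent uniform subset guesses across the two rounds of the two sweeps.

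The main obstacle I expect is the uniformity/independence modeling in the second step: I must argue rigorously that encryption of $(S_i,P_i^T)$ together with the pseudorandom permutation $\Pi$ leaves $\mathcal{M}$ with no better strategy than drawing a fresh uniform $x$-subset in round 2, so that $A_1$ and $A_2$ are independent and uniform. This is where a clean reduction to the semantic security of the encryption (so ciphertext observations give $\mathcal{M}$ only negligible advantage in correlating the two rounds) and to the pseudorandomness of $\Pi$ is needed; once that is in place, everything reduces to the routine Vandermonde-type subset count above.
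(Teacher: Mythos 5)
Your proposal is correct and follows essentially the same route as the paper's proof in Appendix C: a uniform random $x$-subset of the $N_M$ audible sectors must contain at least one of the $K$ winning sectors in round one (counted by $\sum_{i=1}^{\min(x,K)}\binom{K}{i}\binom{N_M-K}{x-i}$ over $\binom{N_M}{x}$), the identical subset must be re-selected in round two (probability $1/\binom{N_M}{x}$), and the per-sweep probability is squared over the initiator and responder sweeps. The only difference is that you spell out what the paper merely asserts --- the explicit path-loss algebra showing why a sector relayed in only one round trips the $\epsilon$-test, the role of the coarse AoA budget in fixing $x$, and the reduction of the uniformity/independence of the two subset draws to the encryption of the sector IDs and the pseudorandomness of $\Pi$ --- which strengthens rather than changes the argument.
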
 

\begin{proof}
The proof is provided in Appendix C.  
\end{proof}

{\bf Discussion.} The probability of success for the fixed amplification attack is a function of the number of sectors $N_M$ heard at $\mathcal{M}$ which is location- and wireless environment-dependent and the choice of $x$. It is evident that $Pr_{BS}$ becomes 1 when $x = N_M$ (i.e., Mallory amplifies all sectors it hears in both rounds). The coarse AOA test is introduced to prevent the success of this strategy. When Mallory amplifies and relays all $N_M$ sectors, the responder detects an unusual number of sectors arriving from the direction of the relay, thus detecting the attack. For values of $x$ smaller than the coarse AoA test $\beta N,$ the probability of success rapidly declines due to the requirement of relaying the same set of sectors on both rounds. In fact, based on the convexity of the binomial distribution \cite{ramsey2005convexity}, selecting $x=1$ becomes Mallory's most beneficial strategy. In this case, the probability of success is $Pr_{BS}= K^2/N_M^4.$ 

As an example, Fig~\ref{passing rate of M} shows $P_{BS}$, when $N=32,$ $N_M = 21$ and $K=19.$ With $\beta=0.5$, Mallory must select $x<16$ to pass the coarse AoA test  with certainty. In this case, $x=1$ yields the highest passing probability which is in the order of $10^{-3}.$ This is a fairly low probability for an online attack.

Another important remark for the analysis is that we have made the reasonable assumption of an adversary that does not know a priori the exact locations of $I$ and $R$. Therefore she uses quasi-omni beams to the general directions of $I$ and $R$ in order to receive, amplify, and relay. If the adversary could exactly pinpoint $I$'s and $R$'s locations, she could point fine beams at each party, therefore substantially reducing the number of sectors $N_M$ received by $\mathcal{M}.$ In this case, relaying all $N_M$ sectors passes the path lost test, while also defeating the coarse AoA test if $N_M < \beta N.$

\section{Parameter Selection and Evaluation}
\label{simu}

\begin{figure}[t]%
\centering
\setlength{\tabcolsep}{-5pt}
\begin{tabular}{cc}
  \includegraphics[width=0.55\columnwidth]{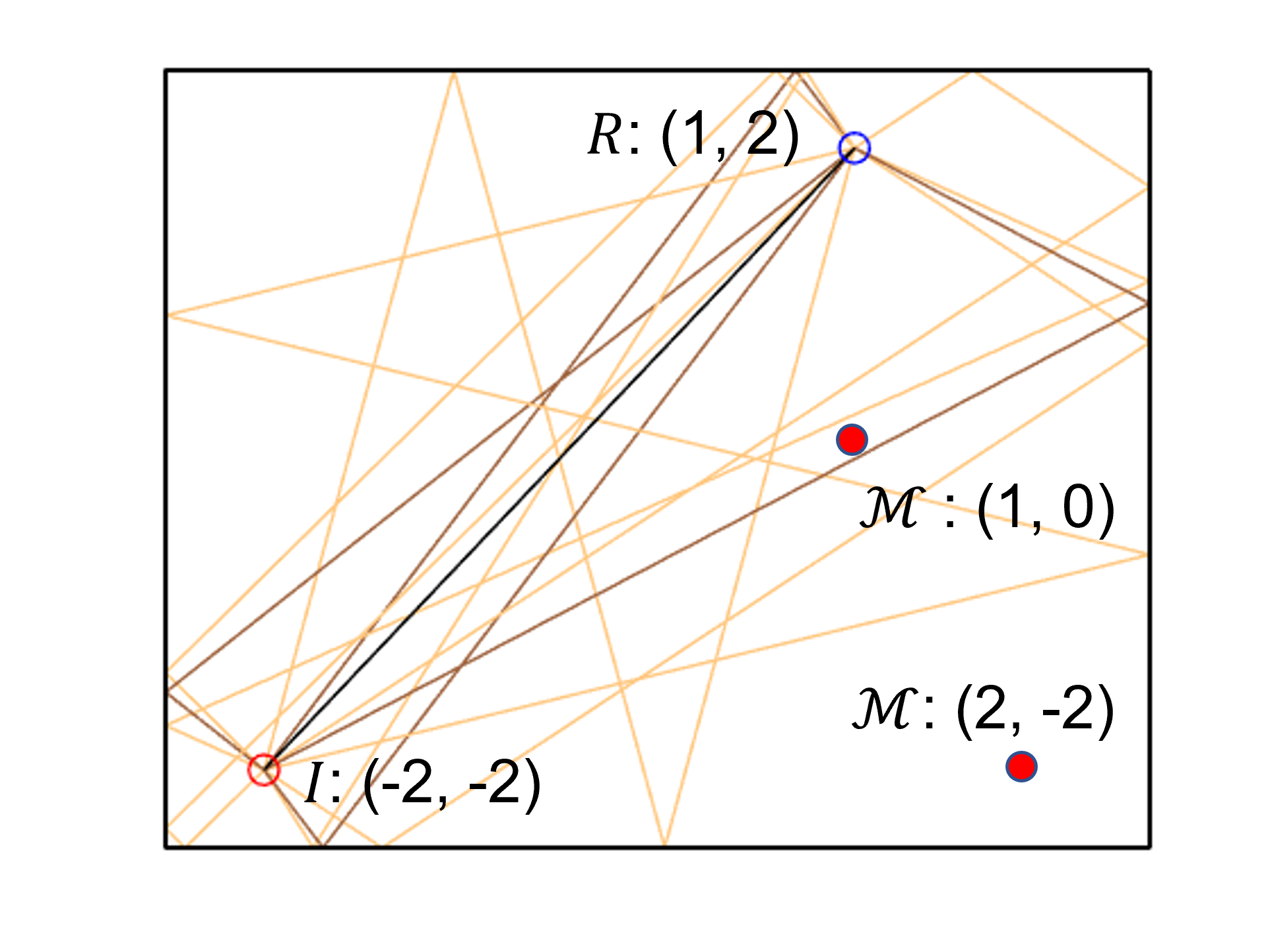} &
  \includegraphics[width=0.55\columnwidth]{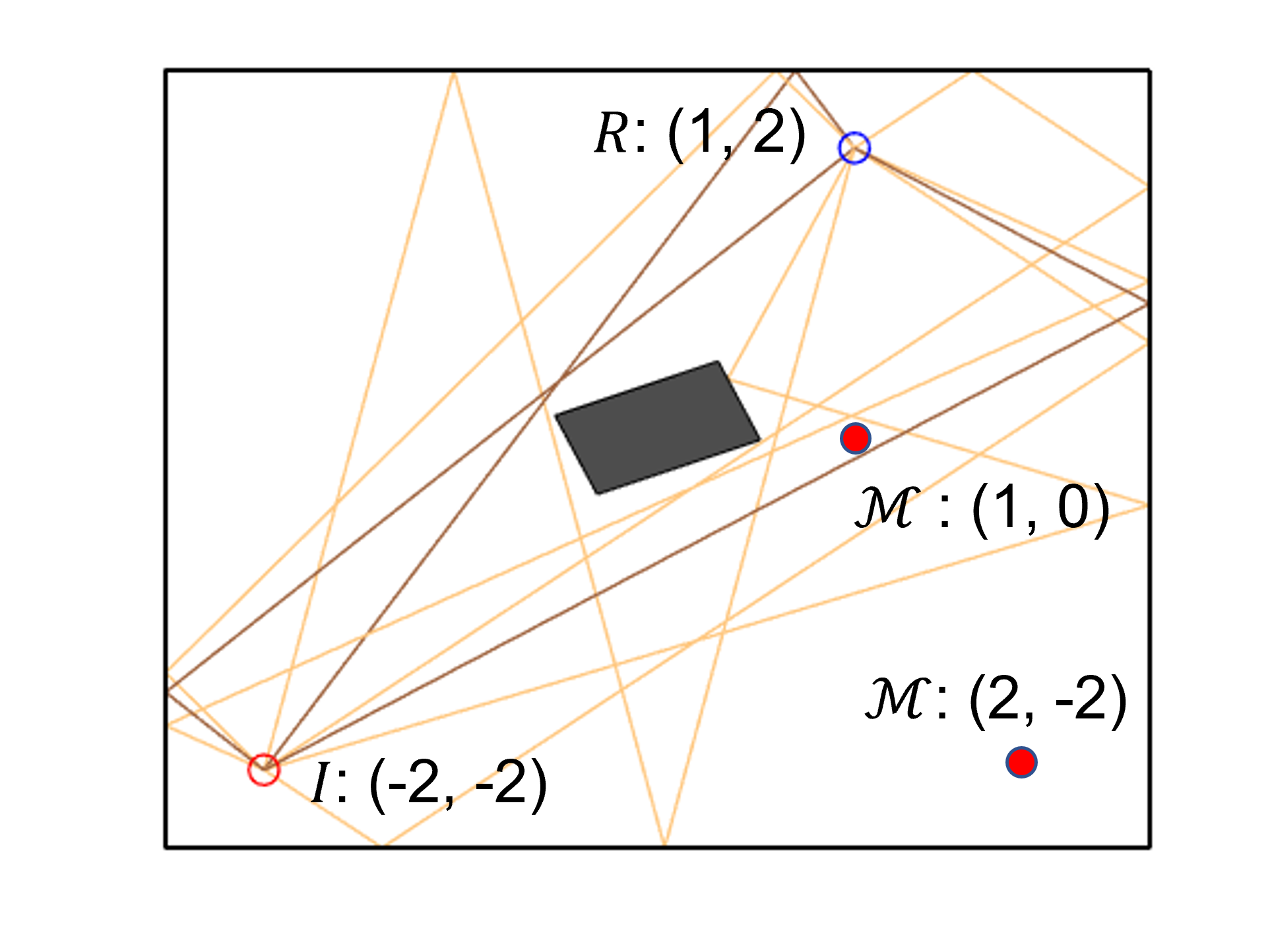} \\ 
  (a) & (b) \\
\end{tabular}
\vspace{-0.1in}
\caption{(a) LoS is unobstructed, (b) LoS is blocked by an obstacle with permittivity equal to 2. $\mathcal{M}$ was placed at two different locations.}
\label{fig:layout}
\vspace{-0.1in}
\end{figure}

\begin{figure}[t]
\centering
\includegraphics[width=0.6\linewidth]{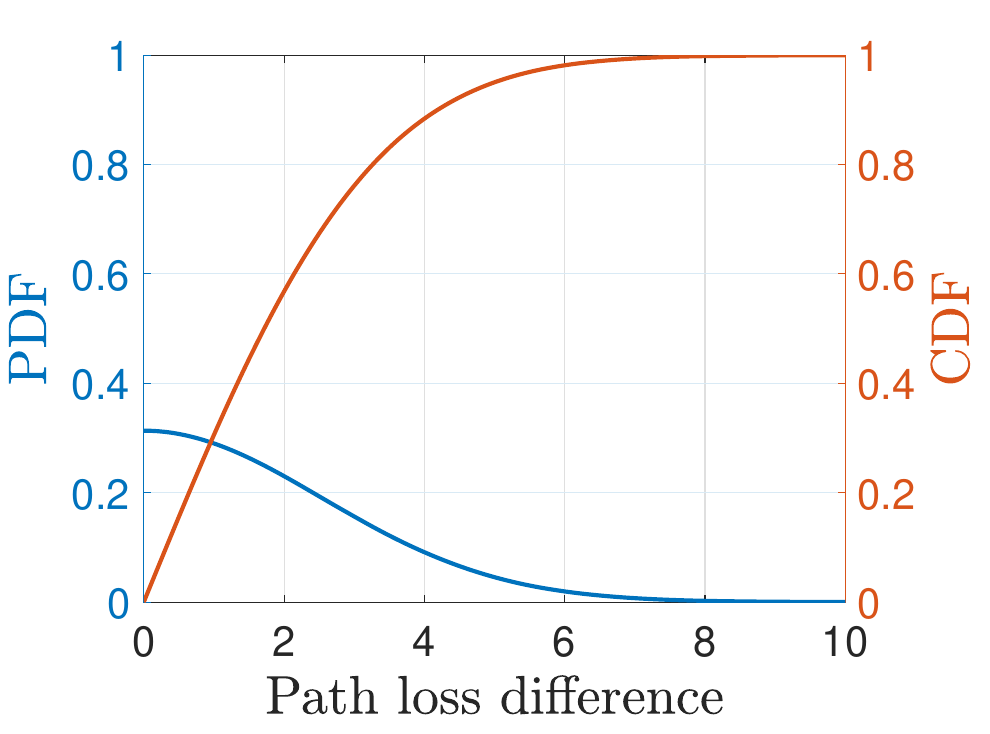}
\caption{PDF and CDF of $\Delta P$ given $\sigma=1.8$}
\label{PDF and CDF}
\vspace{-0.1in}
\end{figure}

In this section, we use simulation based on ray-tracing (mmTrace) \cite{steinmetzer2016mmtrace} to select parameters for SecBeam and evaluate its security. The insights from the simulations guide our experimental evaluations. 

\subsection{Simulation Setup}

We consider two scenarios in our simulations,  one where  the LoS path between $I$ and $R$ is unobstructed (Fig. \ref{fig:layout}(a)), and the other one with the  LoS   blocked by an obstacle (Fig. \ref{fig:layout}(b)).   
 All devices are placed in a self-defined room with a size of $5m \times 5m$, where the wall permittivity is set as 3.24. The center coordinate of the room is set to $(0, 0)$, and the distance between $I$ and $R$ is $d_{IR}=5m$. In addition, we consider two possible locations for $\mathcal{M}$: $(1, 0)$ and $(2, -2)$, which result in different distances between $I$, $R$, and $\mathcal{M}$, shown in Table II. 
 Each device has $N$ sectors in total, so the HPBW is $\lceil \frac{360^\circ}{N} \rceil$. Additionally, the HPBW of $I$ and $R$'s quasi-omni pattern is $60^\circ$, and their maximum transmit power is $P_{max}=0$ dBm. The initiator follows our protocol to perform the two-round sector sweep, using $P_{max}$ in the first round and randomized power in the second round. The adversary implements the amplify-and-relay attack strategy with fixed amplification.

\begin{table}[t]
\centering
\caption{Simulation scenarios}
\setlength{\tabcolsep}{2pt}
\begin{tabularx}{0.48\textwidth}{|X|X|X|X|X|}
\hline
   &
  $d_{IR}$&
  $d_{MR}$ &
  Room size & 
  Amp. \\
\hline 
  {Scenario 1}&
  5m&
  1m&
  $5 \times 5 \times 5$m&
  40 dB\\
    \hline 
  {Scenario 2}&
  5m&
  1m&
  $5 \times 5 \times 5$m&
  50 dB \\
  \hline
  {Scenario 3}&
  5m&
  1m&
  $5 \times 9 \times 5$m&
  40 dB \\
\hline
  {Scenario 4}&
  5m&
  $4.1$m&
  $5 \times 5 \times 5$m&
  70 dB \\
  \hline
  {Scenario 5}&
  5m&
  $5.4$m&
  $9 \times 9 \times 9$m&
  70 dB \\
\hline
\end{tabularx}
\label{diff simu setup}
\vspace{-0.15in}
\end{table}


\begin{figure}[t]
\centering
\includegraphics[width=0.6\linewidth]{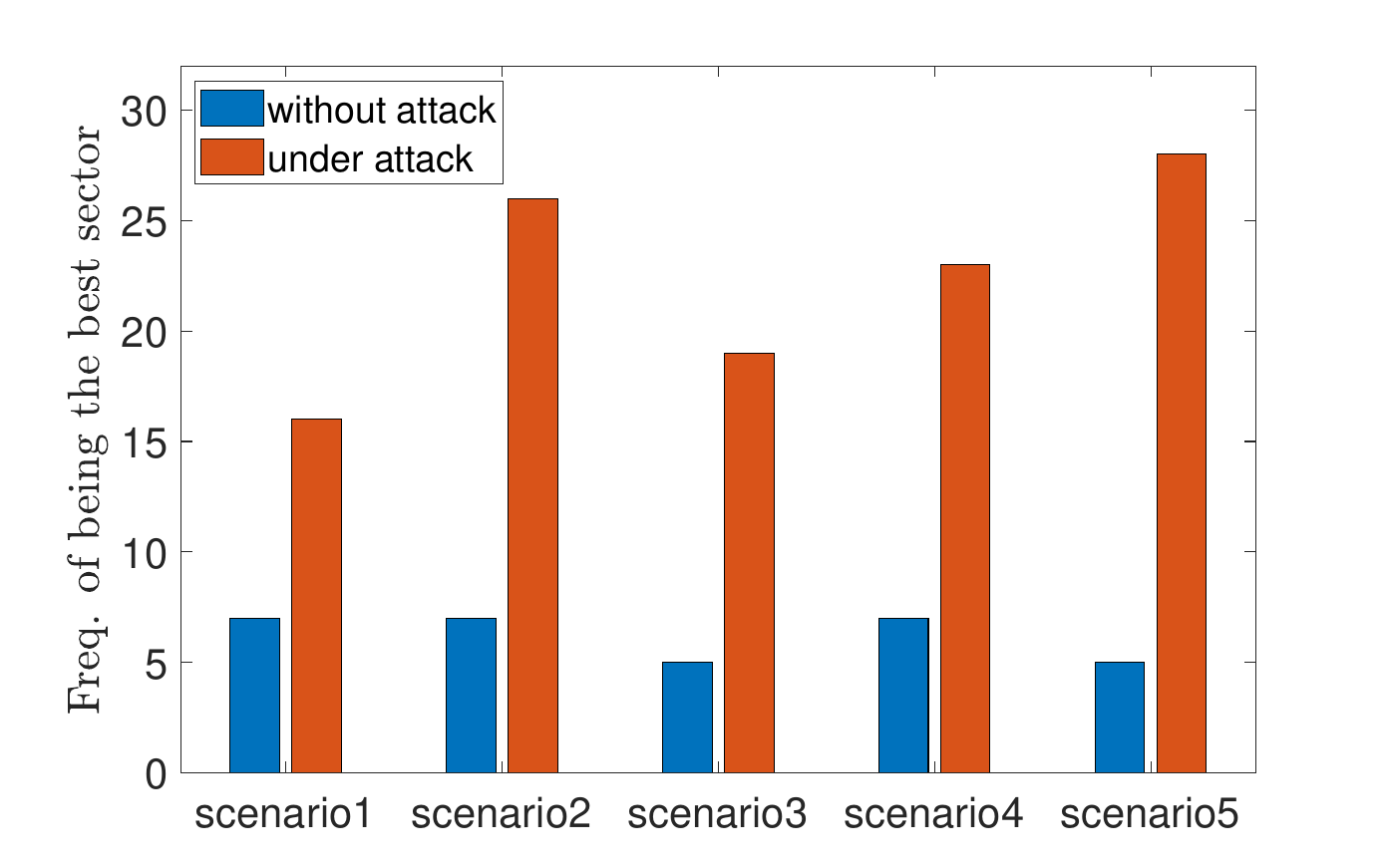}
\caption{The highest frequency of a sector being the best receive sector with and without $\mathcal{M}$ in different scenarios.}
\label{diff setup}
\vspace{-0.2in}
\end{figure}

\begin{figure}[t]%
\centering
\setlength{\tabcolsep}{-5pt}
\begin{tabular}{cc}
  \includegraphics[width=0.55\columnwidth]{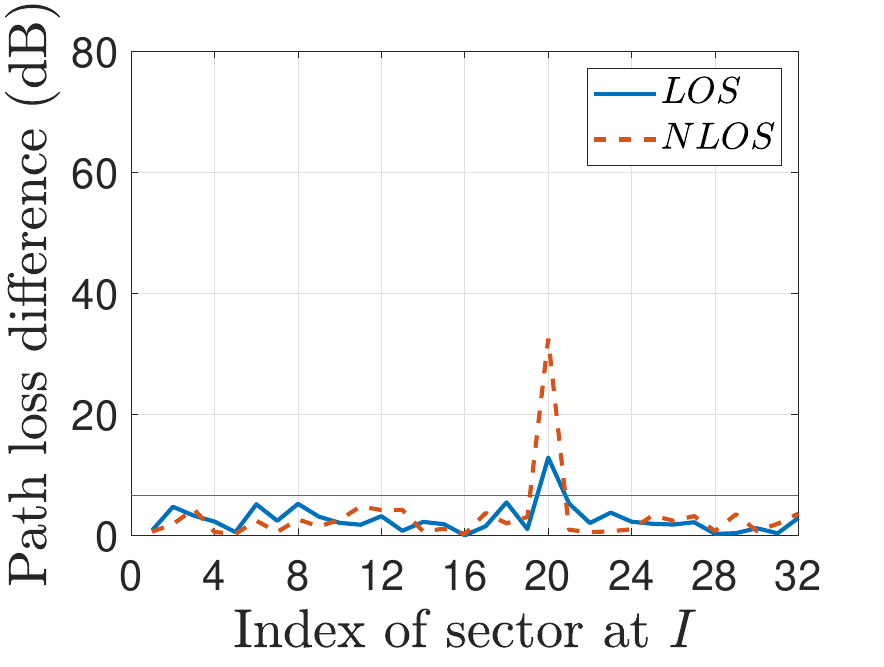} &
  \includegraphics[width=0.55\columnwidth]{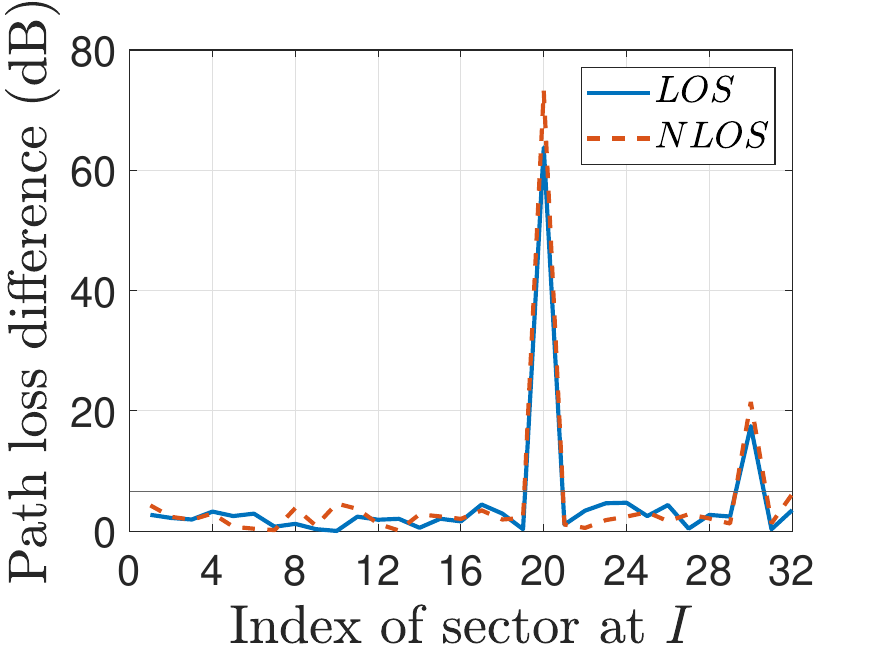} \\ 
  (a) & (b) 
\end{tabular}
\caption{Path loss difference with $\mathcal{M}$ in LoS and NLos scenarios: (a) $\mathcal{M}$ chooses the same sector in two rounds of SSW, (b) $\mathcal{M}$ chooses different sectors in two rounds of SSW.}
\label{path loss diff with M}
\vspace{-0.20in}
\end{figure}

\begin{figure}[t]%
\centering
\setlength{\tabcolsep}{-3pt}
\begin{tabular}{cc}
  \includegraphics[width=0.55\columnwidth]{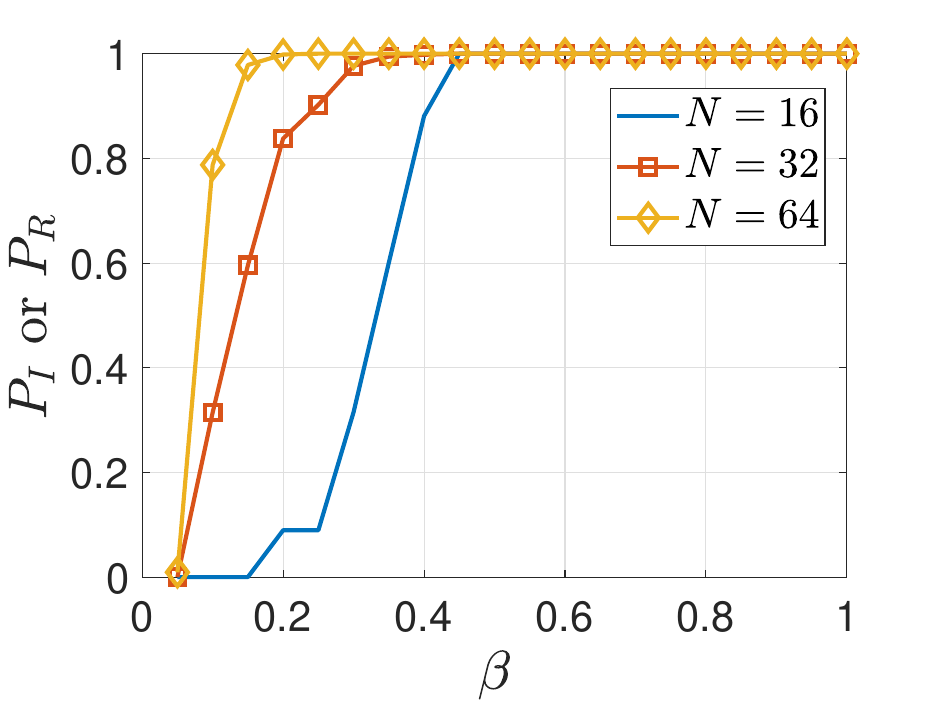} &
  \includegraphics[width=0.55\columnwidth]{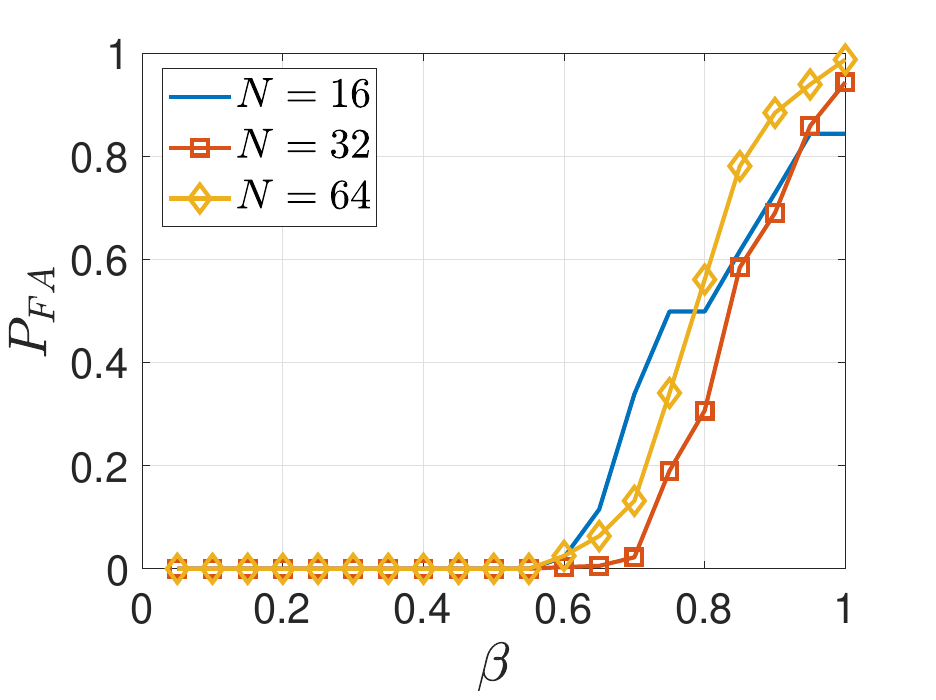} \\ 
  (a) & (b) 
\end{tabular}
\caption{Passing rates of a legitimate device ($P_I$ or $P_R$) and $\mathcal{M}$ ($P_{FA}$) versus $\beta$, for different $N$.}
\label{passing rate beta}
\vspace{-0.20in}
\end{figure}

\begin{figure*}%
\centering
\setlength{\tabcolsep}{-3pt}
\begin{tabular}{cccc}
  \includegraphics[width=0.54\columnwidth]{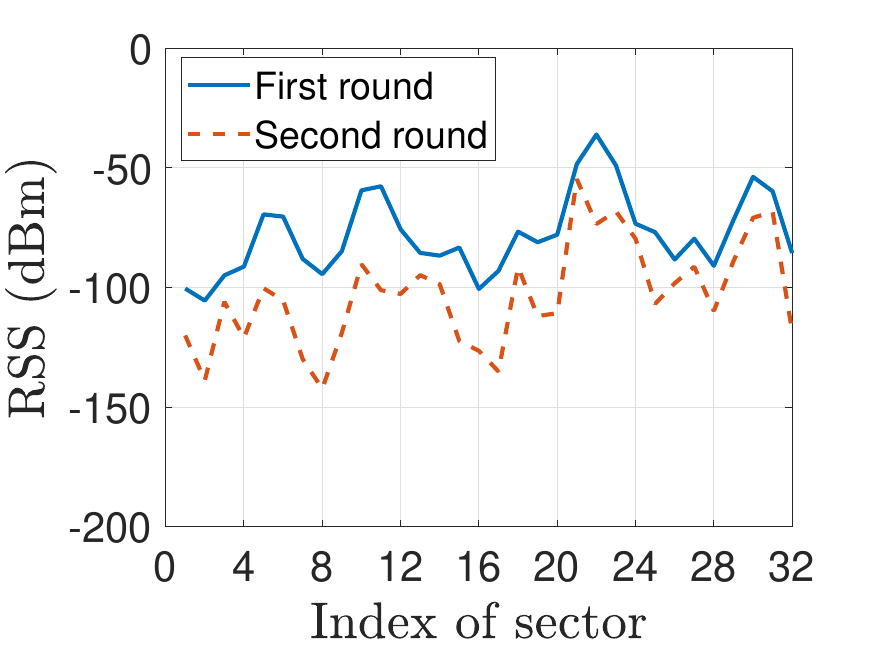} &
  \includegraphics[width=0.54\columnwidth]{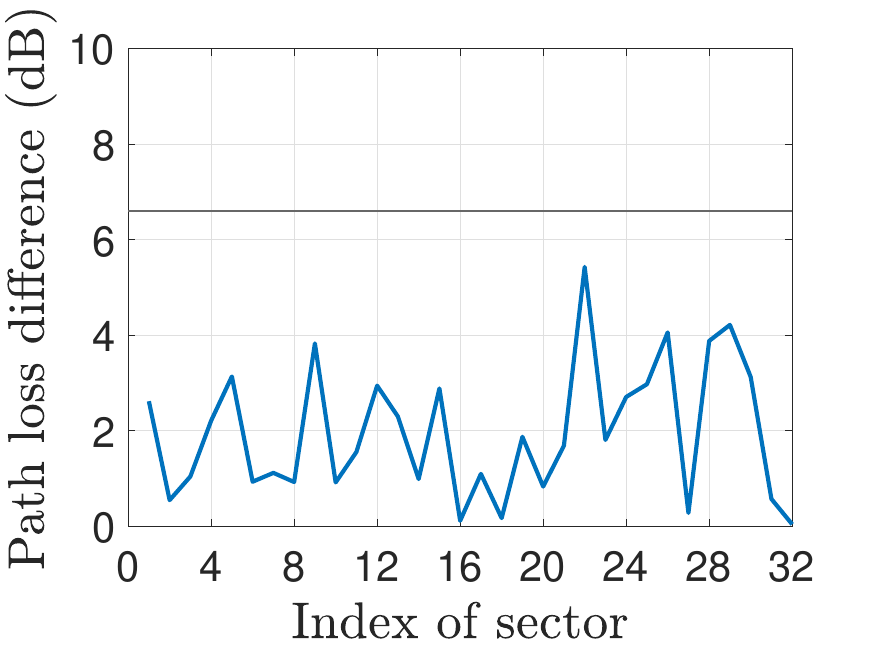} & 
  \includegraphics[width=0.54\columnwidth]{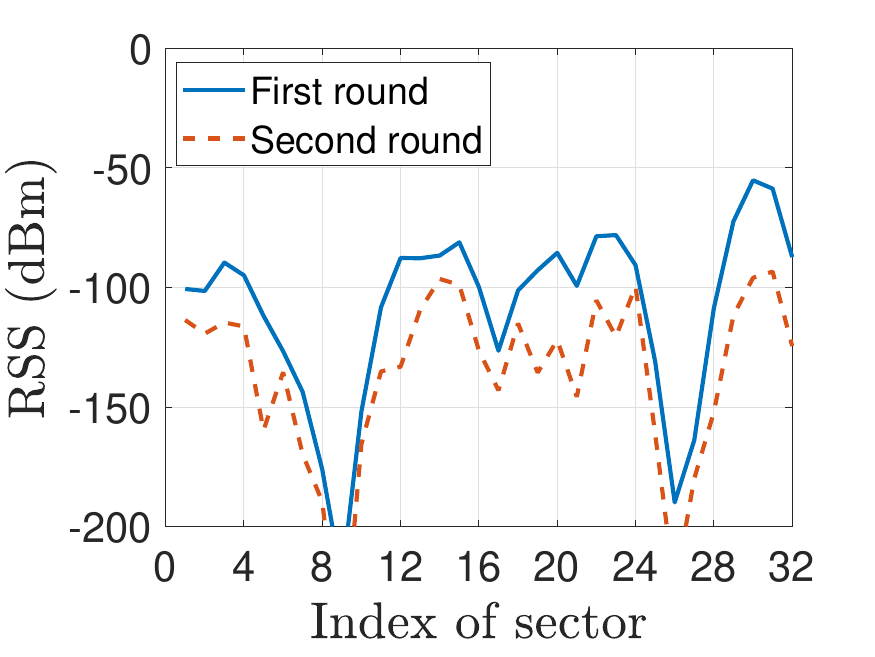} &
  \includegraphics[width=0.54\columnwidth]{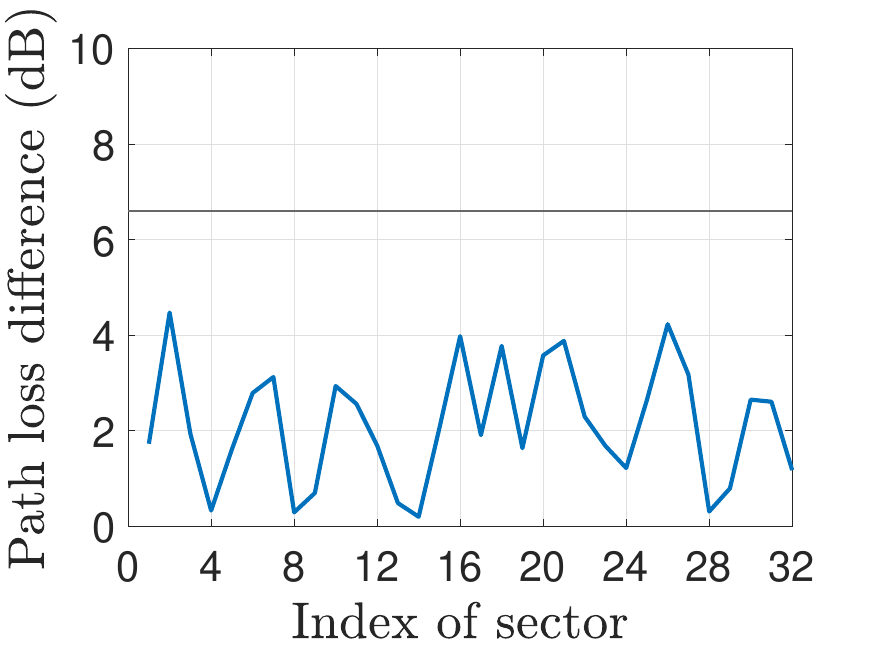} \\
  (a)  & (b)  & (c) & (d)
\end{tabular}
\caption{Simulation results without $\mathcal{M}$: (a) RSS measured by $R$ when LoS is unobstructed, (b) path loss difference between two rounds when LoS is unobstructed, (c) RSS measured by $R$ when LoS is blocked, (d) path loss difference between two rounds when LoS is blocked. The path loss difference is lower than the threshold in both LoS and NLoS scenarios.}
\label{fig:RSS and path loss difference}
\vspace{-0.20in}
\end{figure*}

\begin{figure*}%
\centering
\setlength{\tabcolsep}{-3pt}
\begin{tabular}{cccc}
  \includegraphics[width=0.55\columnwidth]{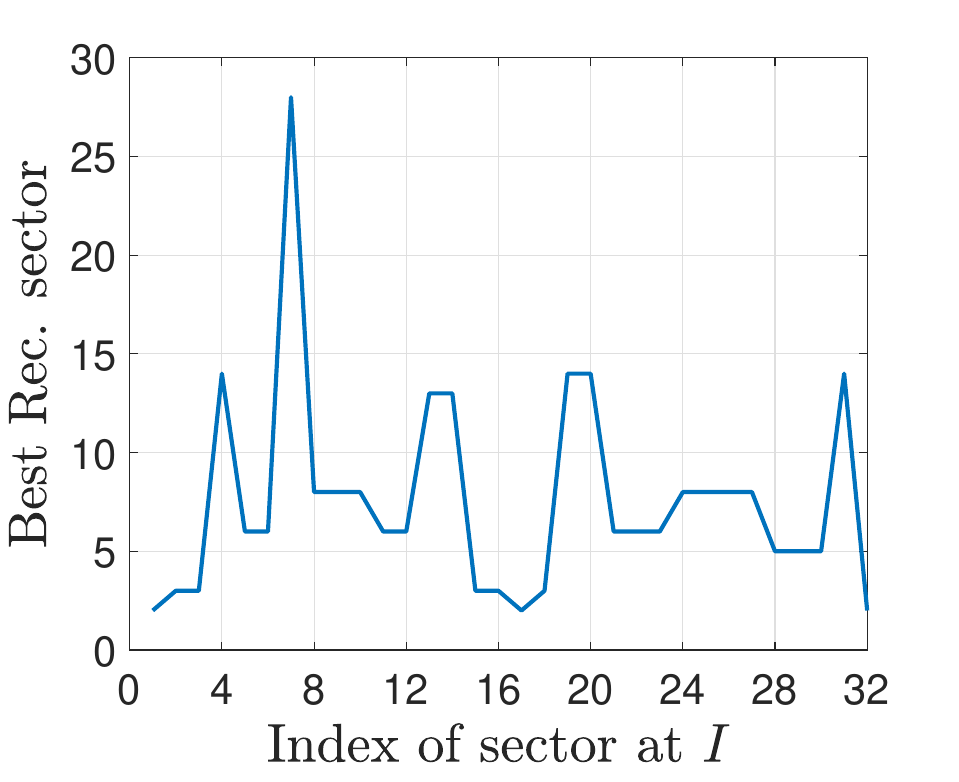} &
  \includegraphics[width=0.55\columnwidth]{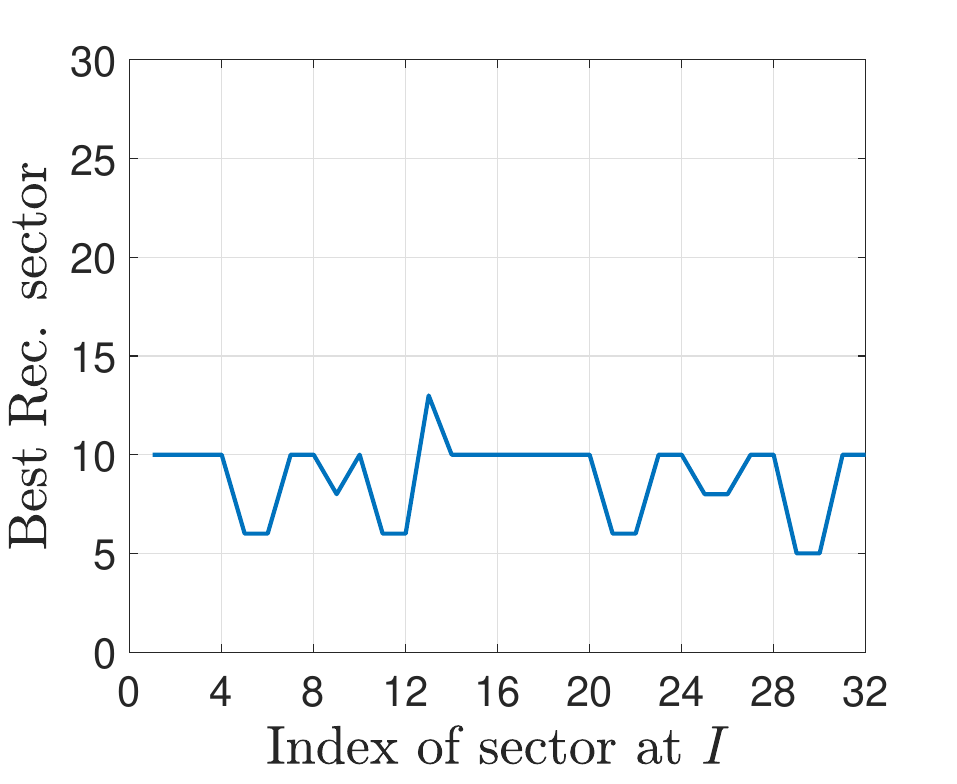} & 
  \includegraphics[width=0.55\columnwidth]{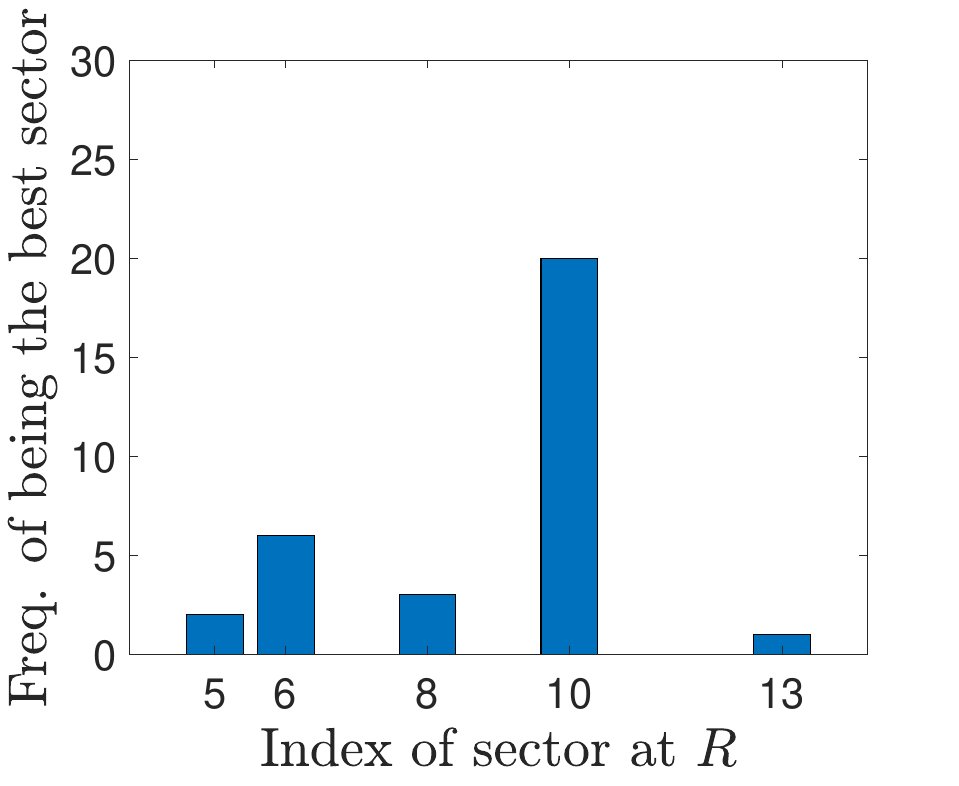} &
  \includegraphics[width=0.45\columnwidth]{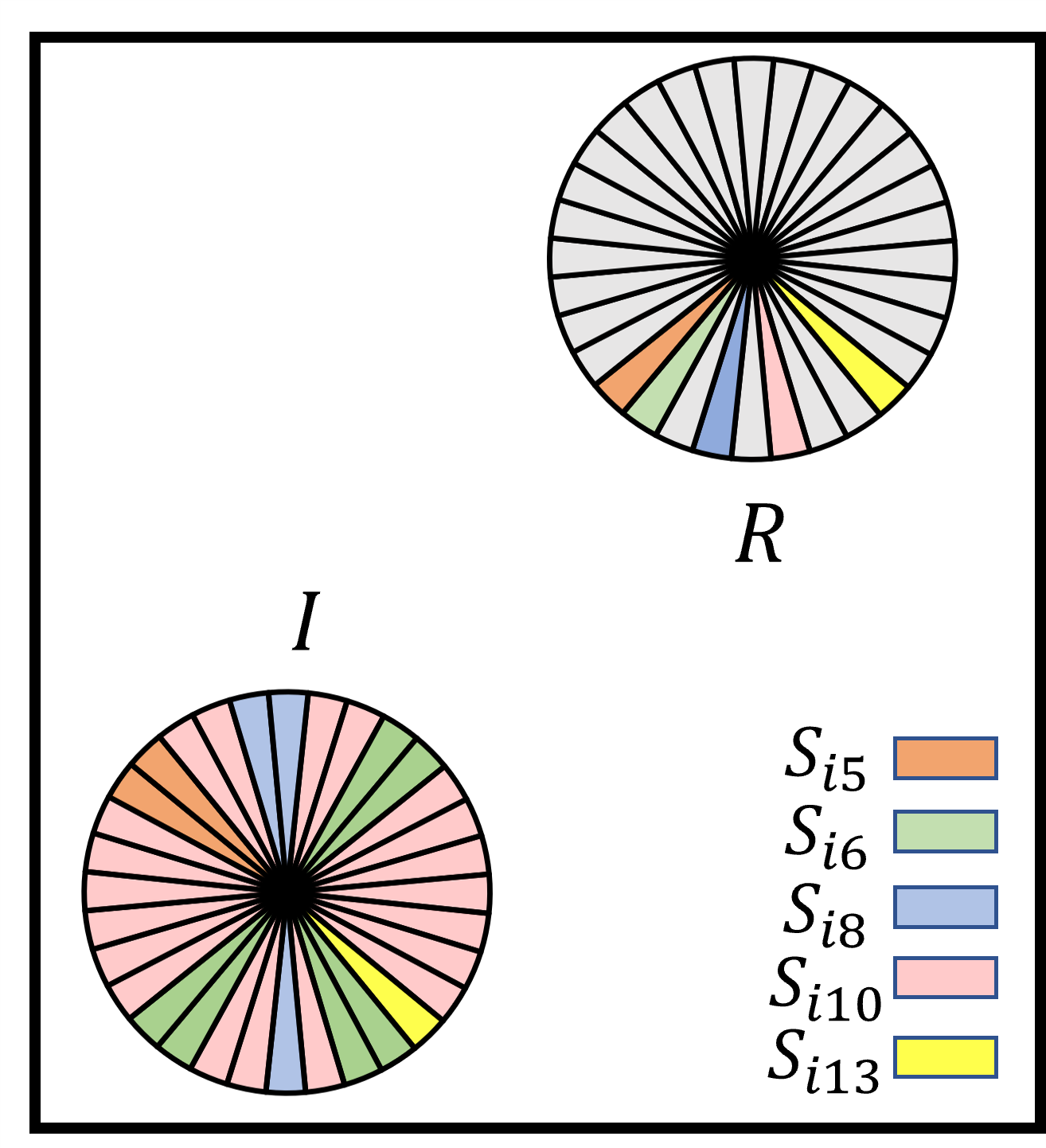} \\
  (a)  & (b)  & (c) & (d)
\end{tabular}
\caption{When LoS between $I$ and $R$ is unobstructed:(a) The best Tx and Rx pair $S_{ij}$ without relay attack. (b) $S_{ij}$ under relay attack. (c) the frequency of a Rx sector being chosen as the best receive sector under relay attack. (d) $S_{ij}$ in a real setting.}
\label{fig:AoA test LOS}
\vspace{-0.20in}
\end{figure*}

\begin{figure*}%
\centering
\setlength{\tabcolsep}{-3pt}
\begin{tabular}{cccc}
  \includegraphics[width=0.55\columnwidth]{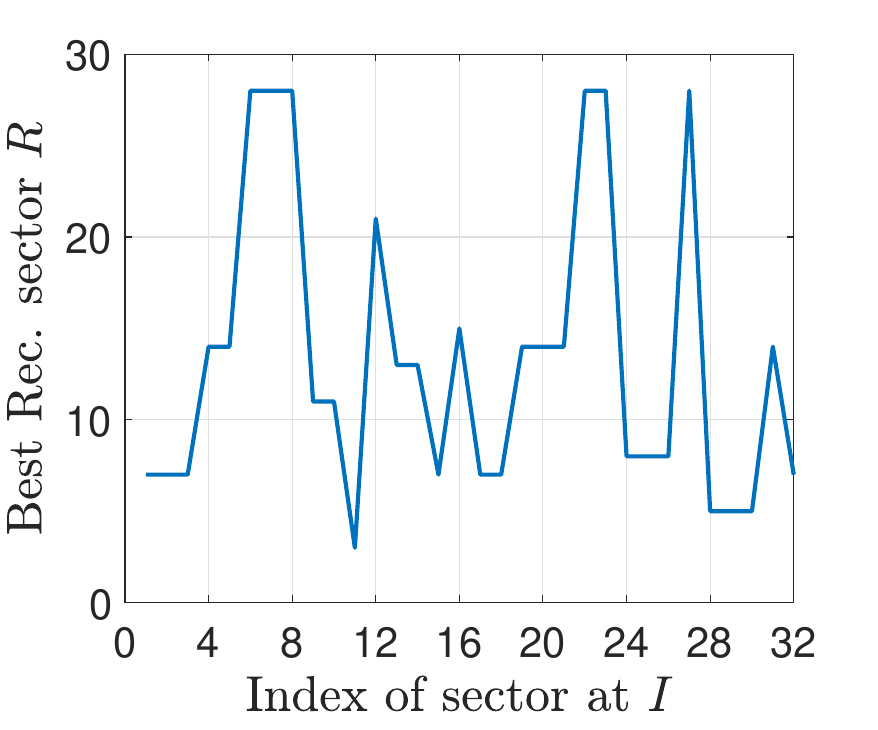} &
  \includegraphics[width=0.55\columnwidth]{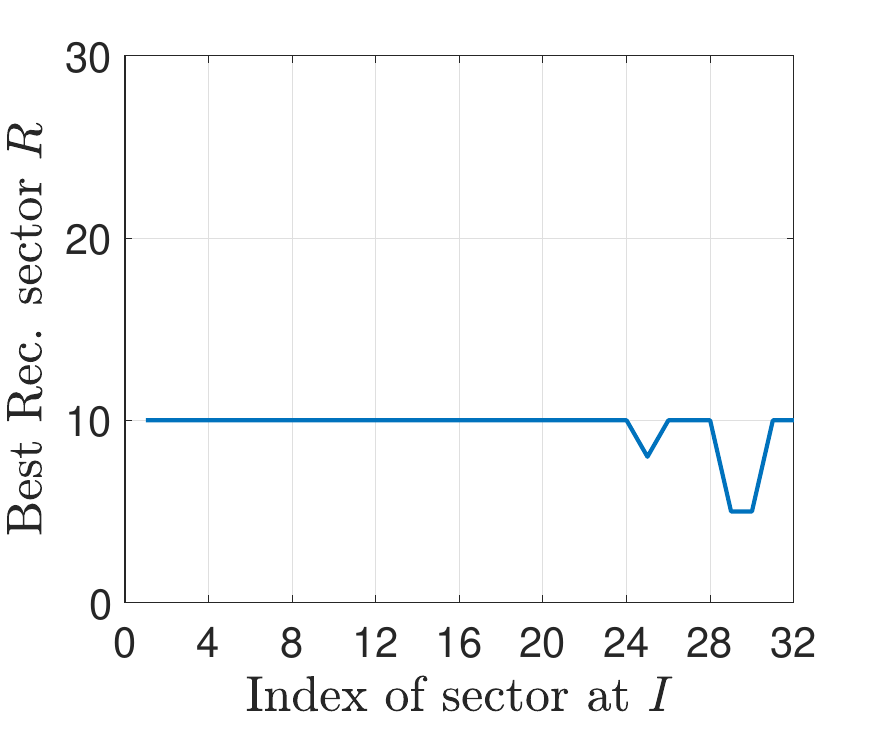} & 
  \includegraphics[width=0.55\columnwidth]{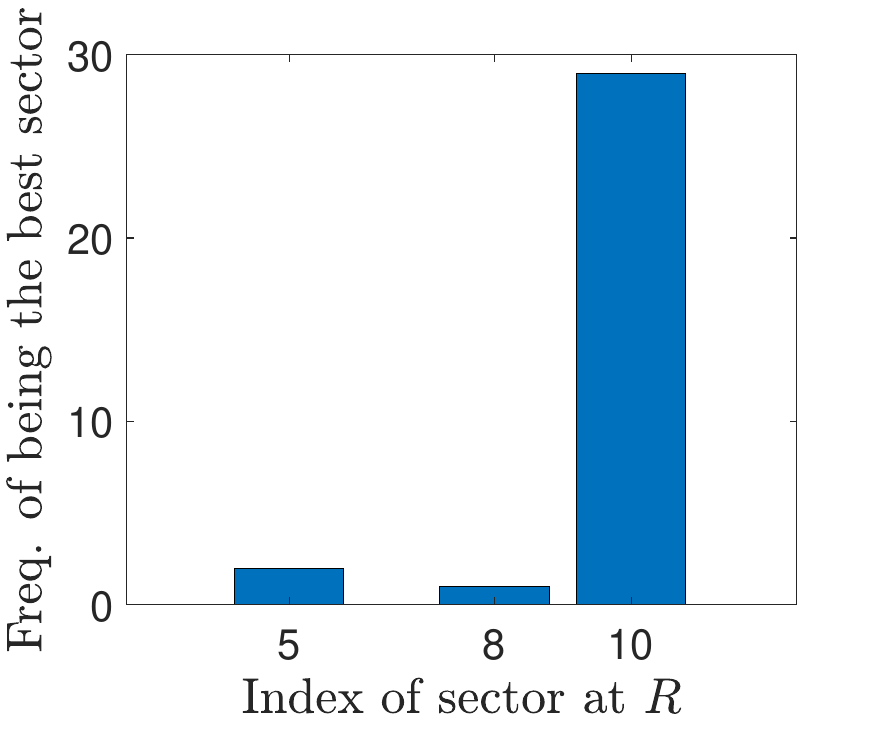} &
  \includegraphics[width=0.45\columnwidth]{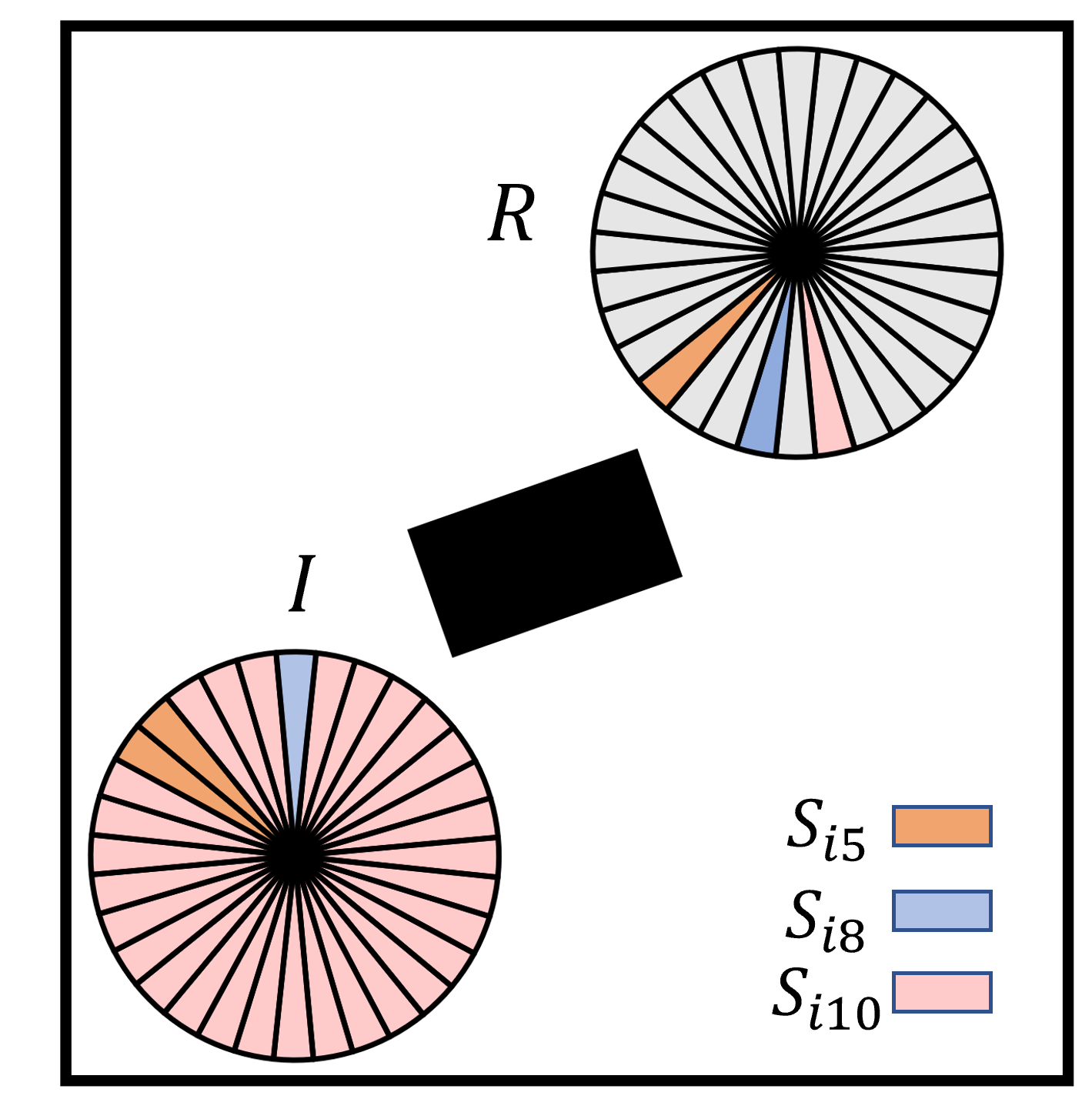} \\
  (a)  & (b)  & (c) & (d)
\end{tabular}
\caption{When LoS between $I$ and $R$ is blocked:(a) The best Tx and Rx pair $S_{ij}$ without relay attack. (b) $S_{ij}$ under relay attack. (c) the frequency of an Rx sector being chosen as the best receive sector under relay attack. (d) $S_{ij}$ in a real setting. Sectors at RX are the best receive sectors for the transmit sectors indicating in the same color in Tx.}
\label{fig:AoA test NLOS}
\vspace{-0.20in}
\end{figure*}

\subsection{Parameter Selection}
The SecBeam protocol is parameterized by the threshold of path loss difference $\epsilon$ and the frequency of one sector being chosen as the best receive sector $\beta$. In the following, we use simulations to  help with selection of these parameters. 

{\bf Selecting $\epsilon$.} The value of $\epsilon$ needs to be selected high enough to prevent false alarms when $\mathcal{M}$ is not present. The path loss can be generally written as, 
\begin{equation}
    PL(dB)=G_T(dBi) +G_R(dBi) - L_{ch}(dB)+ X_{\sigma}(dBm),
    \label{eq:path loss}
\end{equation}
where $P^T$ is the transmit power, $G^T$ and $G^R$ are the respective antenna gains, $L_{ch}$ is the channel attenuation according to some unknown channel model, and $X_{\sigma}$ is a zero-mean Gaussian noise with a standard deviation $\sigma$. Assuming fixed locations and paths, the path loss difference $\Delta P(dB)$ between the two rounds is,
\begin{equation}
    \Delta P=\lvert PL_i(2)-PL_i(1) \rvert=\lvert X_{\sigma}(2) - X_{\sigma}(1) \rvert.
\end{equation}

The path loss difference $\Delta P$ follows a folded normal distribution with a $\frac{\sigma \sqrt{2}}{\sqrt{\pi}}$ mean and $\sigma^2(1-\frac{2}{\pi})$ variance. A typical value for $\sigma$ in the indoor environment is  1.8 \cite{wu201528}. The probability density function (PDF) and the cumulative distribution function (CDF) of $\Delta P$ is shown in Fig.~\ref{PDF and CDF}. In order to ensure a specific passing rate of valid devices, we can select a suitable value for $\epsilon$ based on the CDF of $\Delta P$. For instance, in our study, we choose $\epsilon=6.6$dB to ensure that under benign conditions, the path loss difference due to noise is less than $\epsilon$ with probability 99$\%$.

{\bf Selecting $\beta$.} We then used our simulation results to determine the threshold $\tau$. The optimal value of $\beta$  depends on the physical environment. We listed five different simulation setups in Table. \ref{diff simu setup} where we vary  the room size, locations of $I$ and $\mathcal{M}$, and the amount of amplification (Amp.) used by $\mathcal{M}$. The total number of sectors is  $N=32$ in all scenarios. We extract the highest frequency $f_j, j=1, \dots, N$ in each setup with and without a relay attack, and the results are shown in Fig. \ref{diff setup}. In all the scenarios, there are a clear distinction between the beam selection frequencies of benign and adversarial settings, which shows the feasibility of the coarse AoA test. 

In addition, we performed 1,000 simulations for   $N=16, 32$, and 64 with different combinations of the locations of $I$ and $R$, room size, and amplification gain. The passing rate of legitimate devices and $\mathcal{M}$ versus $\beta$ is shown in Fig. \ref{passing rate beta}. We can see that the passing rate of $I$ and $FA$ attacker ($P_I$ and $P_{FA}$) both increase with $\beta$. When $\beta=0.5$, $P_I=1$ while $P_{FA}=0$. So we conclude that $\beta=0.5$ is sufficient to detect the attack for all values of $N$. It is worth mentioning that when the LoS between $I$ and $R$ is blocked, the frequency of one sector being selected as the best receive sector is higher than those shown in Fig. \ref{diff setup}, since the LoS between $\mathcal{M}$ and victims is always unobstructed. We discuss this in Sec. \ref{against fixed amp. attack}.

\subsection{Simulation Results} 
\label{Simulation results of the second method}
\subsubsection{Fixed power relay attack} We use Setup 1 to demonstrate the security of SecBeam against the fixed power relay attack. We show the results of the RSS measured by $R$ in two rounds and the path loss difference when LoS is unobstructed and blocked in Fig. \ref{fig:RSS and path loss difference}. From Fig. \ref{fig:RSS and path loss difference}(a),   sector 22 is chosen as the best sector in the first round, while in Fig. \ref{fig:RSS and path loss difference}(c), the best sector is   30 because the LoS is blocked. 
We can also see that the path loss difference across two rounds is smaller than the threshold $\epsilon$ for all sectors, no matter whether the LoS is blocked or not.

When the adversary is present, the results are shown in Fig. \ref{path loss diff with M}. When $\mathcal{M}$ chooses different sectors in two SSW rounds, e.g. sector 30 in the first round and sector 20 in the second round (Fig. \ref{path loss diff with M}(b)), the path loss difference exceeds $\epsilon$ in both cases. Even if $\mathcal{M}$ could choose the same sector in two SSW rounds coincidentally, (e.g. Sector 20 in Fig. \ref{path loss diff with M}(a)), the pass loss difference is still higher than  $\epsilon$ due to the relay at fixed power and the power randomization by $I$.

\subsubsection{Fixed amplification attack}
\label{against fixed amp. attack}
Next, we  show the simulation results when $\mathcal{M}$ uses the fixed amplification relay strategy under simulation setup 2. We  assume that $\mathcal{M}$ amplifies all the received beams by 60 dB and relays  them to $R$. When LoS is unobstructed, the results are shown in Fig. \ref{fig:AoA test LOS}. From Fig. \ref{fig:AoA test LOS}(a), we can see that in the benign case, even if some receive sectors, e.g. Sectors 6 and 8, are chosen more than once, $f_6=7/32$ and $f_8=7/32$ are less than the threshold $\beta=0.5$. However,  in Fig. \ref{fig:AoA test LOS}(b) and (c), $f_{10}=20/32$ under the FA attack, which is higher than the threshold, thus   $\mathcal{M}$ is detected. 
When the LoS is blocked, the results are shown in Fig. \ref{fig:AoA test NLOS}. In this case, there is no dominant path between $I$ and $R$, and the obstacle  acts as a reflector. The best receive sector does not present any strong directionality. Comparing with Fig. \ref{fig:AoA test LOS}(b) and (c), we can observe sector 10 is the best receive sector for 29 transmit sectors in Fig. \ref{fig:AoA test NLOS}(b) and \ref{fig:AoA test NLOS}(c) as expected.

\section{Experimental Evaluation}

\subsection{Experimental Setup}
\label{Exp setup}
To demonstrate the effectiveness and security of SecBeam in the real world, we implemented it using our mmWave testbed as described in Section~\ref{feasibility amplify attack}. 

Figure \ref{overall_setup} provides an overview of the experimental setup, where $d_{IM}$ was 3m and $d_{MR}$ was 2.4m. To increase the number of reflection paths in the environment, in certain setups, we added a whiteboard and a stop sign in the room. 
The relay attack was implemented by synchronizing $\mathcal{M},$ $I$ and $R$ via the Octo-clock module. This way the adversary could transmit the same SSF frame synchronously with the legitimate devices, without incurring the hardware delay of USRPs for switching between reception and transmission.

During the experiments, the transmitter sent 10,000 packets in each sector with a 5 milliseconds delay between packets to facilitate data collection (In reality,  only one packet will be sent in each sector). In the first round, $I$ swept in its maximum transmit power $P_{max}=30$dBm. In the second round, $I$ randomly chose the transmit power sequence for each sector. For the fixed power relay attack, the adversary chose signals from one Tx sector and relayed them to the receiver using $P_{max}$. For the fixed amplification relay attack, $\mathcal{M}$ amplified its received signals from $I$ with a fixed amount and then relayed them to $R$.

\begin{figure}[t]
\centering
\includegraphics[width=0.8\linewidth]{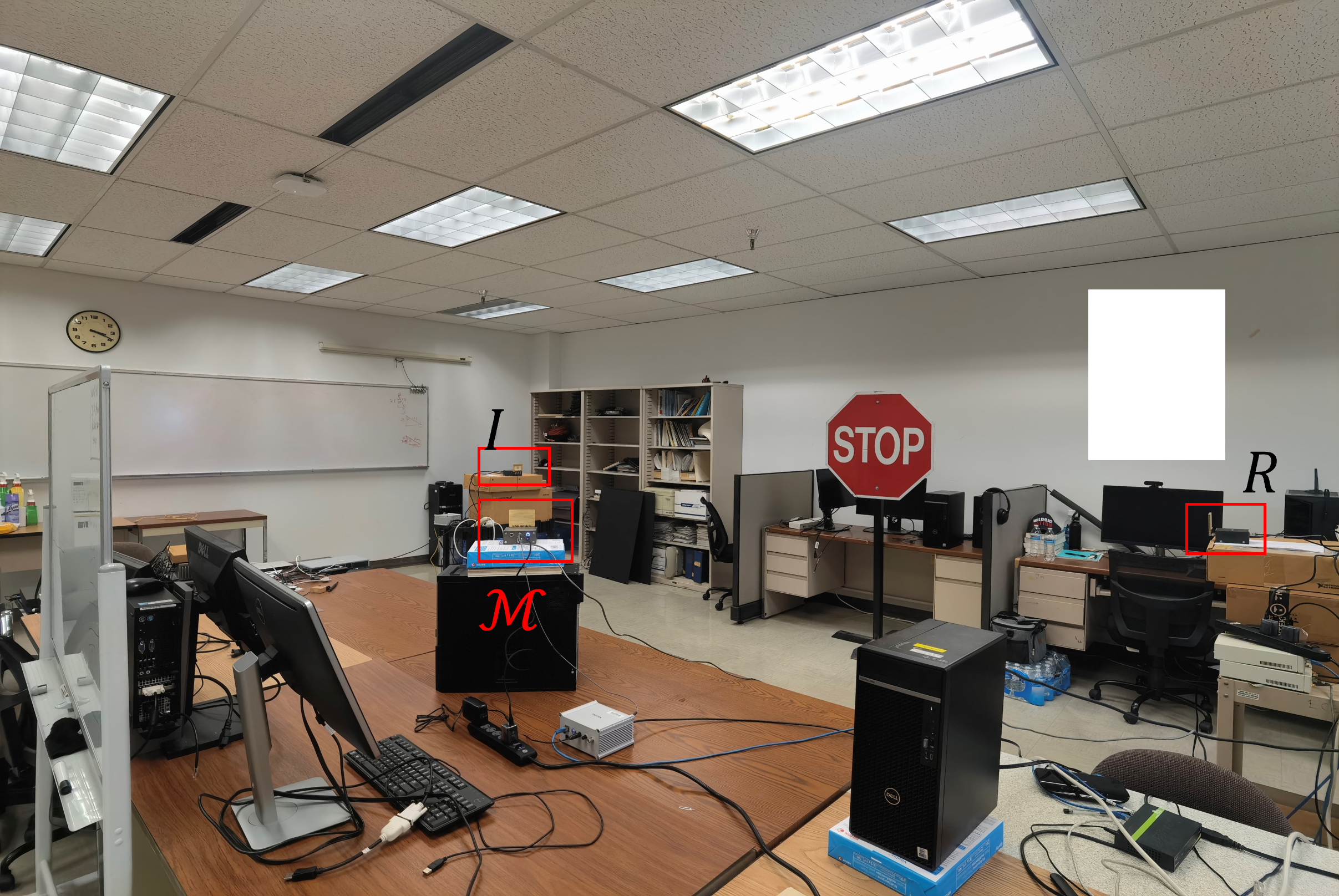}
\caption{Experimental setup.}
\label{overall_setup}
\vspace{-0.20in}
\end{figure}


\begin{figure}[t]%
\centering
\setlength{\tabcolsep}{-3pt}
\begin{tabular}{cc}
  \includegraphics[width=0.55\columnwidth]{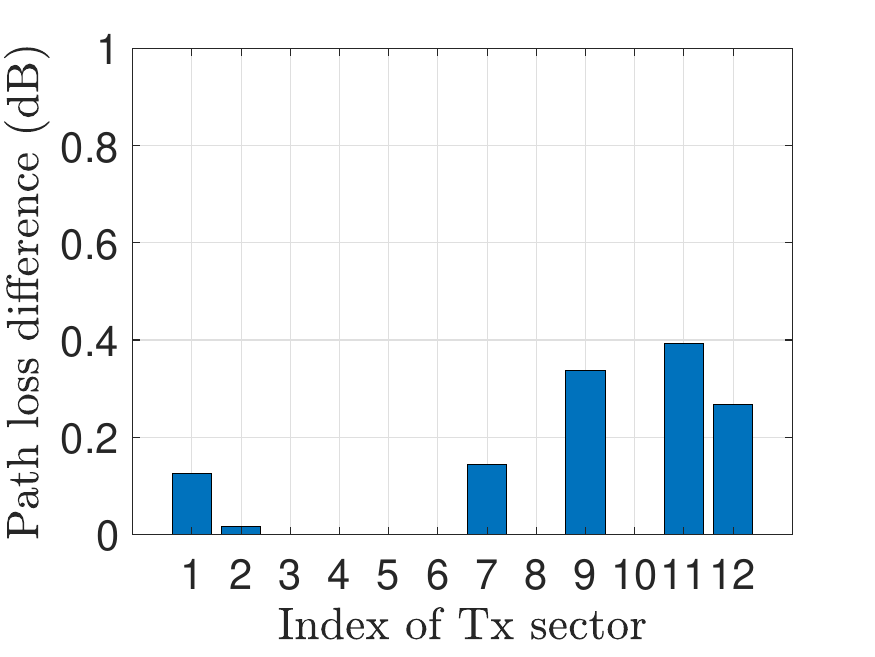} &
  \includegraphics[width=0.55\columnwidth]{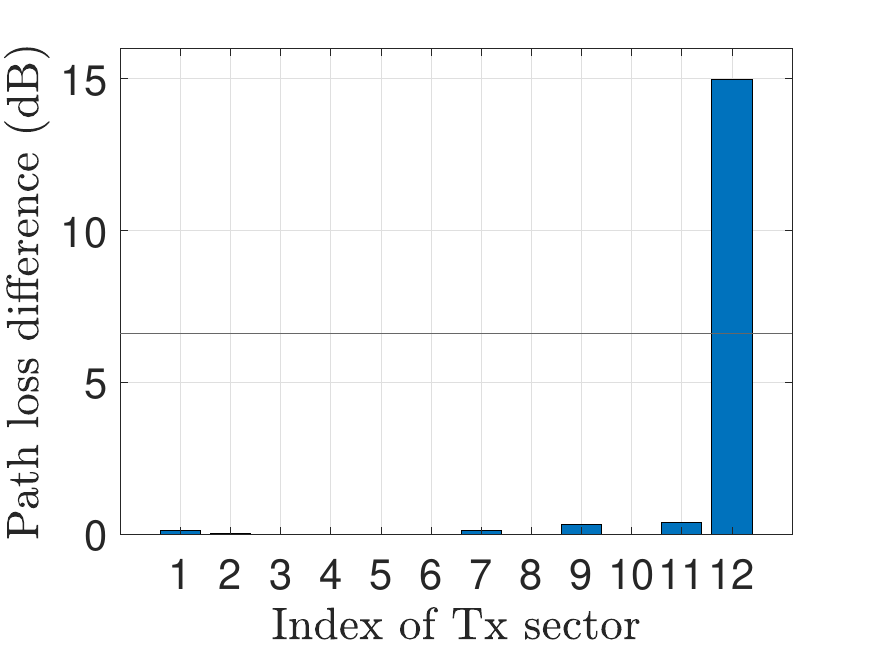} \\ 
  (a) without attack & (b) under attack
\end{tabular}
\caption{Path loss difference without and under attack during the initiator sector sweep phase.}
\label{exp path loss difference}
\vspace{-0.20in}
\end{figure}

\begin{figure}[t]%
\centering
\setlength{\tabcolsep}{-3pt}
\begin{tabular}{cc}
  \includegraphics[width=0.55\columnwidth]{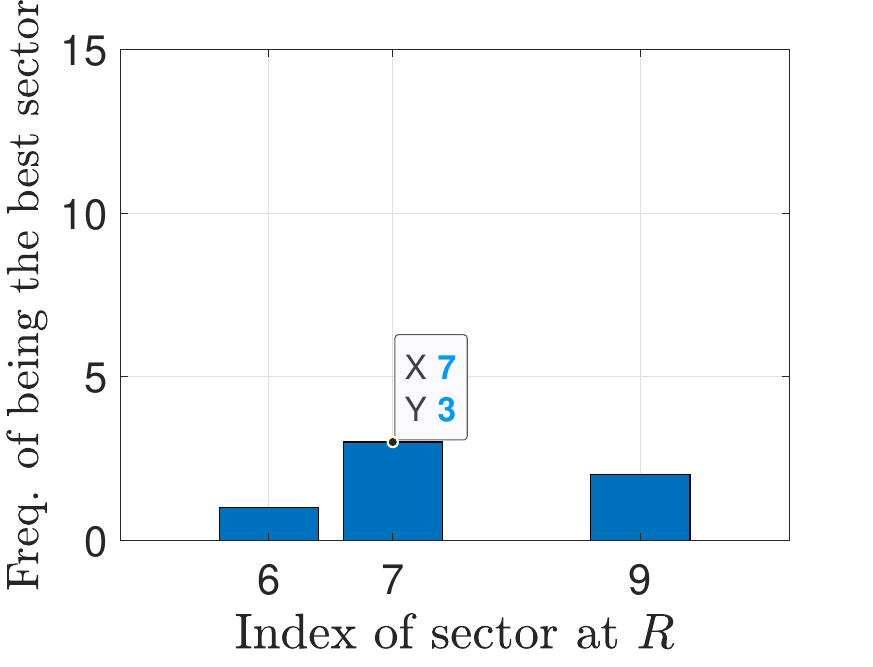} &
  \includegraphics[width=0.55\columnwidth]{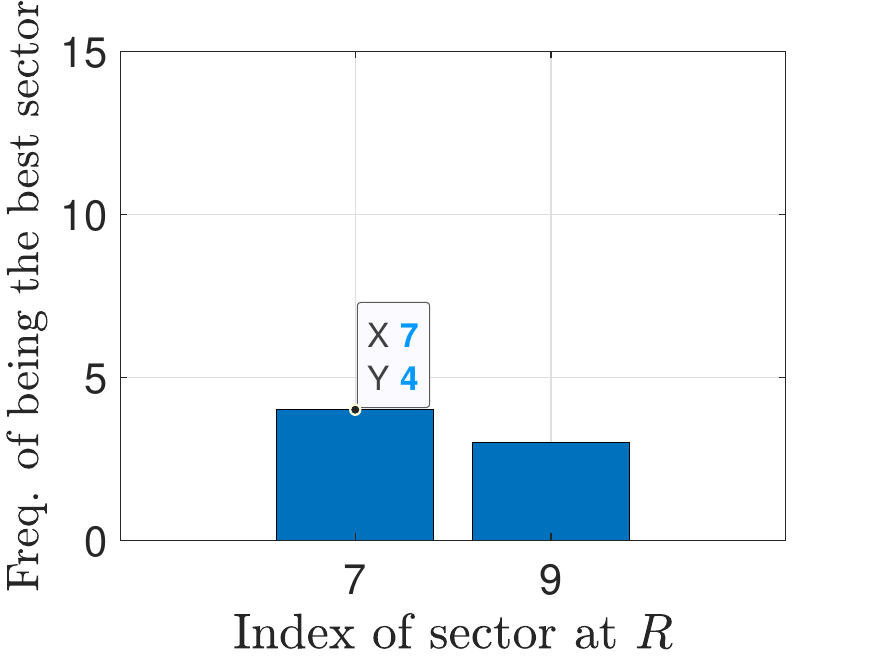} \\ 
  (a) & (b) 
\end{tabular}
\caption{Experimental results without $\mathcal{M}$: (a) frequency of a sector being chosen as the best receive sector with artificial reflectors, (b) Frequency of a sector being chosen as the best receive sector without the artificial reflectors.}
\label{fig:exp rss and freq}
\vspace{-0.20in}
\end{figure}

\subsection{Security Evaluation of SecBeam}

\subsubsection{Security  against the fixed power relay attack}

In our experimental setup, there are 7 transmit sectors that can be detected by $R$ in the first round of sector sweep. The transmit powers for each sector are randomly chosen as $P_i^T(2)=\{16, 7, 22, 0, 10, 9, 18, 19, 4, 11, 10, 15\}$dBm in the second round of sector sweep. The SNR measured at $R$ in two rounds of SSW consist of two matrices of dimension  $10,000 \times 7$. Using these SNR data and the known transmit power values, we calculated the average path loss differences over 10000 measurements for each sector, shown in Fig. \ref{exp path loss difference}. We observe that when there is no adversary present, the average path loss difference for all sectors remains under 0.5 dB, and all 10000 path loss differences for each transmit sector are smaller than the threshold $\epsilon=6.6$. However, when the $FP$ attack is carried out, the average path loss difference for the affected sector ($r_{12}$ in this example) is 14.9875 dB, which is greater than $\epsilon=6.6$.

\subsubsection{Security  against fixed amplification relay attack}
We first present  the  frequency of any receive sector being chosen as the best sector for legitimate devices, as illustrated in Figure~\ref{fig:exp rss and freq}. From Fig.~\ref{fig:exp rss and freq}(a), when the artificial reflectors are present in the environment, the frequency of choosing sector 7 at $R$ as the best sector is found to be $f_7=3/12$. However, after removing the reflectors, in Figure~\ref{fig:exp rss and freq}(b), sector 7 at $R$ was chosen 4 times out of 12 as the best receiving sector, resulting in $f_7=4/12$. The values of $f_7$ in both environments are lower than the  threshold $\beta$, which indicates that the legitimate devices can pass the coarse AoA test.

In the presence of an adversary, we conduct  an experiment where the adversary attempts to relay all  the received signals to the intended receiver $R$. Our experimental results, as shown in Fig. \ref{fig:exp rss and freq no ref}, revealed that the probability of the adversary successfully relaying signals to $R$ from sector 3 was high ($f_3=9/12$) and exceeded the  threshold value of $\beta=0.5$, indicating that the $FA$ attack is detected. We show the best transmit and receive beam pair $S_{ij}$ for this setup in Fig. \ref{fig:exp rss and freq no ref}(c). The black sectors   in Fig. \ref{fig:exp rss and freq no ref}(c) means that signals sent using these sectors cannot be received by $R$. When we removed the reflectors, $f_3$ further increased to 11/12 as shown in Figs. \ref{fig:exp rss and freq no ref}(b) and (d), which is above the threshold value of $\beta$. The adversary is detected. 


\begin{figure}[t]%
\centering
\setlength{\tabcolsep}{-3pt}
\begin{tabular}{cc}
  \includegraphics[width=0.5\columnwidth]{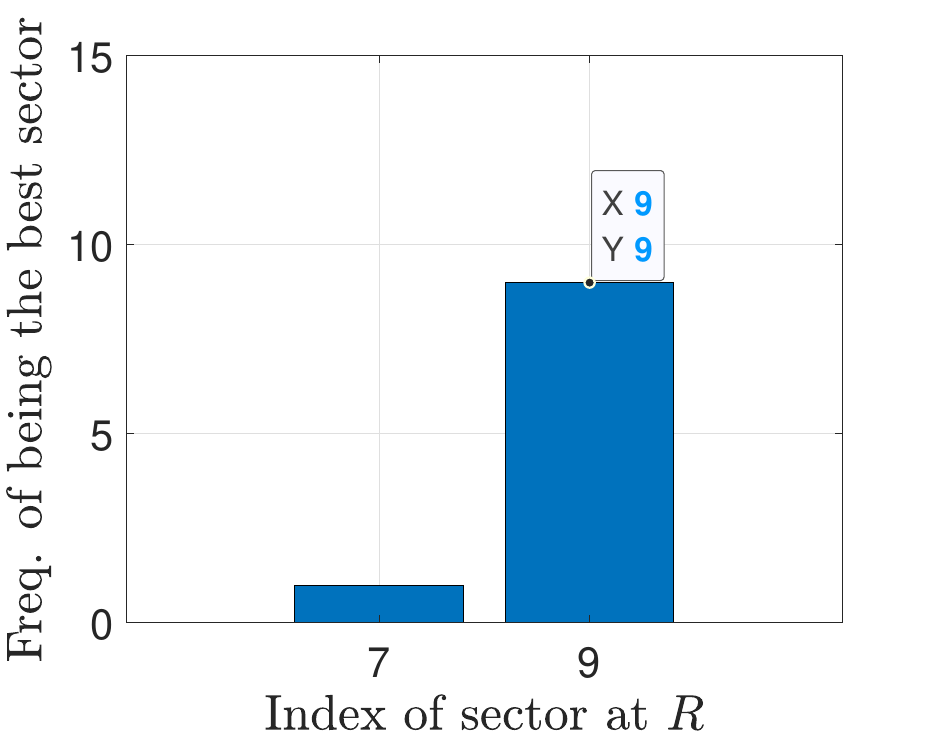} &
  \includegraphics[width=0.5\columnwidth]{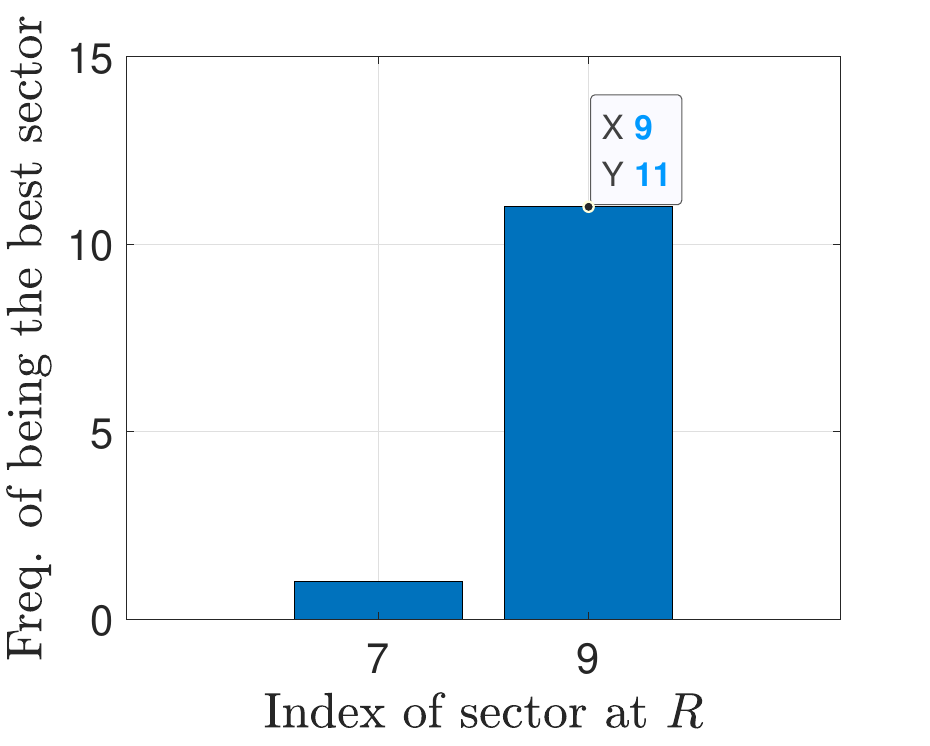} \\ 
  (a) & (b) \\
  \includegraphics[width=0.45\columnwidth]{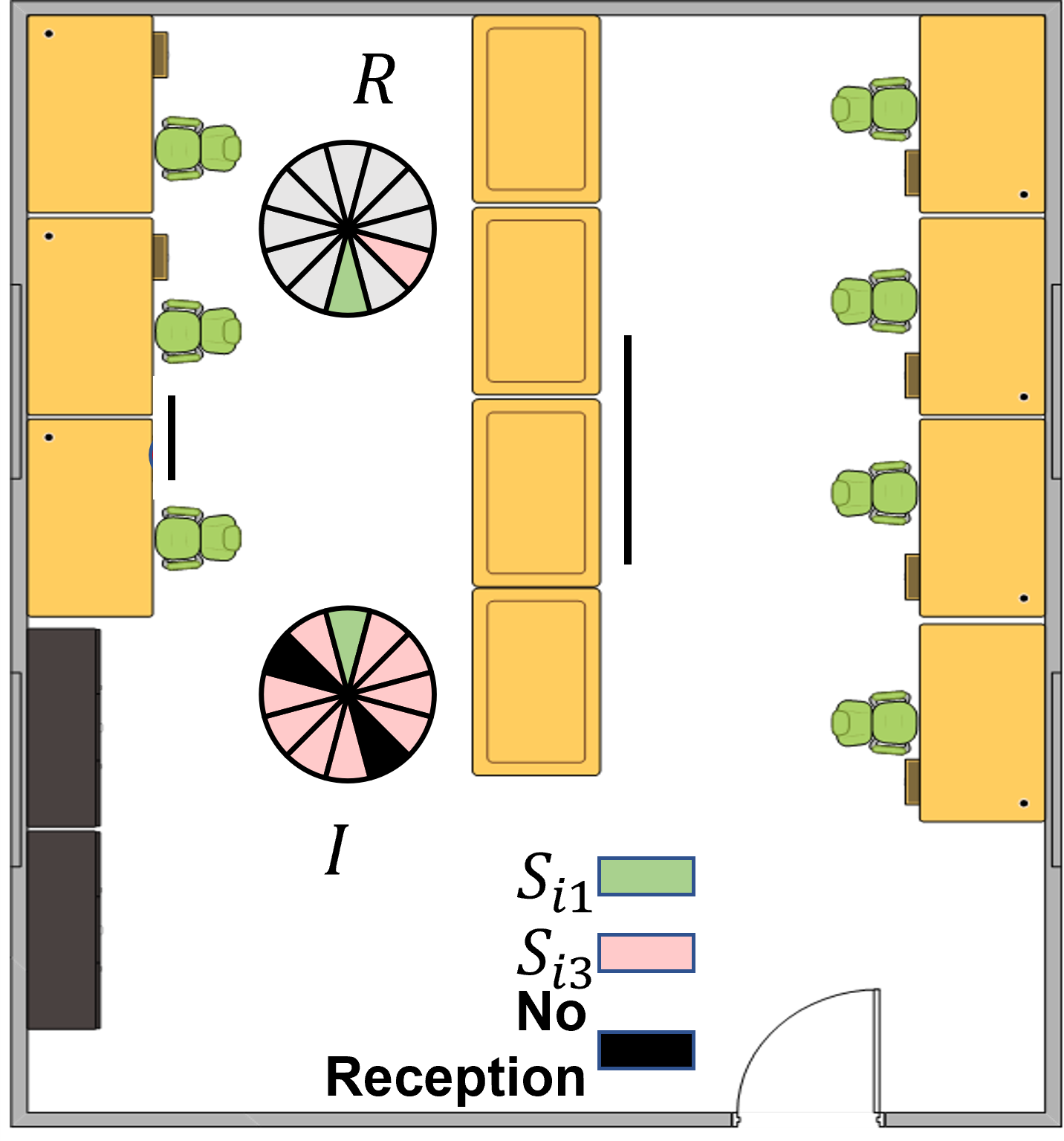} &
  \includegraphics[width=0.45\columnwidth]{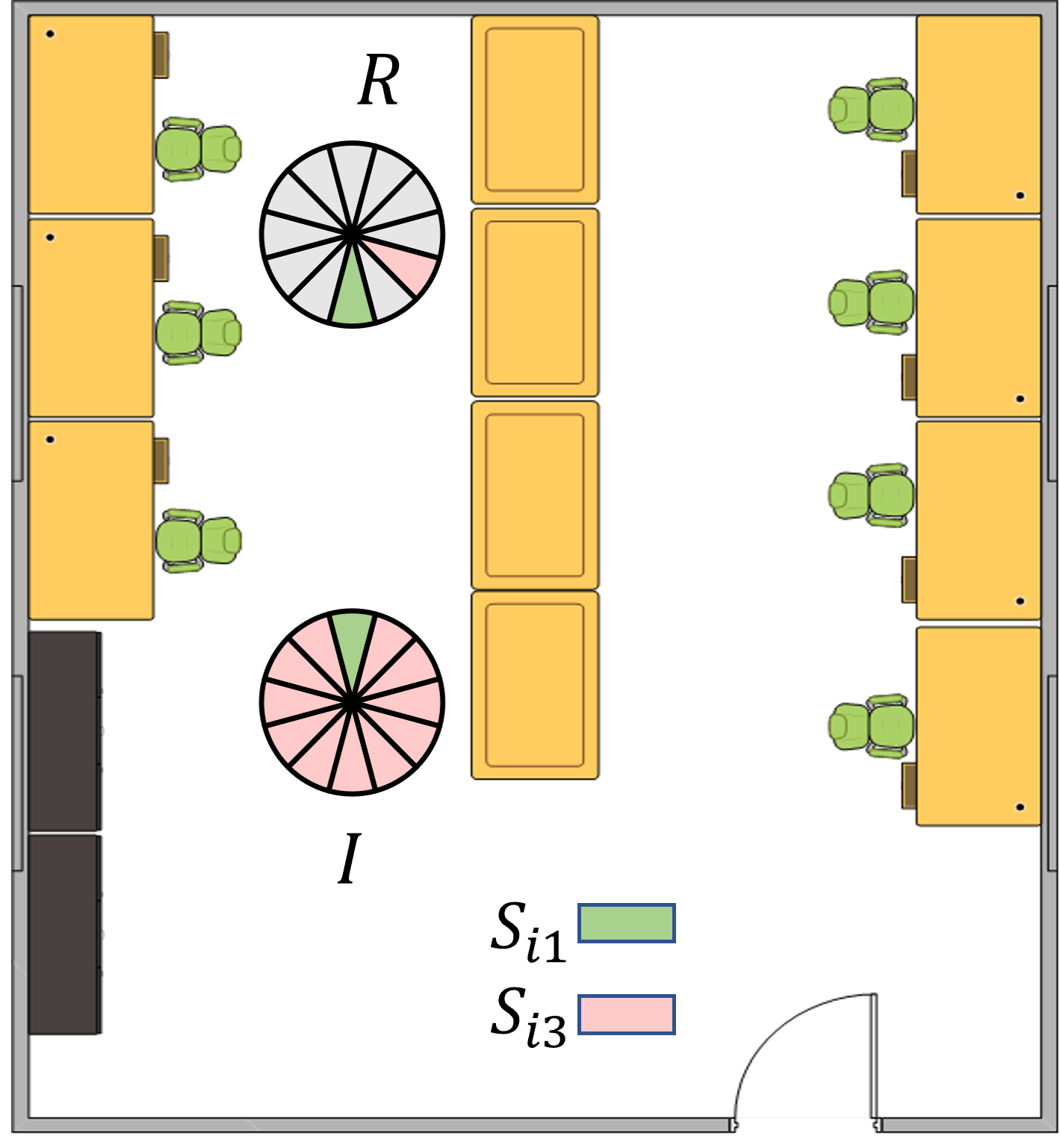} \\
  (c) & (d)
\end{tabular}
\caption{Experimental results under $FA$ relay attack: (a) Frequency of a sector being chosen as the best receive sector when artificial reflectors are in the environment. (b) Frequency of a sector being chosen as the best receive sector without artificial reflectors. (c) $S_{ij}$ on real mmWave devices with artificial reflectors. (d) $S_{ij}$ on real mmWave devices without artificial reflectors.}
\label{fig:exp rss and freq no ref}
\vspace{-0.20in}
\end{figure}

\subsection{Time Efficiency of SecBeam} 
The run time of SecBeam is dominated by the number of sectors at the devices. The time required for RSS and AoA measurements is negligible. Let the size of each sector sweep frame be $F$ bytes, the number of frames sent by each sector be $Y$, bandwidth be $BW$, both $I$ and $R$ have $N_q$ quasi-omni patterns, the time interval between frames sent by the same antenna, Short Beamforming Interframe Spacing (SBIFS), be $T_{S}$ and the time consumed to switch quasi-omni pattern, Long Beamforming Interframe Spacing (LBIFS), be  $T_{L}$. The total time for protocol execution is computed as $2N_q(N(\frac{YF}{BW} +(Y-1) T_S)+(N-N_q)T_S+(N_q-1)T_L)+2(N_q-1)T_L$. If we assume  $F=26$, $Y=1$, $T_{S}=1 \mu s$, and $T_{L}=18 \mu s$ as defined in the  802.11ad standard \cite{IEEE:802.11ad}, $N=32$, $N_q=6$, and $BW=160$ MHz, the total time consumed by SecBeam is 2.07 ms.

\section{Related Work}

\textbf{Relay attacks.} 
The amplify-and-forward beam stealing attacks is a relay type of attack. However, existing relay attacks against wireless systems have different purpose. The majority of relay attacks aim at violating the proximity/distance constraint  between two devices (a prover and a verifier), for applications where proximity-based access control is needed. For example, keyless entry and start systems for cars \cite{francillon2011relay, francis2011practical}, contactless payment systems with smartcards/RFID \cite{kfir2005picking}, etc. We emphasize that the beam-stealing attack differs from the above relay attacks, since it aims at distorting the path-loss property rather than proximity or distance. In addition, there is no location restriction imposed on the attacker.

There is also another type of relay attack  in the literature, namely signal cancellation/annihilation attack  \cite{vcapkun2008integrity, popper2011investigation, moser2019digital,  ghose2017help}, where  the  attacker aims at  causing destructive interference at the intended receiver(s) of a wireless communication link, which either leads to denial-of-service, or can be used to achieve message modification via arbitrary bit flips. However, signal cancellation is generally hard to carry out since it requires precise timing and phase synchronization, which is also  more challenging  in mmWave communication systems due to the directionality of the beams. In contrast,  the amplify-and-forward beam stealing attack is a form of MitM attack which disrupts beam selection so that the attacker has full control of the communication link between the legitimate devices (it can eavesdrop, modify, or perform DoS against the messages exchanged). In addition, it 
is easier to implement than signal cancellation.

\textbf{Relay Attack Defenses.} 
Existing defenses against relay attacks are not applicable to defend against beam-stealing attacks in general. For example,   distance bounding protocols \cite{drimer2007keep, abidin2019quantum, cremers2012distance} aim at enforcing a lower-bound of the distance from a prover to the verifier. However, we target general communication scenarios where no proximity/distance bound is imposed on the communicating devices. Although one can also  use distance bounding as a secure signal path length measurement to indirectly prove that no longer paths   than the dominating path (e.g., LoS) has a lower path-loss, implementing distance bounding typically requires advanced hardware capabilities \cite{tippenhauer2015uwb}.  Moreover, $I$ and $R$ have no knowledge of the wireless environment. If a stronger signal appears on a longer path, this could be because the shorter path experiences higher absorption or diffraction if it is not the LoS.

Existing relay attacks can also be detected by verifying the co-presence of devices via some common context, such as ambient conditions (e.g., sound, lighting, motion, RF signals, etc) \cite{ma2012location, urien2014elliptic, truong2014comparing, Xu_2022, mathur2011proximate}.  However, as stated earlier, they are not applicable to defend beam-stealing attacks due to the different purpose and assumptions of the attack.

On the other hand, another category of defenses utilize physical-layer identification/RF fingerprinting methods to identify the transmitting devices (or fingerprint the channel between the devices) via their signal characteristics, in order to detect impersonation attacks or distinguish   different devices. For example,  Balakrishnan $et\ al.$ \cite{balakrishnan2019physical} fingerprint mmWave devices based on    features extracted from their spatial-temporal beam patterns and demonstrated the resistance of their protocol against impersonation attacks. Wang $et\ al.$ \cite{wang2021exploiting, wang2020machine} adopted a similar idea, and utilized machine learning algorithms to identify each device based on its unique beam pattern. 
Although these methods can be applicable to detect beam-stealing attacks, they all require  a prior (secure) training  phase to extract beam patterns of known devices, which not only brings extra overhead but also does not work well when the channel condition changes or new devices are introduced to the network. In contrast, our proposed defense, SecBeam does not require any prior knowledge about the devices or channel environment.

Finally, we are aware of only one work that  proposed a  specific defense  against the beam-stealing attack under the IEEE 802.11ad protocol (Steinmetzer $et\ al.$ \cite{steinmetzer2018authenticating}). Since the original SLS protocol in IEEE 802.11ad  is unauthenticated which allows beam-stealing by simply modifying the SSW messages, they integrated   crypto machinery into the sector sweep process, in order to authenticate the SSW messages, which prevents the simple form of beam-stealing attacks. However, as we demonstrated in this paper, it is still vulnerable to amplify-and-relay beam-stealing attacks. It is an open problem to defend against such attacks.

\section{Conclusions and Future Work}
In this paper, we studied the security of the 802.11ad  beam alignment protocol for mmWave communications. Although prior work have added cryptographic protection to secure beam alignment messages, we   demonstrate that this protocol is still vulnerable to a new amplify-and-relay beam stealing attack that does not require message forging and bypasses cryptographic protections. We then propose a new secure beam sweeping protocol, SecBeam, which exploits power/sector randomization and coarse angle-of-arrival information to detect  amplify-and-relay attacks. Essentially, SecBeam constructs a physical layer commitment scheme that commits to the path loss of each beam. SecBeam does not require any prior knowledge of the physical environment and is compatible with the current 802.11ad standard. We theoretically analyzed the security of SecBeam, and used both ray-tracing simulations and real-world experiments on a mmWave testbed to evaluate the security of our protocol. Results show that SecBeam can detect two different types of amplify-and-relay attacks under realistic scenarios.
Future works will include extending our method to other efficient beam alignment protocols beyond 802.11ad, for example,  protocols with sub-linear complexity.

\bibliographystyle{IEEEtranS}
\bibliography{ref}

\begin{figure*}%
\centering
\begin{tabular}{ccc}
  \includegraphics[width=0.65\columnwidth]{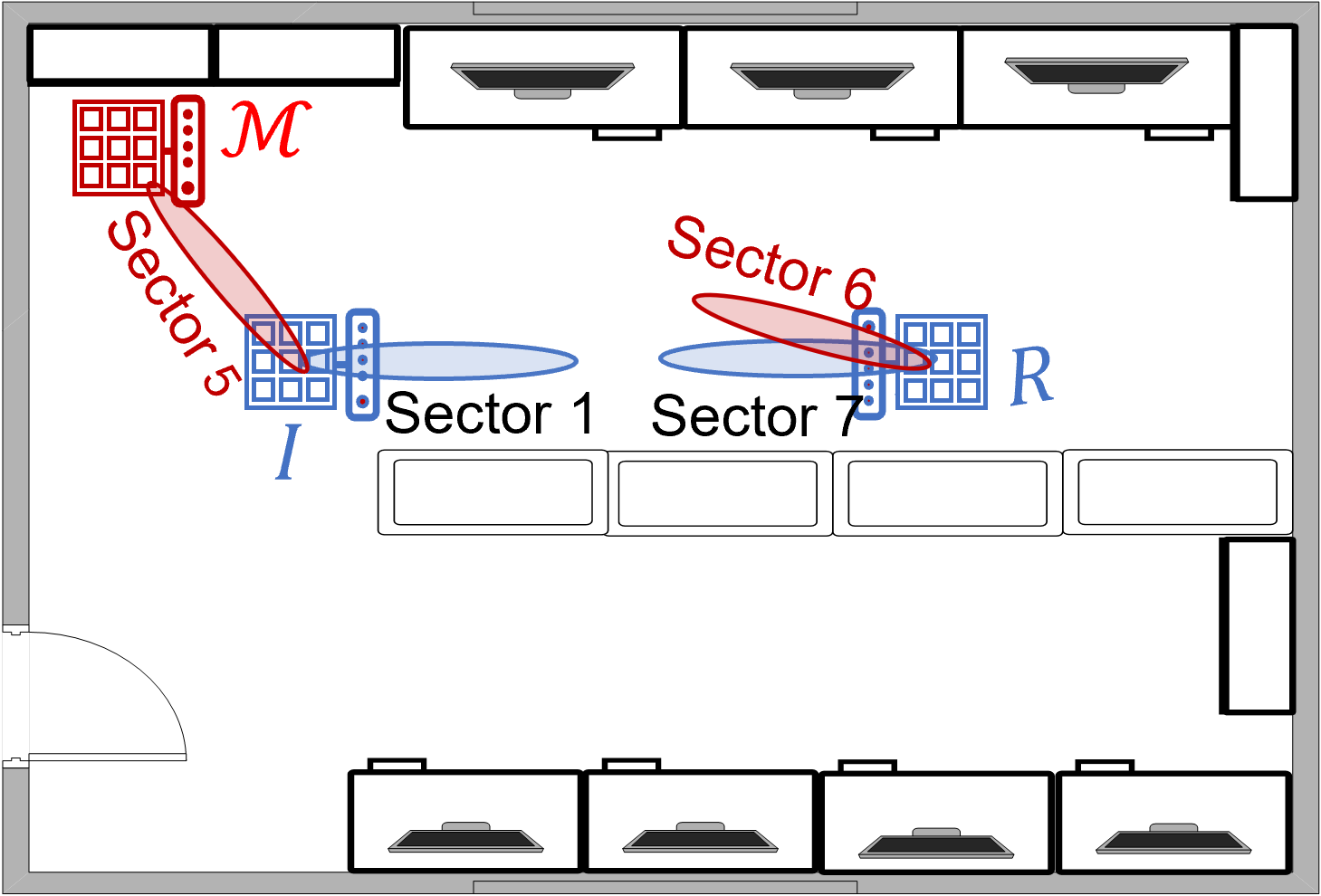} &
  \hspace{-0.1in}
  \includegraphics[width=0.75\columnwidth]{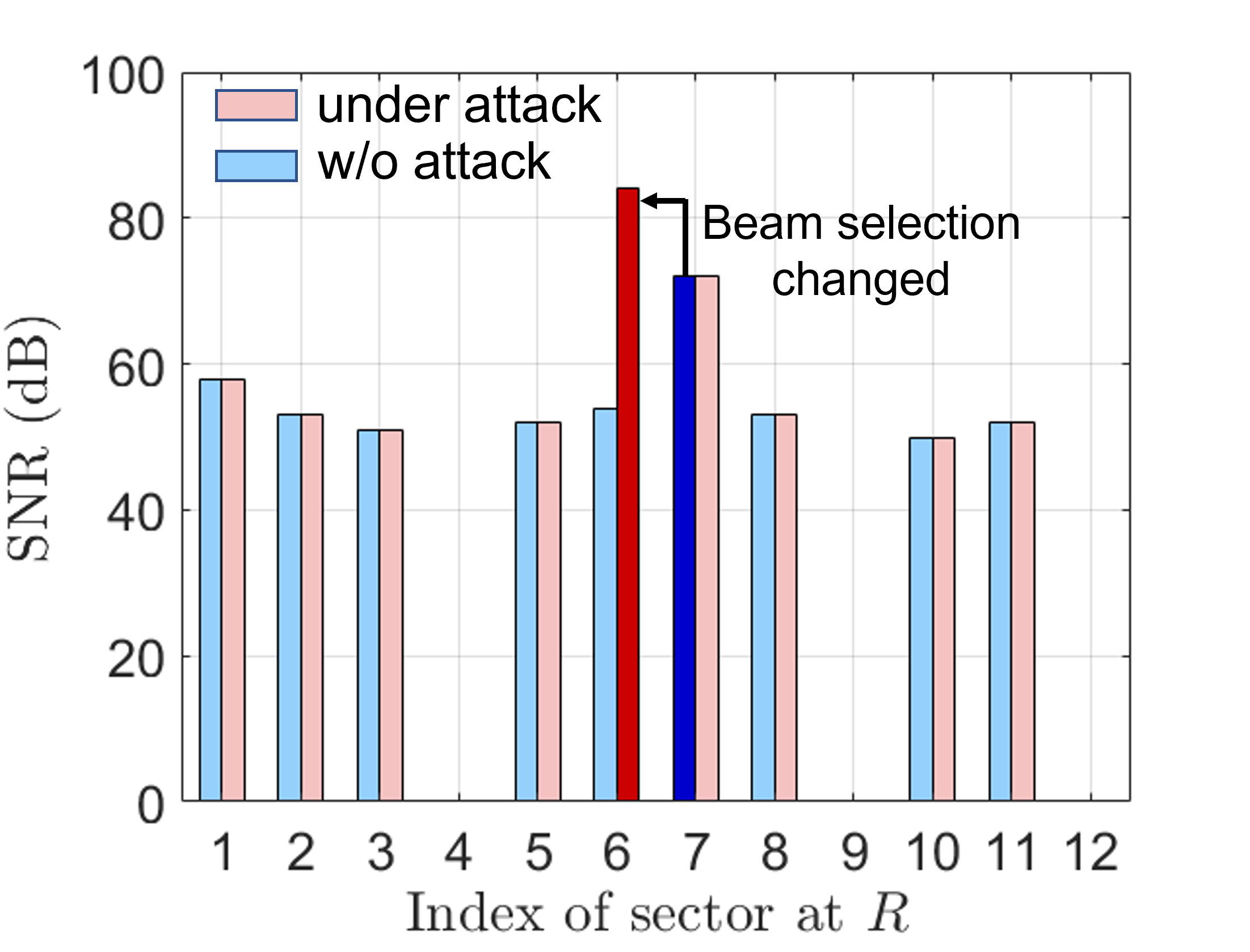} & 
  \hspace{-0.35in}
  \includegraphics[width=0.75\columnwidth]{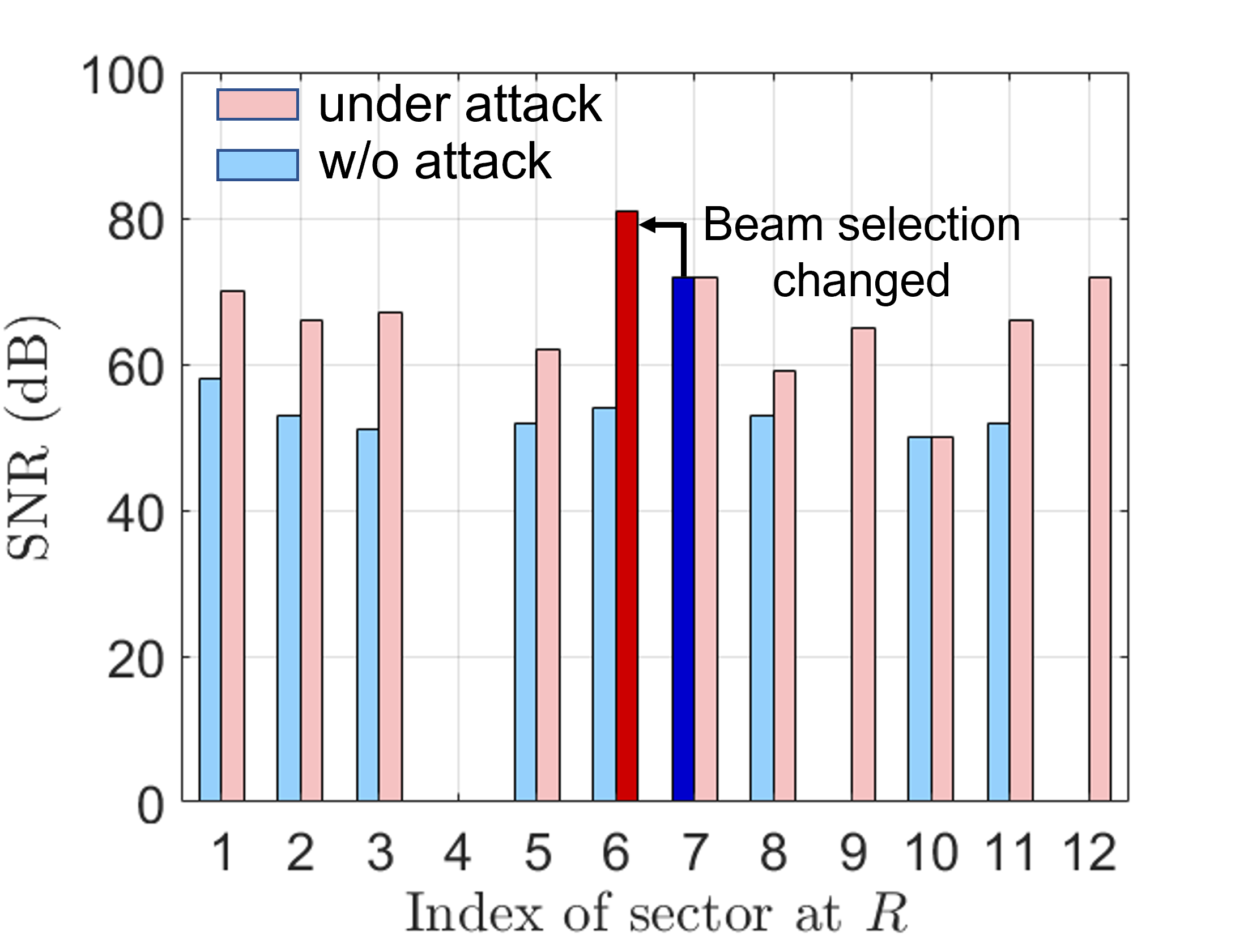} \\
  (a)Experimental setup  & (b)$\mathcal{M}$ relays in $FP$ strategy  & (c)$\mathcal{M}$ relays in $FA$ strategy
\end{tabular}
\caption{Amplify-and-relay beam-stealing attack setup and experimental results when $\mathcal{M}$ uses $FA$ and $FP$ relay strategies. $R$ chooses sector 7 without $\mathcal{M}$, and sector 6 when $\mathcal{M}$ relays in either $FA$ or $FP$ strategies.}
\label{fig:setup1 and result}
\vspace{-0.2in}
\end{figure*}

\newpage

\appendix
\subsection{Amplfy-and-Relay Beam-Stealing Attack in Setup 2}
\label{setup1}
The experimental setup, illustrated in Fig. \ref{fig:setup1 and result}(a), consists of devices that are located at distances of $d_{IM}=2.7 m$, $d_{IR}=4.3 m$, and $d_{MR}=6 m$. To achieve synchronization between devices $I$ and $\mathcal{M}$, an external clock (Ettus OctoClock-G) is used. In this configuration, $\mathcal{M}$ and $R$ are situated in different quasi-omni patterns of $I$, enabling $\mathcal{M}$ to relay sector sweep frames from $R$ to $I$ while $I$ is receiving in the quasi-omni pattern towards itself. Consequently, there is no need to synchronize $\mathcal{M}$ and $R$. The results of this setup are shown in Fig. \ref{fig:setup1 and result}(b)(c) and Fig. \ref{setup1 results on I}. When there is no adversary, $I$ and $R$ select sector 1 and sector 7 as the best beam pair.

Fig. \ref{fig:setup1 and result}(b) and Fig. \ref{setup1 results on I}(a) demonstrate the impact of $\mathcal{M}$ relaying using $FP$ strategy, leading to $R$ and $I$ switching to sector 6 and sector 5, respectively. In Fig. \ref{fig:setup1 and result}(c) and Fig. \ref{setup1 results on I}(b), $\mathcal{M}$ relays using $FA$ strategy, resulting in the selection of sector 5 at $I$ and sector 6 at $R$ again. Ultimately, $\mathcal{M}$ successfully directs $I$ and $R$ to choose beams pointing towards him in both strategies. We further verified that without $\mathcal{M}$, $I$ and $R$ were unable to communicate using sector 5 and sector 6. These results are in line with those presented in Sec. \ref{feasibility amplify attack}.

\begin{figure}[t]%
\centering
\setlength{\tabcolsep}{-5pt}
\begin{tabular}{cc}
 \includegraphics[width=0.55\columnwidth]{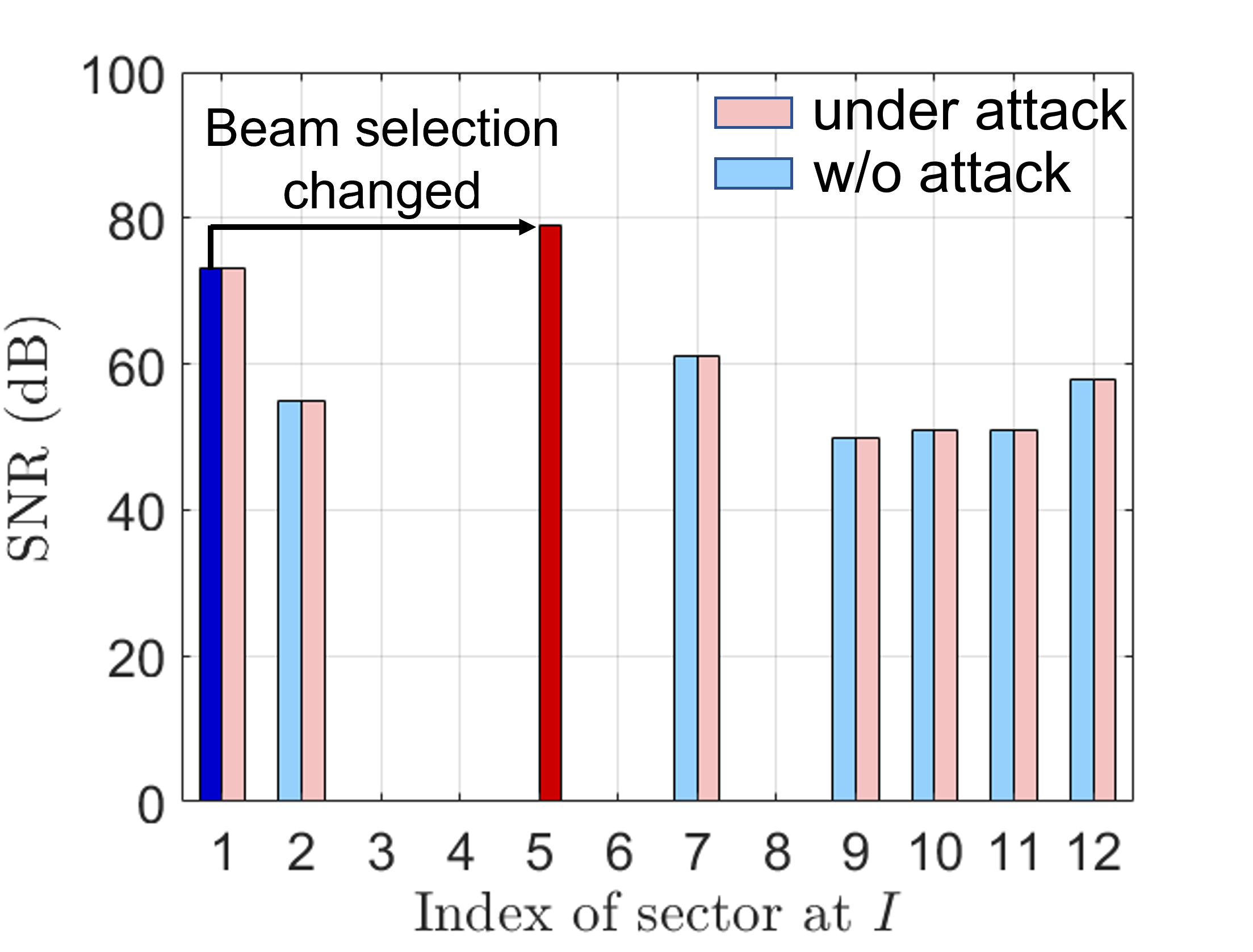} &
  \includegraphics[width=0.55\columnwidth]{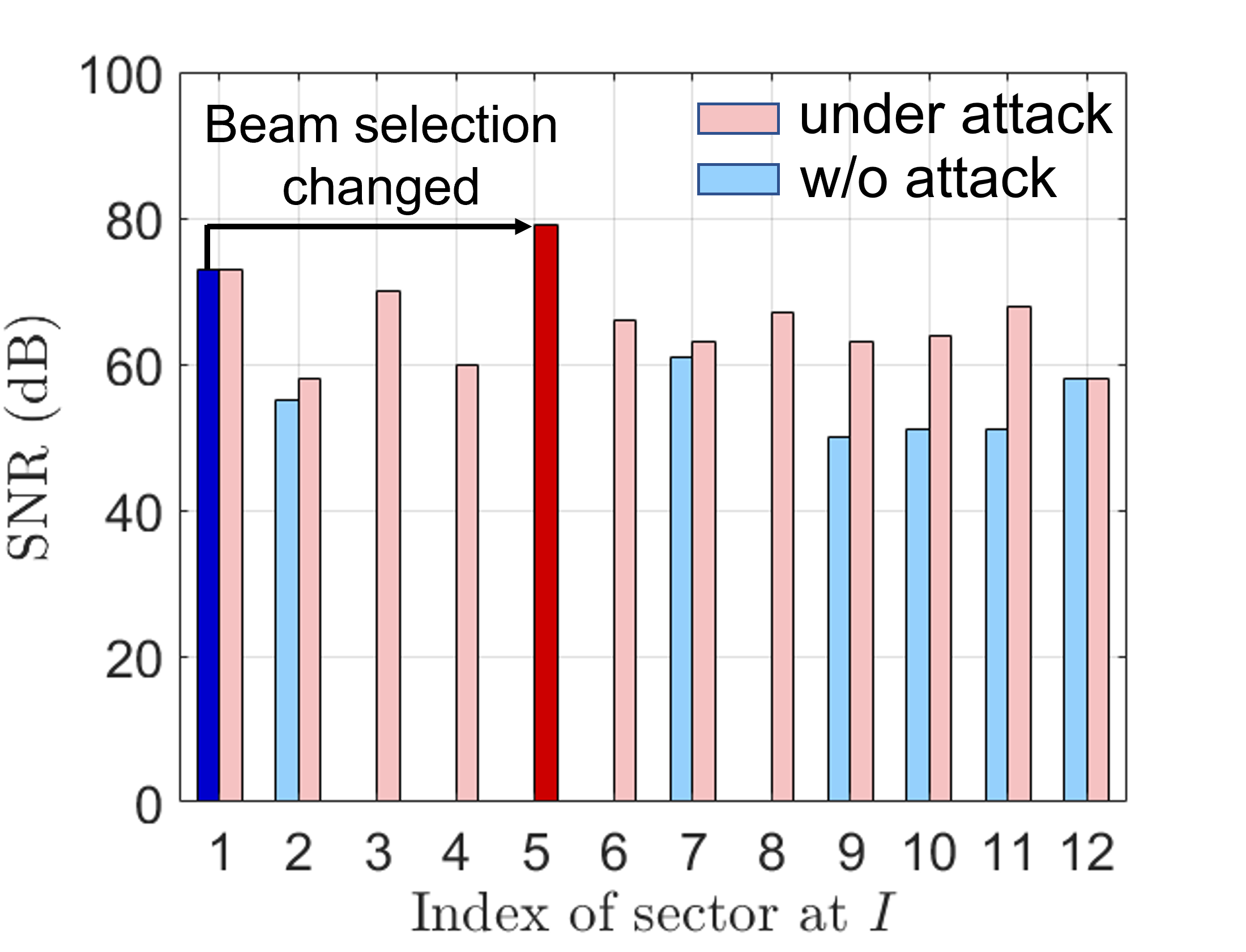} \\   
  (a)$\mathcal{M}$ relays in FP strategy & (b) $\mathcal{M}$ relays in FA strategy
\end{tabular}
\caption{Amplify-and-relay beam stealing attack on $I$ using $FA$ and $FP$ strategies. $I$ chooses sector 1 without $\mathcal{M}$, and sector 5 when $\mathcal{M}$ relays in either $FA$ or $FP$ strategies.}
\label{setup1 results on I}
\vspace{-0.15in}
\end{figure}

\subsection{Proof of Proposition 1}

\noindent
{\bf Proposition 1.} { \em A fixed power attack on sector $S_i$ is detected by the path loss test if the transmission power $P_i(T)(2)$  in the second round satisfies 
\[
P_i^T(2) < P_i^{T}(1) - \epsilon,
\]
where $\epsilon$ is a noise-determined parameter.}

\begin{proof}
Consider the transmission of an SSF at sector $S_i$ by $I$ during the first round with power $P_i^T(1)$. The path loss calculated by $R$ is $PL_i(1)(dB) = P_i^T(1)(dBm) - P_i^R(1)(dBm).$ Similarly, $R$ calculates the path loss of the second round as $PL_i(2)(dB) = P_i^T(2)(dBm) - P_i^R(2)(dBm).$ Taking the absolute difference yields
\begin{eqnarray}
|PL_i(1) - PL_i(2)| &=& PL_i(1) - PL_1(2) \notag\\
&=& P_i^T(1) - P_i^R(1) - P_i^T(2) + P_i^R(2) \notag \\
&>& P_i^T(1) - P_i^T(1) + \epsilon \notag \\
&>&  \epsilon.
\label{epsilon}
\end{eqnarray}
In \eqref{epsilon}, we used the fact that $PL_i(1) > PL_1(2)$ because $I$'s transmission power is reduced in the second round whereas Mallory transmits at fixed power thus leading to the calculation of lower path loss. Moreover, because Mallory's transmission is at fixed power, it follows that  $P_i^R(1) = P_i^R(2).$ Finally, by design,  $P_i^T(2) < P_i^{T}(1) - \epsilon.$
\end{proof}

\subsection{Proof of Proposition 2}

\noindent
 {\bf Proposition 2.} {\em Given the number of sectors $N_M$ heard at $\mathcal{M}$ and the number of sectors $K$  in $N_M$ with path loss smaller than the best path without attack under a fixed amplification strategy, the probability $P_{BS}$ of a successful beam stealing attack is given by 
 \[
    Pr_{BS}=\frac{1}{\binom{N_M}{x}^4} \cdot \left (\sum_{i=1}^{\min(x, K)}\binom{K}{i}\binom{N_M-K}{x-i}\right)^2.
\] }

\begin{proof}
   To prove Proposition 2, we first describe the adversary's strategy. During the first round, Mallory must choose which SSFs out of the $N_M$ ones that she receives should be amplified and forwarded to $R.$ Given the online nature of the attack, this decision is made in real time. Let Mallory decide to forward $x$ out of the $N_M$ sectors, without being able to identify them since the sector IDs are encrypted. The probability of choosing at least one sector that can defeat the optimal legitimate sector without attack (and therefore steal the beam) is given by the probability that any of the $K$ amplified sectors that defeat the optimal one are chosen. We compute this probability as
\begin{equation}
    Pr_1 = \sum_{i=1}^{\min\{x,K\}} \frac{\binom{K}{i} \cdot \binom{N_M-K}{x-i}}{\binom{N_M}{x}}.
    \label{p1}
\end{equation}
Equation~\eqref{p1} is the enumeration of all possible ways of choosing exactly $i$ sectors out of $K$, when Mallory chooses to amplify $x$ out of $N_M$ sectors at random. In the second round, $\mathcal{M}$ has to amplify exactly the same $x$ sectors to pass the path loss test. If any sector out of the $x$ amplified in round 1 is not amplified in the second round, the path loss calculated for that sector will differ by the amplification gain applied in the first round. 

However, the sweep sequence is randomized in the second round by the application of the pseudorandom permutation $\Pi.$ Moreover, Mallory cannot decrypt SSFs to recover the sector ID. Even if she could, the online nature of the attack does not permit enough time for frame decoding. Therefore, the probability of passing the path loss test is given by the probability of selecting the same set of $x$ SSFs, given by 
\(
    Pr_2=1/ \binom{N_M}{x}.
\)
Combining the events of choosing a sector that defeats the optimal one in the first round and choosing the same sectors to amplify in the second round yields Mallory's success probability in stealing the initiator's beam 
\begin{equation}
    \label{PM}
    Pr_{FA}=\frac{1}{\binom{N_M}{x}^2} \cdot \left (\sum_{i=1}^{\min(x, K)}\binom{K}{i}\binom{N_M-K}{x-i}\right)
\end{equation}
To successfully steal the beams in both directions, $\mathcal{M}$ must succeed in both the initiator and responder sector sweep processes. This probability is given by $Pr_{BS}=\left(Pr_{FA}\right)^2$.  
\end{proof}

\end{document}